\def\@IEEEsectpunct{\ \,}
\def\paragraph{\@startsection{paragraph}{4}{\z@}{1.5ex plus 1.5ex minus 0.5ex}%
{0ex}{\normalfont\normalsize\sffamily\bfseries}}
\newtheorem{lemma}{Lemma}%[section] %%    with section number.
\newtheorem{defi}{Definition}%[section]
\newtheorem{theorem}{Theorem}
\theoremstyle{remark}
\theoremstyle{problem}
\newcommand{\R}{\mathbb{R}}
\def \real    { \mathbb{R} }
\newcommand{\C}{\mathbb{C}}
\newcommand{\e}{\begin{equation}}
\newcommand{\ee}{\end{equation}}
\newcommand{\en}{\begin{equation*}}
\newcommand{\een}{\end{equation*}}
\newcommand{\eqn}{\begin{eqnarray}}
\newcommand{\eeqn}{\end{eqnarray}}
\newcommand{\bmat}{\begin{bmatrix}}
\newcommand{\emat}{\end{bmatrix}}
\DeclareMathAlphabet\mathbfcal{OMS}{cmsy}{b}{n}
\renewcommand{\P}[1]{\operatorname{\mathbb{P}}\left(#1\right)}
\newcommand{\E}{\operatorname{\mathbb{E}}}
\newcommand{\vct}[1]{\boldsymbol{#1}}
\newcommand{\mtx}[1]{\boldsymbol{#1}}
\newcommand{\<}{\langle}
\renewcommand{\>}{\rangle}
\newcommand{\trace}{\operatorname{trace}}
\newcommand{\rank}{\operatorname{rank}}
\newcommand{\dist}{\operatorname{dist}}
\newcommand{\set}[1]{\mathbb{#1}}
\DeclareMathOperator*{\argmin}{\text{arg~min}}
\DeclareMathOperator*{\argmax}{\text{arg~max}}
\newcommand{\wh}{\widehat}
\newcommand{\wt}{\widetilde}
\newcommand{\ol}{\overline}
\newcommand{\nqbit}{n}
\newcommand{\norm}[2]{\left\| #1 \right\|_{#2}}
\newcommand{\abs}[1]{\left| #1 \right|}
\newcommand{\bracket}[1]{\left( #1 \right)}
\newcommand{\parans}[1]{\left(#1\right)}
\newcommand{\innerprod}[2]{\left\langle #1,  #2 \right\rangle}
\newcommand{\calA}{\mathcal{A}}
\newcommand{\calB}{\mathcal{B}}
\newcommand{\calC}{\mathcal{C}}
\newcommand{\calI}{\mathcal{I}}
\newcommand{\calN}{\mathcal{N}}
\newcommand{\calP}{\mathcal{P}}
\newcommand{\calX}{\mathcal{X}}
\newcommand{\va}{\vct{a}}
\newcommand{\vb}{\vct{b}}
\newcommand{\ve}{\vct{e}}
\newcommand{\vh}{\vct{h}}
\newcommand{\vp}{\vct{p}}
\newcommand{\vr}{\vct{r}}
\newcommand{\vu}{\vct{u}}
\newcommand{\vv}{\vct{v}}
\newcommand{\vz}{\vct{z}}
\newcommand{\vphi}{\vct{\phi}}
\newcommand{\vpsi}{\vct{\psi}}
\newcommand{\veta}{\vct{\eta}}
\newcommand{\vxi}{\vct{\xi}}
\newcommand{\vrho}{\vct{\rho}}
\newcommand{\vzero}{\vct{0}}
\newcommand{\mA}{\mtx{A}}
\newcommand{\mB}{\mtx{B}}
\newcommand{\mC}{\mtx{C}}
\newcommand{\mU}{\mtx{U}}
\newcommand{\mV}{\mtx{V}}
\newcommand{\mX}{\mtx{X}}
\newcommand{\mSigma}{\mtx{\Sigma}}
\newcommand{\mId}{{\bf I}}
\newcommand{\setO}{\set{O}}
\newcommand{\setS}{\set{S}}
\newcommand{\setU}{\set{U}}
\newcommand{\setX}{\set{X}}
\newlength{\imgwidth}
\newcommand{\twoCol}[2]{\ifthenelse{\boolean{twoColVersion}} {#1} {#2} }
\DeclarePairedDelimiter\ket{\lvert}{\rangle}
\begin{document}

\title{Quantum State Tomography for\\ Matrix Product Density Operators}

\author{Zhen~Qin, Casey~Jameson, Zhexuan~Gong, Michael~B.~Wakin,~\IEEEmembership{Fellow,~IEEE,} and~Zhihui~Zhu,~\IEEEmembership{Member,~IEEE}
\thanks{Zhen Qin, and Zhihui Zhu are with the Department of Computer Science and Engineering, Ohio State University, Columbus, Ohio 43201, USA. (e-mail:\{qin.660,zhu.3440\}@osu.edu).}
\thanks{Casey Jameson and Zhexuan Gong are with the Department of Physics, Colorado School of Mines, Golden, Colorado 80401, USA. (e-mail:\{cwjameson,gong\}@mines.edu).}
\thanks{Michael B. Wakin is with the Department of Electrical Engineering, Colorado School of Mines, Golden, Colorado 80401, USA. (e-mail:mwakin@mines.edu).}}
%\thanks{Manuscript received April 19, 2005; revised September 17, 2014.}}

\maketitle

\begin{abstract}
The reconstruction of quantum states from experimental measurements, often achieved using quantum state tomography (QST), is crucial for the verification and benchmarking of quantum devices. However, performing QST for a generic unstructured quantum state requires an enormous number of state copies that grows \emph{exponentially} with the number of individual quanta in the system, even for the most optimal measurement settings. Fortunately, many physical quantum states, such as states generated by noisy, intermediate-scale quantum computers, are usually structured. In one dimension, such states are expected to be well approximated by matrix product operators (MPOs) with a matrix/bond dimension independent of the number of qubits, therefore enabling efficient state representation. Nevertheless, it is still unclear whether efficient QST can be performed for these states in general. In other words, there exist no rigorous bounds on the number of state copies required for reconstructing MPO states that scales polynomially with the number of qubits.

In this paper, we attempt to bridge this gap and establish theoretical guarantees for the stable recovery of MPOs using tools from compressive sensing and the theory of empirical processes. We begin by studying two types of random measurement settings: Gaussian measurements and Haar random projective measurements. We show that the information contained in an MPO with a constant bond dimension can be preserved using a number of random measurements that depends only \emph{linearly} on the number of qubits, assuming no statistical error of the measurements. We then study MPO-based QST with  Haar random projective measurements that can in principle be implemented on quantum computers. We prove that only a \emph{polynomial} number of state copies in the number of qubits is required to guarantee bounded recovery error of an MPO state. Remarkably, such recovery can be achieved by measuring the state in each random basis only once, despite the large statistical error associated with the outcome of each measurement. Our work may be generalized to accommodate random local or t-design measurements that are more practical to implement on current quantum computers. It may also facilitate the discovery of efficient QST methods for other structured quantum states.
\end{abstract}

\begin{IEEEkeywords}
Quantum state tomography (QST), matrix product operator (MPO), stable recovery, statistical error.
\end{IEEEkeywords}

%\IEEEpeerreviewmaketitle

\section{Introduction}
\IEEEPARstart{D}{riven} by advances in hardware and experimental techniques, the size of quantum computers has rapidly increased in recent years, with some of the most advanced processors having over 100 qubits \cite{preskill2018quantum,arute2019quantum,chow2021ibm}. As quantum computing and quantum simulation continue to advance, fully characterizing the large quantum many-body states produced by experimental quantum devices has become a significant challenge, as the number of parameters needed to characterize these states scales exponentially in the number of qubits in general. Nevertheless, for verification and benchmarking purposes, it is important to reconstruct such quantum states with an affordable amount of resources and with high accuracy.

The reconstruction of quantum states is typically achieved by a technique known as quantum state tomography (QST) \cite{vogel1989determination}. The goal of QST is to find a density matrix that describes the quantum state under interest with high accuracy.\footnote{See \Cref{sec:quantum-mechanics} for a review on the basic concepts needed to understand QST.} In a quantum system consisting of $n$ qudits (which are $d$-level quantum systems; qubits have $d=2$), the state can be expressed by a density matrix $\vrho$ of size $d^\nqbit\times d^\nqbit$. To find $\vrho$ of an experimental quantum state, in general we need to perform quantum measurements on many identical copies of the state. Any physical measurement on a quantum system is described by a Positive Operator-Valued Measure (POVM), which is a collection of positive semi-definite (PSD) matrices or operators $\{\mA_1,\ldots,\mA_K\}$ that sum to the identity operator. Each operator $\mA_k$ ($k=1,\dots, K$) in the POVM corresponds to a possible measurement outcome, and the probability of obtaining that outcome is given by $p_k = \trace(\mA_k \vrho)$. Thus, this \emph{probabilistic nature} of quantum measurements often requires
the state to be measured many (say $M$)  times with the same POVM to obtain an approximately accurate statistical estimate $\wh p_k$ of each $p_k$. Without considering the statistical error, $\{p_k\}$ can be viewed as $K$ linear measurements of the state~$\vrho$. Thus, adopting terminology from machine learning, we may refer to $\{p_k\}$ and their empirical estimates~$\{\wh p_k\}$ as population and empirical measurements of the state, respectively. From this viewpoint, QST can be viewed as a matrix sensing problem~\cite{recht2010guaranteed,candes2011tight}, but with a specific type of measurement operators, and with measurements that are inherently probabilistic. Furthermore, measurements on a quantum state are usually destructive and therefore we need many identical copies of the state for performing many measurements. Typically, an interesting quantum many-body state can be generated using a quantum computer or quantum simulator in a time scale ranging from microseconds to milliseconds for common hardware platforms. If the number of state copies required by QST scales exponentially in the number of qudits, then we cannot perform QST in practice for even a few tens of qubits.

Many different methods have been proposed for QST, including  maximum likelihood \cite{hradil1997quantum,vrehavcek2001iterative}, Bayesian \cite{blume2010optimal,granade2016practical,lukens2020practical}, region \cite{blume2012robust,faist2016practical}, and least squares \cite{kyrillidis2018provable,brandao2020fast} estimators, as well as machine learning techniques \cite{torlai2018neural,carleo2019machine,lohani2020machine}. For generic quantum states, the number of state copies needed for QST always grows exponentially with the number of qudits. A significant amount of work has been dedicated, however, to optimal QST methods for states represented by low-rank density matrices, which are physically common \cite{KuengACHA17,guctua2020fast,francca2021fast,voroninski2013quantum, haah2017sample}. Various measurement settings have been adopted in this context, including 4-design \cite{KuengACHA17}, Pauli \cite{liu2011universal,guctua2020fast}, Clifford \cite{francca2021fast}, Haar-random unitary \cite{voroninski2013quantum}, etc. It has been shown that as long as the measurements are performed on one state at a time, a minimum number of total state copies proportional to $d^n r^2/\epsilon^2$ is required to estimate a rank-$r$ density matrix with accuracy given by $\epsilon$ in the trace norm between the reconstructed density matrix and the true density matrix \cite{haah2017sample,francca2021fast}. This means that even for a rank-one density matrix (corresponding to a pure quantum state that can only be created by a noiseless quantum device), the number of state copies required for QST still scales as $2^n$ for $n$ qubits.

To achieve QST for current quantum computers at the scale of $\sim$100 qubits, the number of required state copies should scale only polynomially with the number of qubits $n$. This is possible only if the target state itself is structured in a way such that it has a compact representation with $\text{poly}(n)$ independent parameters. Fortunately, many physical quantum states indeed have such structure. Examples include ground states of most quantum systems with short-range interactions and states generated by such quantum systems in a finite amount of time \cite{eisert2010}. These states usually do not contain a large amount of quantum entanglement such that a compact representation via a matrix product state (MPS) or tensor network is often possible \cite{eisert2010}. A similar intuition applies to states generated by noisy quantum computers, where the noise could also limit the amount of quantum entanglement and thus enable an efficient state representation. In particular, it is widely believed that most states generated by a one-dimensional noisy quantum computer are well approximated by matrix product operators (MPOs) with a finite matrix dimension \cite{noh2020efficient}. Therefore, it becomes practically important to find efficient QST methods for states with an efficient MPO representation.

An MPO consists of $nd^2$ matrices each with dimension at most $\ol r\times \ol r$. The matrix dimension $\ol r$ is more often called the bond dimension, or the rank of the MPO (see \Cref{sec:MPO} for a detailed description of MPO). MPOs that represent physical quantum states are also called matrix product density operators (MPDO). An MPO is also mathematically equivalent to a tensor train (TT) used for compact representation of large tensors \cite{Oseledets11}. Assuming the bond dimension $\ol r$ is a constant, the MPO contains a number of parameters that scales only linearly with the number of qudits, and is thus a very efficient representation. Nevertheless,
such an efficient representation does not guarantee that the number of state copies required for QST is also small. In fact, for a general MPO state with bond dimension $\ol r$, there exists no known QST method that guarantees a required number of state copies that scales polynomially with the number of qubits \cite{baumgratz2013scalable,lidiak2022quantum}. This is in contrast to an MPS state (a pure state with a compact representation using $nd$ matrices), where such a guarantee exists for almost all physical MPS states \cite{cramer2010efficient,lanyon2017efficient,wang2020scalable,verstraete2004matrix,pirvu2010matrix,werner2016positive,jarkovsky2020efficient}.
Therefore, we ask the following main question:
\smallskip
\begin{tcolorbox}[colback=white,left=1mm,top=1mm,bottom=1mm,right=1mm,boxrule=.3pt]%sharp corners,
\centering {\bf Question}: Given a structured $n$-qudit quantum state represented by a constant bond dimension MPO, is it possible to reconstruct the state with guaranteed accuracy using only $\text{poly}(n)$ state copies?
\end{tcolorbox}

\subsection{Main results}

In this paper, we show that the answer to the above main question is yes, assuming that we can perform measurements of the given quantum state in Haar random bases. We note that this affirmative answer does not imply efficient QST for general MPO states since an exponentially large number (in $n$) of local quantum gates may be required to achieve such Haar random basis measurements with high accuracy. Nevertheless, our results paves the way to fully efficient QST methods as one may be able to reduce such number of required local quantum gates to polynomial in $n$ via unitary t-designs \cite{brandao2016local}.

Our particular focus on Haar random bases is motivated by the tremendous success of randomized measurements in compressive sensing for signals exhibiting low-dimensional structure such as sparse, low rank, or manifold structure~\cite{donoho2006compressed,candes2006robust,
candes2008introduction,recht2010guaranteed,eftekhari2015new,tropp2015convex}. The incorporation of randomness often enables nearly optimal upper bounds to be established for the sufficient number of measurements to recover structured signals. Moreover, randomized measurements have been recognized as a powerful tool that can efficiently transform quantum systems into classical representations, capturing numerous features of the original quantum state \cite{francca2021fast,huang2020predicting,yu2021experimental}; see \cite{elben2023randomized} for a review on this topic.

{\it The first main contribution of this paper---presented in \Cref{sec:Stable embedding MPO}---is that we investigate the number of population measurements (without statistical errors) to guarantee a stable embedding of MPOs.} In particular, we first establish the restricted isometry property (RIP, see {Definition} \ref{def:RIP}) for complex Gaussian measurements where each matrix element of $\mA_k$ is an independent and identically distributed (\emph{i.i.d.})\ standard complex Gaussian random variable for all $k = 1,\ldots, K$. Although these measurement operators are not PSD and may not be implementable in practical quantum experiments, this analysis sheds light on the optimal number of population measurements to ensure unique recovery of the MPO. We then study rank-one Gaussian measurement ensembles $\{\mA_k\}$ taking the form $\mA_k = \va_k \va_k^\dagger$ where $\va_k$ is randomly generated from a multivariate Gaussian distribution. As such rank-one measurements do not obey the RIP condition \cite{zhong2015efficient}, we instead establish a weaker version of an embedding guarantee. In order to do this, we use Mendelson's small ball method \cite{mendelson2015learning, koltchinskii2015bounding,tropp2015convex}, which has previously been used to establish stable embeddings for low-rank matrices under rank-one measurements~\cite{KuengACHA17}. For both generic Gaussian measurement ensembles and rank-one Gaussian measurement ensembles, we show that $\wt\Omega(nd^2\ol r^2)$ total linear measurements\footnote{The notation $\wt\Omega(\cdot)$ is defined in Section~\ref{sec:notation}.} are sufficient to achieve stable embeddings of MPOs with high probability. This result is nearly optimal as the MPO contains $nd^2\ol r^2$ independent parameters.

We then extend the results to Haar random projective measurements, where the measurement operators of each measurement is a collection of PSD matrices $\{\vphi_k\vphi_k^\dagger\}, k = 1,\ldots,d^n$ with $\mU = \begin{bmatrix}\vphi_1 & \cdots & \vphi_{d^n} \end{bmatrix}$ being a Haar-distributed random unitary matrix. As will be formally illustrated in \Cref{sec:quantum-mechanics}, such a measurement scheme is equivalent to first rotating the state with the unitary matrix $\mU$ and then performing measurements in the standard computational basis, which can be implemented (albeit not efficiently) on universal quantum computers~\cite{francca2021fast}. We establish similar stable embedding results for $Q = \wt \Omega(nd^2 \ol r^2)$ such Haar random bases, assuming zero statistical error.

Second, we study the recovery of an MPO from empirical quantum measurements (physical measurements containing statistical errors) and establish recovery bounds with respect to the number of state copies, using the above-mentioned Haar random projective measurements. {\it The second main contribution of this paper---presented in \Cref{sec:stable recovery}---is that we establish theoretical bounds on the accuracy of a particular estimator---the solution to a constrained least-squares optimization problem---for recovering an MPO.}
We summarize the results informally as follows.
\begin{theorem}[informal version of \Cref{Statistical Error_Haar_Measurement}]
\label{Statistical Error_Haar_Measurement_informal}
Given an $n$-qudit MPO state with bond dimension $\ol r$, randomly generate $Q$ Haar random projective measurement bases and measure the state in each basis $M$ times. For any $\epsilon>0$, assume $Q= \wt\Omega(nd^2\ol{r}^2)$ and the number of total state copies $QM=\wt\Omega(n^3 d^2\ol{r}^2/\epsilon^2)$. Then, with high probability, a properly constrained least-squares minimization with the empirical measurements stably recovers the ground-truth state with $\epsilon$-closeness in the Frobenius norm.
\label{thm:informal-main}\end{theorem}
Our result ensures a stable recovery of the ground-truth state with a total number of state copies $QM$ growing only polynomially in the number of qudits $n$. Compared to the requirement of $\Omega(d^n)$ state copies for estimating a general low-rank density matrix \cite{haah2017sample}, utilizing the MPO structure can significantly reduce the number of state copies (from $d^n$ to $n^3$). In addition, there is no other requirement on the number of state copies $M$ for each measurement basis. In other words, our result also provides theoretical support for the practical use of
single-shot measurements (setting $M = 1$, i.e., measuring the state in each basis once) that have been practically adopted in \cite{wang2020scalable,huang2020predicting}. On the other hand, our recovery guarantee builds upon the stable embedding results and is established in the Frobenius norm instead of the trace norm (i.e., nuclear norm). %that is widely used for low-rank states \cite{guctua2020fast,francca2021fast,voroninski2013quantum,haah2017sample}.
However, if the density matrix represented by the MPO has a low matrix rank, we can also establish recovery guarantee in the trace norm by using a strong bound between trace distance and Hilbert-Schmidt distance for low-rank states \cite{coles2019strong}. While this simple approach provides a vacuous bound when the state has high rank, we conjecture the result in \Cref{thm:informal-main} can be extended for the trace norm, but we leave this as future work. We provide a detailed discussion right after \Cref{Statistical Error_Haar_Measurement}.

We note that obtaining the constrained least squares estimate requires solving a nonconvex problem. To tackle this problem, we employ iterative hard thresholding (i.e., projected gradient descent) \cite{Rauhut17} and showcase its efficacy through numerical experiments. We do not provide a formal guarantee for the algorithm and leave its analysis for future work. %\textcolor{blue}{Furthermore, it is important to underscore that the IHT method does not inherently ensure the preservation of the physical characteristics of the density matrix, such as the positive semidefinite condition and unit trace. Therefore, in order to derive a physically valid density matrix, it becomes necessary to develop a new algorithm that incorporates post-processing. This is also one of our prospective areas of research. However, it is worth noting that although we have not designed a practical algorithm, \Cref{Statistical Error_Haar_Measurement_informal} still applies to physical density matrices.}

\subsection{Related work involving tensor train decompositions}

Having mentioned that the MPO model is equivalent to a tensor train (TT) decomposition, we discuss some related work on sampling and recovery of tensors. The work \cite{Rauhut17} established the first RIP bound for structured tensors (including the TT format) with real generic subgaussian measurements. Our proof of the RIP for complex Gaussian measurements uses the same technique as \cite{Rauhut17}; see the discussion following \Cref{thm:SubgaussianRIP-appendix} for more information.
The work \cite{cai2022provable} studied the tensor completion problem with random samples of a TT format tensor, but the result requires an exponentially large number of samples. Another line of work \cite{OseledetsLAA10,savostyanov2014quasioptimality,osinsky2019tensor} extended matrix \emph{cross approximation} techniques \cite{goreinov2001maximal,hamm2020perspectives,cai2021robust} for computing a TT format from selected subtensors. The work \cite{qin2022error} has provided accuracy guarantees in terms of the entire tensor for TT cross approximation, and the work \cite{lidiak2022quantum} applied TT cross approximation for reconstructing MPOs by only measuring local operators. Numerical simulation results demonstrate the effectiveness of this technique, but no explicit theoretical bound on the number of state copies is provided \cite{lidiak2022quantum}. While the algorithm is not the focus of this work, we note that there are many proposed algorithms for estimating TT format tensors from linear measurements \cite{bengua2017efficient,imaizumi2017tensor, wang2016tensor, Rauhut17, rauhut2015tensor,budzinskiy2021tensor,wang2019tensor,cai2022provable,qin2024guaranteed}. These include algorithms based on convex relaxation \cite{bengua2017efficient,imaizumi2017tensor}, alternating minimization \cite{wang2016tensor}, projected gradient descent (also known as iterative hard thresholding (IHT))  \cite{Rauhut17}, and Riemannian methods \cite{budzinskiy2021tensor,wang2019tensor,cai2022provable,qin2024guaranteed}.

\subsection{Notation}
\label{sec:notation}

We use calligraphic letters (e.g., $\mathcal{X}$) to denote tensors,  bold capital letters (e.g., $\bm{X}$) to denote matrices,  bold lowercase letters (e.g., $\bm{x}$) to denote column vectors, and italic letters (e.g., $x$) to denote scalar quantities. Elements of matrices and tensors are denoted in parentheses. For example, $\calX(i_1, i_2, i_3)$ denotes the element in position
$(i_1, i_2, i_3)$ of the order-3 tensor $\calX$. The calligraphic letter $\calA$ is reserved for the linear measurement map.  For a positive integer $K$, $[K]$ denotes the set $\{1,\dots, K \}$. The superscripts $(\cdot)^\top$ and $(\cdot)^\dagger$ denote the transpose and Hermitian transpose, respectively\footnote{As is conventional in the quantum physics literature (but not in information theory and signal processing), we use $(\cdot)^\dagger$ to denote the Hermitian transpose.}. For two matrices $\mA,\mB$ of the same size, $\innerprod{\mA}{\mB} = \trace(\mA^\dagger\mB)$ denotes the inner product between them.
$\|\mA\|$ (or $\|\mA\|_{2\to 2}$) and $\|\mA\|_F$ respectively represent the spectral norm and Frobenius norm of $\mA$.
For a vector $\va$ of size $N\times 1$, its $l_n$-norm is defined as $||\va||_n=(\sum_{m=1}^{N}|a_m|^n)^\frac{1}{n}$.
For two positive quantities $a,b\in \real$, the inequality $b\lesssim a$ or $b = O(a)$ means $b\leq c a$ for some universal constant $c$; likewise, $b\gtrsim a$ or $b = \Omega(a)$ represents $b\ge ca$ for some universal constant $c$. We define $\widetilde{\Omega}$ as the function obtained by removing the logarithmic factors from~$\Omega$.

\section{Basic concepts for Quantum State Tomography}
\label{sec:quantum-mechanics}

In this section, we review basic concepts needed to understand quantum state tomography, since these concepts may be unfamiliar to researchers in information theory and signal processing. These concepts are commonly used in the field of quantum information \cite{nielsen2002quantum}, and they can be understood with the knowledge of linear algebra and probability theory.

\subsection{States and density operators}
In quantum physics, the state of an isolated quantum system is fully described by a state vector $\ket{\psi}$ (using the Dirac notation), which represents a unit-length vector in a complex vector space known as the Hilbert space. For example, the state of the simplest quantum system, known as a \emph{qubit}, is represented by a vector in a two-dimensional Hilbert space. One can choose two orthonormal basis vectors for this Hilbert space denoted by $|0\rangle$ and $|1\rangle$, which typically represent two distinct physical states of a qubit (for example, the lowest and second-lowest energy states of an atom). An arbitrary state of the qubit can then be written as  $\ket{\psi} = a\ket{0} + b\ket{1}$, where $a$ and $b$ are complex numbers satisfying $|a|^2 + |b|^2 = 1$, which ensures that $\ket{\psi}$ is unit length. The state vector $\ket{\psi}$ can thus be equivalently represented by a $2\times1$ vector
\[
\vpsi := \begin{bmatrix}a\\ b\end{bmatrix} \in \C^2.
\]

A \emph{qudit} is a generalization of the idea of a qubit to a $d$-level system or $d$-dimensional Hilbert space, where each state vector can be equivalently represented by a unit-length vector in $\C^d$. While most quantum computers process information using qubits just as classical computers use bits, we use qudits in this paper for a more general framework, as they are commonly used for quantum simulation as spins and may be used for quantum computing as well.

A quantum computer or simulator usually consists of many qudits. For such quantum many-body systems, which are the focus of this paper, the full state space is the tensor product of the state spaces of each qudit. Specifically, for a composite system of $\nqbit$ qudits, each state vector $\vpsi$ belongs to $\C^{d^\nqbit}$ and has unit length.

Until now we have considered quantum systems whose state can be fully described by a state vector $\vpsi$. Such a quantum system is said to be in a \emph{pure state}.  More broadly, though, a quantum system can be in one of a number of states $\vpsi_i$ with respective probabilities $\alpha_i$. In this case, we say the quantum system is in a \emph{mixed state}, which may be described as $\{\alpha_i, \vpsi_i  \}$ where $0\le \alpha_i\le 1$ are the probabilities with $\sum_i \alpha_i = 1$. A mixed state naturally arises due to the interactions (which create quantum entanglement) between the quantum system and its environment, such that the state of the system becomes indeterminate on its own.

A quantum system in a mixed state is described by a \emph{density operator} or \emph{density matrix}.\footnote{Formally speaking, a density matrix is a representation of a density operator in a given choice of basis in the underlying Hilbert space. In this paper, we always choose the standard computational basis for the qudits denoted by $\{|0\rangle, |1\rangle, \cdots, |d-1\rangle\}$. Therefore, we use the two terms density matrix and density operator interchangeably.}  The density operator of a pure state $\vpsi\in \C^{d^\nqbit}$ is given by
\[
\vrho = \vpsi \vpsi^\dagger \in \C^{d^\nqbit \times d^\nqbit}.
\]
For a mixed state, the density operator can be written as
\[
\vrho = \sum_i \alpha_i \vpsi_i \vpsi_i^\dagger \in \C^{d^\nqbit \times d^\nqbit}.
\]
Thus, a density operator with rank equal to one corresponds to a pure state; otherwise it corresponds to a mixed state. In all cases, we have that $(i)$~the density operator $\vrho \succeq \vzero$ is a PSD matrix, and $(ii)$~$\trace(\vrho) = 1$.

\subsection{Quantum measurements}
\label{sec:POVM measurements}
Quantum state tomography aims to reconstruct or estimate the density operator $\vrho$ of a quantum system using measurements on multiple copies of the same quantum state. The most general measurements one can physically perform on a quantum system are described by Positive Operator Valued Measures (POVMs) \cite{nielsen2002quantum}, as explained below.

\begin{defi} [POVM \cite{nielsen2002quantum}] A Positive Operator Valued Measure (POVM) is a set of PSD matrices $\{\mA_1,\ldots,\mA_K \}$ such that
\begin{eqnarray}
\label{The defi of POVM 1}
\sum_{k=1}^K \mA_k = \mId.
\end{eqnarray}
Each POVM element $\mA_k$ is associated with a possible outcome of a quantum measurement, and the probability $p_k$ of detecting the $k$-th outcome when measuring the density operator $\vrho$ is given by
\begin{eqnarray}
\label{The defi of POVM 2}
p_k = \innerprod{\mA_k}{\vrho},
\end{eqnarray}
where $\sum_{k=1}^Kp_k=1$ due to \eqref{The defi of POVM 1} and the fact that $\trace(\vrho) = 1$. We often repeat the measurement process $M$ times and take the average of the statistically independent outcomes to generate the empirical probabilities
\begin{equation}
\wh p_{k} = \frac{f_k}{M}, \  k \in[K]:=\{1,\ldots,K\},
\label{eq:empirical-prob}\end{equation}
where $f_k$ denotes the number of times the $k$-th outcome is observed. In information theory and signal processing communities, we call $\{p_k\}$ and $\{\wh p_k\}$ the population and empirical (linear) measurements, respectively.
\label{def:POVM}\end{defi}

Collectively, the random variables $f_1,\ldots,f_K$ are characterized by the multinomial distribution $\operatorname{Multinomial}(M, \vp)$ \cite{severini2005elements} with parameters $M$ and $\vp = \begin{bmatrix} p_1 & \cdots & p_K\end{bmatrix}^\top$, where $p_k$ is defined in \eqref{The defi of POVM 2}. It follows that the empirical probability $\wh p_k$ in \eqref{eq:empirical-prob} is an unbiased estimator of the probability $p_k$.
One can bound the estimation error $\abs{\wh p_k - p_k}$ by $O(1/\sqrt{M})$ with high probability via concentration inequalities. For example, the Dvoretzky-Kiefer-Wolfowith (DKW) theorem \cite{dvoretzky1956asymptotic,kosorok2008introduction} ensures that the empirical probability $\wh p_{k}$ is close to $p_k$ for all $k$ simultaneously when $M$ is sufficiently large. In particular, for any $\epsilon>0$,
\e
\P{ \max_{k}  \abs{p_k - \wh p_{k}} \ge \epsilon} \le 2e^{-\frac{1}{2} M \epsilon^2}.
\label{eq:dfw2}
\ee

\paragraph*{Haar random projective measurement}

A particular type of POVM that is commonly used in quantum experiments is the so-called projective measurement. A projective measurement is a rank-one POVM of the form $\{ \mA_k=\vphi_k \vphi_k^\dagger \}$ with $\sum_{k=1}^K \vphi_k \vphi_k^\dagger = \mId$ and $K = d^n$. We also require $\{\vphi_k$\} to be unit length and orthogonal to each other (i.e. orthonormal). Therefore, each $\mA_k$ is a projection operator onto the corresponding basis vector $\vphi_k$. The probability in \eqref{The defi of POVM 2} can be rewritten as
\e
p_k = \innerprod{\mA_k}{\vrho} = \innerprod{\vphi_k \vphi_k^\dagger}{\vrho} = \vphi_k^\dagger \vrho \vphi_k.
\label{Probability of rank-one POVM}\ee

In practice, measurements on a quantum computer or quantum simulator are usually projective measurements using some physically convenient basis such as the computational basis $\{|0\rangle,|1\rangle\}$ for a qubit. If we want to perform a projective measurement defined by an arbitrary basis $\{\vphi_k\}$, we can introduce a unitary matrix $\mU = \begin{bmatrix} \vphi_1 & \cdots & \vphi_K \end{bmatrix}\in\C^{d^n\times d^n}$ and apply $\mU$ on the state $\vrho$ before performing the projective measurement with a physically convenient basis (denoted by $\{\ve_k\}$ below) where $\mU \ve_k = \vphi_k$ for any $k$. Mathematically, this process is written as
\e
p_k = \ve_k^\dagger \bracket{\mU^{\dagger} \vrho \mU} \ve_k,
\label{Probability of rank-one orth POVM}
\ee

Here we are particularly interested in performing a projective measurement in a random basis, where the unitary matrix $\mU$ is randomly drawn according to the Haar measure. A universal quantum computer can approximately generate such random unitary to any given precision, although the number of single and two-qubit quantum gates required in general scale exponentially with the number of qubits  \cite{knill1995approximation}. We call such projective measurement a {\em Haar random  projective measurement}. Such measurement has been commonly used in optimal quantum state tomography protocols \cite{haah2017sample}, as it typically provides the most unbiased information of an unknown quantum state.

\paragraph*{Multiple POVMs} A projective measurement in a specific basis is insufficient to recover a general quantum state $\vrho$ even if we repeat the measurement infinitely many times, since the probabilities $\{p_k\}$ in Eq.\,\eqref{Probability of rank-one POVM} only provide us the diagonal elements of $\vrho$ in the basis formed by $\{\vphi_k\}$. In most QST protocols, one performs projective measurements in different bases, or more generally, multiple POVMs, in order to gain full information of the quantum state. In the following, we denote the number of different POVMs (or projective measurement bases) by $Q$, and the measurement operators for the $i$-th POVM by $\{\mA_{i,1},\ldots,\mA_{i,K} \}$. For simplicity, we have assumed that each POVM contains the same number of measurement operators (which is always true for projective measurements), although in general this number may vary between POVMs.

We use each POVM to measure a state $M$ times to obtain the empirical measurements as described in {Definition} \ref{def:POVM}. Thus in total we need $QM$ copies of the quantum state we want to perform QST. For a generic, unstructured quantum state, the value of $QM$ scales exponentially with the number of qudits, making QST impractical for large quantum systems. This is true even if the state can be represented by a low-rank density matrix \cite{haah2017sample,francca2021fast}. Efficient QST may be possible if we have a structured quantum state that can be represented efficiently, i.e. by a number of independent parameters polynomial in the number of qudits. In the next subsection, we introduce a particular type of states that can be efficiently represented by the so-called matrix product operators \cite{jarkovsky2020efficient}.

\begin{figure*}[t]
\centering
\includegraphics[width=17cm, keepaspectratio]%
{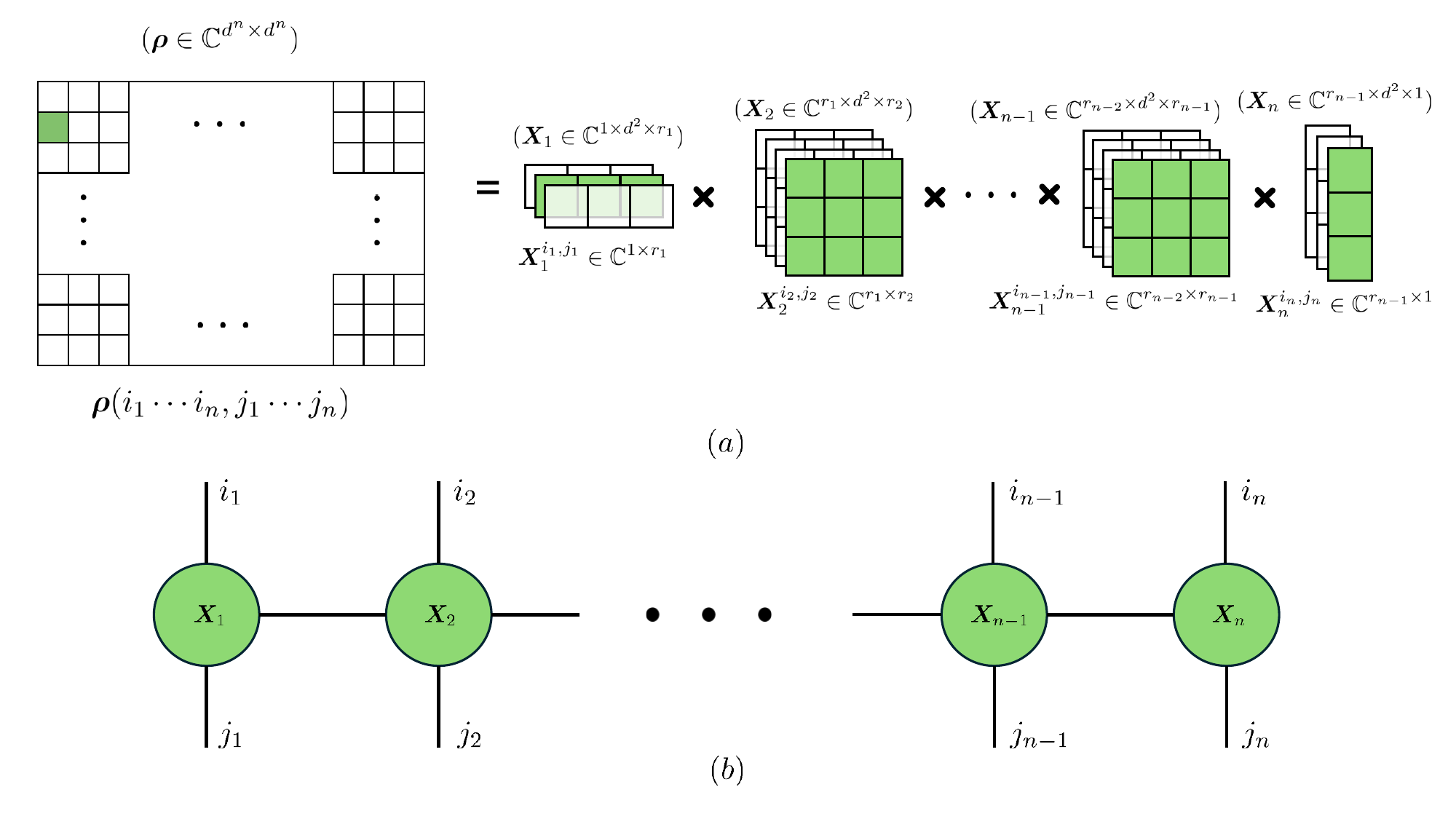}
\vspace{-0cm}
\caption{Illustration of the MPO in \eqref{DefOfMPO} from two perspectives: (a) each entry of the density matrix can be represented as products of $n$ matrices, where green represents one entry and the corresponding $n$ matrices, and (b) each element of the density matrix is illustrated in a diagrammatic form, where the line connecting two circles signifies the tensor contraction operation \cite{cichocki2014tensor}, and unconnected line segments denote indices.}
\label{TheMPOfig}
\end{figure*}

\subsection{Matrix Product Operator (MPO)}
\label{sec:MPO}

For a density matrix $\vrho\in\C^{d^\nqbit\times d^ \nqbit}$ corresponding to an $\nqbit$-qudit quantum system, we use a single index-array $i_1\cdots i_\nqbit$ ($j_1\cdots j_\nqbit$) to specify the indices of rows (columns), where $i_1,\ldots,i_\nqbit\in[d]$.\footnote{ Specifically, $i_1\cdots i_n$ represents the $(i_1+\sum_{\ell=2}^nd^{\ell-1}(i_\ell-1))$-th row.
} Then we say $\vrho$ is an MPO if we can express its $(i_1\cdots i_\nqbit,j_1\cdots j_\nqbit)$-element as the following matrix product~\cite{werner2016positive}
\begin{eqnarray}
\label{DefOfMPO}
\vrho(i_1 \cdots i_\nqbit, j_1 \cdots j_\nqbit)  =  \mX_1^{i_1,j_1} \mX_2^{i_2,j_2} \cdots \mX_\nqbit^{i_\nqbit,j_\nqbit},
\end{eqnarray}
where $\mX_\ell^{i_\ell,j_\ell}\in \C^{r_{\ell-1}\times r_\ell}$ with $r_0 = r_\nqbit = 1$. See Figure~\ref{TheMPOfig} for an illustration. The dimensions $\vr = (r_1,\ldots,r_{n-1})$ are often called the {\it bond dimensions}\footnote{It is also common to simply call $\overline{r} = \max\{r_1,\ldots,r_{n-1}\}$ the bond dimension.} of the MPO in quantum physics, though we may also call them the {\it MPO ranks}. These dimensions can indeed be viewed as the ranks of certain matrices that are obtained by reshaping the density matrix $\vrho$ in various ways.

\paragraph*{Connection to the tensor train (TT) format}
An MPO is equivalent to a tensor train (TT) used to describe high-dimensional tensors \cite{Oseledets11}. To see this, we first reshape $\vrho$ into an $n$-th order tensor $\calX$ of size $d^2\times d^2\times \cdots \times d^2$ by mapping each pair $(i_\ell,j_\ell)$ into a single index $s_\ell=i_\ell+d(j_\ell-1),\ell=1,\ldots,n$
such that the elements of $\calX$ are given by
\begin{eqnarray}
\label{DefinitionOfMPO}
\calX(s_1,\dots,s_n)
=\vrho(i_1 \cdots i_\nqbit, j_1\cdots j_\nqbit).
\end{eqnarray}
Note that $\calX$ is just a reshaping of $\vrho$ and that both objects contain exactly the same entries. Then, according to \eqref{DefOfMPO}, the $(s_1,\ldots,s_n)$-th element of $\calX$ can also be represented as a matrix product
\begin{eqnarray}
\label{DefTT}
\calX(s_1, \ldots, s_\nqbit)  =  \mX_1^{s_1} \mX_2^{s_2} \cdots \mX_\nqbit^{s_\nqbit},
\end{eqnarray}
where with abuse of notation we denote $\mX_\ell^{s_\ell} = \mX_\ell^{i_\ell,j_\ell}$. The decomposition in \eqref{DefTT} is known as the TT decomposition and has been widely studied in the literature~\cite{Oseledets11,holtz2012manifolds,rauhut2017low,yuan2019high,zhao2016tensor,yuan2019tensor}.

\paragraph*{Canonical form} When $n = 2$, the decomposition \eqref{DefTT} is equivalent to the standard matrix factorization of the form $\mA = \mB\mC$, where $\mA\in\R^{d^2\times d^2}, \mB\in\R^{d^2\times r},\mC\in\R^{r\times d^2}$, the rows of $\mB$ correspond to $\mX_1^{s_1}$ and the columns of $\mC$ correspond to $\mX_2^{s_2}$. There exist infinitely many possible choices of $(\mB,\mC)$ such that $\mB\mC = \mA$, but all of them require $r \ge \rank(\mA)$. Among all these possible factorizations, if $\rank(\mB) = \rank(\mC) = r$, then $r = \rank(\mB\mC) = \rank(\mA)$, implying that this is the \emph{minimal} $r$ allowed for the factorization $\mA = \mB\mC$. Moreover, one can always construct a factorization (say by the singular value decomposition) such that $\mB$ is orthogonal with $\mB^\top \mB = \mId_r$, or $\mC$ is orthogonal  with $\mC\mC^\top = \mId_r$.

Likewise, the decomposition of the tensor $\calX$ into the form of \eqref{DefTT} is generally not unique: not only are the factors $\{\mX_{\ell}^{i_\ell,j_\ell}\}$ not unique, but also the dimensions of these factors can vary. To introduce the factorization with the smallest possible dimensions $\vr = (r_1,\ldots,r_{n-1})$, for convenience, for each $\ell$, we put $\mX_\ell = \{\mX_{\ell}^{i_\ell,j_\ell}\}_{i_\ell,j_\ell}$ together into the following two forms
\[
L(\mX_\ell)=\begin{bmatrix}\mX_{\ell}^{1,1} \\ \vdots\\  \mX_{\ell}^{d,d} \end{bmatrix}, \ R(\mX_\ell)=\begin{bmatrix}\mX_{\ell}^{1,1} &  \cdots &  \mX_{\ell}^{d,d} \end{bmatrix},
\]
where $L(\mX_\ell)$ and $R(\mX_\ell)$ are often called the left unfolding and right unfolding of $\mX_\ell$, respectively, if we view $\mX_\ell$ as a tensor. We say the decomposition \eqref{DefTT} is \emph{minimal} if the rank of the left unfolding matrix $L(\mX_\ell)$ is $r_\ell$ and the rank of the right unfolding matrix $R(\mX_\ell)$ is $r_{\ell-1}$. The dimensions $\vr = (r_1,\dots, r_{n-1})$ of such a minimal decomposition are called the \emph{TT ranks} of $\calX$. According to \cite{holtz2012manifolds}, there is exactly one set of ranks $\vr$ that $\calX$ admits a minimal TT decomposition. Moreover, in this case, $r_\ell$ equals the rank of the $\ell$-th unfolding matrix $\mX^{\< \ell\>}\in\C^{  d^{2\ell} \times d^{2\nqbit-2\ell} }$ of the tensor $\calX$, where the $(s_1\cdots s_\ell, s_{\ell+1}\cdots s_{\nqbit})$-th element of $\mX^{\< \ell\>}$ is given by $\mX^{\< \ell\>}(s_1\cdots s_\ell, s_{\ell+1}\cdots s_{\nqbit}) = \calX(s_1,\dots, s_{\nqbit})$. This can also serve as an alternative way to define the TT rank.
As for the matrix case, for any MPO $\vrho$ of the form \eqref{DefOfMPO},
there always exists a factorization such that $L(\mX_\ell)$ are unitary matrices for all $\ell = 1,\ldots,n-1$; that is
\begin{eqnarray}
\label{eq:left-canonical}
L(\mX_\ell)^\dagger L(\mX_\ell) &\!\!\!\!=\!\!\!\!& \sum_{i_\ell,j_\ell}\parans{\mX_\ell^{i_\ell,j_\ell}}^\dagger \mX_\ell^{i_\ell,j_\ell} = \mId_{r_\ell}, \nonumber\\
&\!\!\!\!\!\!\!\!&\ell = 1,\ldots,n-1,
\end{eqnarray}
which is called the left-canonical form\footnote{The right-canonical form refers to the case where $R(\mX_\ell)$ are unitary matrices for all $\ell = 2,\ldots,n$.} \cite{perez2007matrix}. According to \cite[Theorem 1]{holtz2012manifolds}, such a  canonical form is unique up to the insertion of orthogonal matrices between the factors. Thus, we will denote by $\setX_{\ol{r}}$ the set of MPOs with maximum MPO rank equal to $\ol r$:
\begin{align}
\label{SetOfMPO}
&\setX_{\ol{r}}= \Big\{\vrho\in\C^{d^n\times d^n}: \ \vrho = \vrho^\dagger, \vrho(i_1 \cdots i_\nqbit, j_1\cdots j_\nqbit)= \nonumber\\
&\hspace{0.2cm}\Pi_{\ell=1}^{\nqbit}\mX_\ell^{i_\ell,j_\ell}, \mX_\ell^{i_\ell,j_\ell}\in\C^{r_{\ell-1}\times r_\ell},
\sum_{i_\ell,j_\ell} \parans{\mX_\ell^{i_\ell,j_\ell}}^\dagger \mX_\ell^{i_\ell,j_\ell} = \mId_{r_\ell},\nonumber\\
&\hspace{0.2cm}\ell=1,\dots,n-1, r_0=r_n=1, \ol r = \max\{r_\ell\}\Big\}.
\end{align}
Note that the set \eqref{SetOfMPO} contains not only PSD matrices but also non-PSD matrices.
In \Cref{sec:Stable embedding MPO}, we establish stable embedding guarantees that hold for measurements of any $\vrho\in \setX_{\ol{r}}$; these results are then used on our analysis of a properly constrained least-squares minimization in \Cref{sec:stable recovery}.
While the set $\setX_{\ol{r}}$ does include some non-physical matrices, we stress that it does contain {\em all} physical MPOs (with maximum MPO rank equal to $\ol r$). So all physical MPOs are covered by our stable embedding guarantees.
We also note that one could consider imposing additional structure, such as in~\cite[eq. (3)]{verstraete2004matrix},  on the factors $\{\mX_{\ell}^{i_\ell,j_\ell}\}$ to ensure $\vrho$ is PSD. However, the condition in \cite[eq. (3)]{verstraete2004matrix} is only sufficient rather than necessary for ensuring $\vrho$ is PSD, and adding the PSD and trace constraints does not significantly reduce the number of degrees of freedom of elements in the set $\setX_{\ol{r}}$.

\paragraph*{Efficiency of MPO representation} Due to the curse of dimensionality, the number of elements in the density matrix $\vrho$ grows exponentially in the number of qudits $n$. In contrast, the MPO form \eqref{DefOfMPO} can represent $\vrho$ using only $O(n d^2 \overline r^2)$ elements, where $\ol r = \max\{r_1,\ldots,r_{n-1}\}$. This makes the MPO form remarkably effective in combatting the curse of dimensionality as its number of parameters scales only linearly in terms of $n$. The concise representation provided by MPO is remarkably useful in QST since it may allow us to reconstruct a quantum state with both experimental and computational resources that are only \emph{polynomial} rather than \emph{exponential} in the number of qudits~\cite{verstraete2006matrix,verstraete2008matrix,schollwock2011density,ohliger2013efficient}. Beyond applications in quantum information processing, the equivalent form of TT decomposition mentioned above has also been widely used for image compression \cite{latorre2005image,bengua2017efficient},
analyzing theoretical properties of deep networks \cite{khrulkov2018expressive}, network compression (or tensor networks)~\cite{stoudenmire2016supervised,novikov2015tensorizing,yang2017tensor,tjandra2017compressing,yu2017long,ma2019tensorized}, recommendation systems \cite{frolov2017tensor}, probabilistic model estimation \cite{novikov2021tensor}, and learning of Hidden Markov Models \cite{kuznetsov2019tensor} to mention a few usages.\footnote{See \cite{novikov2020tensor} for a python library for TT decomposition.}

\paragraph*{Linear combination of MPOs} In linear algebra, the (matrix) rank of the sum of two matrices is less than or equal to the sum of the (matrix) ranks of these matrices. This also holds for MPO ranks. In particular, for any two MPOs $\wt \vrho, \wh \vrho \in\C^{d^n \times d^n}$ of the form \eqref{DefOfMPO} with factors $\{\wt \mX^{i_\ell,j_\ell}\in \C^{\wt r_{\ell-1}\times \wt r_\ell}\}$ and $\{\wh \mX^{i_\ell,j_\ell}\in \C^{\wh r_{\ell-1}\times \wh r_\ell}\}$, respectively, the elements of their summation $\vrho = \wt \vrho + \wh \vrho$ can be expressed by
\begin{eqnarray}
\label{eq:sum of MPOs}
\vrho(i_1 \cdots i_\nqbit, j_1 \cdots j_\nqbit) &\!\!\!\!=\!\!\!\!& \begin{bmatrix}\wt\mX_1^{i_1,j_1} & \wh\mX_1^{i_1,j_1} \end{bmatrix}
\begin{bmatrix}\wt\mX_2^{i_2,j_2}& {\bm 0} \\ {\bm 0} & \wh\mX_2^{i_2,j_2} \end{bmatrix}\nonumber\\
&\!\!\!\!\!\!\!\!&\hspace{-1.5cm}\cdots \begin{bmatrix}\wt\mX_{n-1}^{i_{n-1},j_{n-1}}& {\bm 0} \\ {\bm 0} & \wh\mX_{n-1}^{i_{n-1},j_{n-1}} \end{bmatrix} \begin{bmatrix}\wt\mX_n^{i_n,j_n} \\ \wh\mX_n^{i_n,j_n} \end{bmatrix},
\end{eqnarray}
implying that the MPO ranks $r_\ell$ of $\vrho$ satisfy $r_\ell \le \wh r_\ell + \wt r_\ell$ for all $\ell = 1,\ldots, n-1$.

\section{Stable Embeddings of  Matrix Product Operators}
\label{sec:Stable embedding MPO}

\subsection{Background}
Measurements must satisfy certain properties to enable recovery of quantum states. One desirable property known as a \emph{stable embedding} has been widely studied and popularized in the compressive sensing literature~\cite{donoho2006compressed,candes2006robust,
candes2008introduction,recht2010guaranteed,eftekhari2015new}. In this section, we will study the embedding of MPOs from various measurement types including quantum measurements. Towards that goal, we will first consider population measurements, and in the next section, we will study stable recovery with empirical measurements.

As described in \Cref{sec:POVM measurements}, the population measurements from one POVM are linear measurements that can be described through a linear map $\calA:  \C^{d^n\times d^n} \rightarrow \R^K$ of the form
\e
\calA(\vrho)= \begin{bmatrix}
          \< \mA_1, \vrho  \> \\
          \vdots \\
          \< \mA_K, \vrho  \>
        \end{bmatrix}.
\label{eq:linear map A}\ee
According to the discussion in \Cref{sec:POVM measurements}, the choice of $\{\mA_k\}$ can vary.  Our goal is to study the properties of the associated measurement operators.

Our study of stable embeddings of MPOs from population measurements concerns the quantity $\|\calA(\vrho)\|_2^2$. As described in Section~\ref{sec: RIP gaussian}, a favorable situation is when
$\calA$ satisfies the restricted isometry property (RIP), where $\|\calA(\vrho)\|_2^2$ is guaranteed to be proportional to $\|\vrho\|_F^2$ for any MPO $\vrho$. In some cases, only a lower bound on this proportionality can be established. %which still ensures that distinct MPOs will generate distinct measurements.
In particular, in Section~\ref{sec:rankOnePOVMpop}, we establish a guarantee of the form
\e\|\calA(\vrho)\|_2^2 \geq C_{d,n,K} \|\vrho\|_F^2,
\label{eq:embedding}\ee
where $C_{d,n,K}$ is a positive constant depending on $d,n,K$, and the guarantee holds uniformly for all MPOs up to some maximum rank. When this holds, then for any two MPOs $\vrho_1$ and $\vrho_2$, noting that $\vrho_1 - \vrho_2$ is also an MPO according to \eqref{eq:sum of MPOs}, we have
\[
\norm{\calA(\vrho_1) - \calA(\vrho_2)}{2}^2 \geq C_{d,n,K} \norm{\vrho_1 - \vrho_2}{F}^2,
\]
which ensures distinct measurements (i.e., $\calA(\vrho_1)\neq \calA(\vrho_2)$) as long as $\vrho_1 \neq \vrho_2$.

In compressive sensing of sparse signals and low-rank matrices~\cite{donoho2006compressed,candes2006robust,
candes2008introduction,recht2010guaranteed,eftekhari2015new}, uniform stable embeddings of all possible signals of interest can often be achieved by choosing the measurement operators randomly from a certain distribution. Thus,
random matrices and projections  have played a central role in the analysis of the associated inverse problems~\cite{tropp2015convex}. In this section, we will study the embeddings of MPOs from linear measurements where the measurement matrices $\{\mA_k\}$ are generated from certain random distributions. Specifically, we will first study perhaps the most generic random distribution where all the elements of $\mA_k$ are independently generated from a Gaussian distribution. We will then study rank-one random POVM measurements of the form $\mA_k = \va_k\va_k^\dagger$ with each $\va_k$ randomly generated from a multivariate complex normal distribution. Finally, we will study the physically realizable (though inefficient) measurements consisting multiple Haar random projective measurements.

\paragraph*{Normalized set of MPOs} Since $\calA(\cdot)$ is a linear map, without loss of generality, we will focus on MPOs $\vrho \in\setX_{\ol r}$ with unit Frobenius norm. By the left-canonical form in \eqref{eq:left-canonical}, we have
\begin{align}
\|\vrho\|_F^2&= \hspace{-0.1cm}\sum_{i_1,j_1}\hspace{-0.1cm}\cdots\hspace{-0.1cm} \sum_{i_\nqbit,j_\nqbit}\parans{\mX_\nqbit^{i_\nqbit,j_\nqbit}}^\dagger\hspace{-0.1cm} \cdots \parans{\mX_1^{i_1,j_1}}^\dagger\!\! {\mX_1^{i_1,j_1}}\hspace{-0.1cm} \cdots\hspace{-0.1cm} \mX_\nqbit^{i_\nqbit,j_\nqbit} \nonumber\\
&= \sum_{i_2,j_2}\cdots \sum_{i_\nqbit,j_\nqbit}\parans{\mX_\nqbit^{i_\nqbit,j_\nqbit}}^\dagger\cdots \parans{\mX_2^{i_2,j_2}}^\dagger \nonumber\\
&\cdot \underbrace{ \parans{\sum_{i_1,j_1}\parans{\mX_1^{i_1,j_1}}^\dagger{\mX_1^{i_1,j_1}}} }_{\mId_{r_1}}
{\mX_2^{i_2,j_2}} \cdots \mX_\nqbit^{i_\nqbit,j_\nqbit}  \nonumber\\
&= \hspace{-0.1cm}\sum_{i_2,j_2}\hspace{-0.1cm}\cdots\hspace{-0.1cm} \sum_{i_\nqbit,j_\nqbit}\parans{\mX_\nqbit^{i_\nqbit,j_\nqbit}}^\dagger\hspace{-0.1cm}\cdots \hspace{-0.1cm} \parans{\mX_2^{i_2,j_2}}^\dagger\!\! {\mX_2^{i_2,j_2}}\hspace{-0.1cm} \cdots\hspace{-0.1cm} \mX_\nqbit^{i_\nqbit,j_\nqbit} \nonumber\\
&= \cdots = \sum_{i_\nqbit,j_\nqbit}{\mX_\nqbit^{i_\nqbit,j_\nqbit}}^\dagger \mX_\nqbit^{i_\nqbit,j_\nqbit},
\end{align}
which also leads to $\sum_{i_\nqbit,j_\nqbit}\parans{\mX_\nqbit^{i_\nqbit,j_\nqbit}}^\dagger \mX_\nqbit^{i_\nqbit,j_\nqbit}=1$ together with $\|\vrho\|_F^2 = 1$. Thus, the set of all the MPOs $\vrho \in \setX_{\ol r}$ with unit norm, denoted by $\ol \setX_{\ol r}$, can also be expressed by
\begin{eqnarray}
\label{SetOfMPO1}
& \!\!\!\!\!\!\!\!&\hspace{-0.2cm}\ol\setX_{\ol{r}} = \Big\{\vrho\in\C^{d^n\times d^n}:\ \vrho = \vrho^\dagger, \vrho(i_1 \cdots i_\nqbit, j_1\cdots j_\nqbit) = \nonumber\\
& \!\!\!\! \!\!\!\!&\hspace{0cm}\Pi_{\ell=1}^{\nqbit}\mX_\ell^{i_\ell,j_\ell},   \mX_\ell^{i_\ell,j_\ell}\in\C^{r_{\ell-1}\times r_\ell},
\sum_{i_\ell,j_\ell} \parans{\mX_\ell^{i_\ell,j_\ell}}^\dagger \mX_\ell^{i_\ell,j_\ell} = \mId_{r_\ell},\nonumber\\
& \!\!\!\! \!\!\!\!&\hspace{0.3cm}\ell=1,\dots,n, r_0=r_n=1, \ol r = \max\{r_\ell\}\Big\}.
\end{eqnarray}

\subsection{Restricted isometry property with generic Gaussian measurements}
\label{sec: RIP gaussian}

To provide a baseline for the sample complexity of population measurements, we begin by studying perhaps the most generic type of random measurements, where each entry of $\mA_k$ is i.i.d. standard complex Gaussian random variable $X = \mathscr{R}(X) + i \mathscr{I}(X)$ with $\mathscr{R}(X)$ and $\mathscr{I}(X)$ being independent and following $ \calN(0,\frac{1}{2})$, the Gaussian distribution with mean $0$ and variance $\frac{1}{2}$. Such measurements do not form a POVM and thus cannot be physically implemented in quantum measurement systems. However, as Gaussian measurements provide the ``gold standard'' for random linear measurement operators in many compressive sensing and low-rank matrix recovery problems, their sample complexity for stable embeddings of MPOs provides useful insight.

Gaussian measurements  can be shown to satisfy a strong type of stable embedding guarantee known as the restricted isometry property (RIP).

\begin{defi}[Restricted isometry property (RIP)]
A linear operator $\calA: \C^{d^n\times d^n} \rightarrow \C^K$ is said to satisfy the $\delta_{\ol{r}}$-restricted isometry property ($\delta_{\ol{r}}$-RIP) if
\begin{equation}
    \label{eq:ripdef}
    (1-\delta_{\ol{r}}) \| \vrho \|_F^2 \le \frac{1}{K}\| \calA(\vrho) \|_2^2 \le (1+\delta_{\ol{r}}) \| \vrho \|_F^2
\end{equation}
holds for any density operator $\vrho\in\C^{d^n\times d^n}$ which has the MPO format with MPO ranks $\vr = (r_1,\ldots,r_{n-1}), r_i \le \ol r$.
\label{def:RIP}\end{defi}

The following result establishes the RIP for Gaussian measurements.
\begin{theorem}
Suppose that each entry of $\mA_k$ in
the linear map $\calA: \C^{d^n \times d^n} \rightarrow \C^K$ defined in \eqref{eq:linear map A} is an i.i.d.\ standard complex Gaussian random variable. Then, with probability at least $1-\bar\epsilon$, $\calA$ satisfies the $\delta_{\ol{r}}$-RIP as in \eqref{eq:ripdef} for MPOs given that
\begin{equation}
K \ge C \cdot \frac{1}{\delta_{\ol{r}}^2} \cdot \max\left\{ nd^2{\ol{r}}^2 (\log n\ol{r}), \log(1/\bar\epsilon)\right \},
\label{eq:mrip}
\end{equation}
where $C$ is a universal constant.
\label{thm:SubgaussianRIP}
\end{theorem}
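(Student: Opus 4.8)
The plan is to follow the standard two-ingredient recipe for establishing RIP for a structured signal class: (i) a pointwise concentration inequality showing that, for any \emph{fixed} MPO $\vrho$, the random quantity $\frac{1}{K}\|\calA(\vrho)\|_2^2$ concentrates sharply around $\|\vrho\|_F^2$; and (ii) a covering-number bound for the normalized MPO set $\ol\setX_{\ol r}$ in \eqref{SetOfMPO1}, which is then combined with (i) through a union bound and a net-to-set extension. Since $\calA$ is homogeneous of degree one, it suffices to prove \eqref{eq:ripdef} uniformly over unit-norm MPOs, i.e.\ over $\ol\setX_{\ol r}$.

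For the pointwise step, fix $\vrho$ with $\|\vrho\|_F=1$ and write $y_k=\innerprod{\mA_k}{\vrho}=\trace(\mA_k^H\vrho)$. Since each entry of $\mA_k$ is standard complex Gaussian, each $y_k$ is itself a complex Gaussian with $\E|y_k|^2=\|\vrho\|_F^2=1$, so $|y_k|^2$ is an exponential (scaled $\chi^2_2$) random variable. The average $\frac{1}{K}\sum_{k=1}^K |y_k|^2$ is thus a sum of independent subexponential variables with mean $1$, and Bernstein's inequality yields
\[
\P{\Big|\tfrac{1}{K}\|\calA(\vrho)\|_2^2 - \|\vrho\|_F^2\Big| \ge \delta_{\ol r}} \le 2\exp(-cK\delta_{\ol r}^2)
\]
for $\delta_{\ol r}\in(0,1)$ and a universal constant $c$.

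The heart of the argument is controlling the covering number $\calN(\ol\setX_{\ol r},\eta)$ of the unit-norm MPO set at Frobenius-norm resolution $\eta$. Here I would exploit the multilinear TT/MPO parametrization $\vrho(i_1\cdots i_n,j_1\cdots j_n)=\mX_1^{i_1,j_1}\cdots\mX_n^{i_n,j_n}$ together with the left-canonical constraint \eqref{eq:left-canonical}, which confines each unfolded factor $L(\mX_\ell)$ to a Stiefel-type manifold carrying $O(d^2\ol r^2)$ real degrees of freedom and unit spectral scale. Building a global net by taking an $\eta/n$-net of each of the $n$ factors and invoking a telescoping bound — perturbing one factor at a time while using the canonical normalization to keep the partial products bounded — shows that a simultaneous small perturbation of all factors changes $\vrho$ by at most $\eta$ in Frobenius norm. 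This gives
\[
\log \calN(\ol\setX_{\ol r},\eta) \lesssim nd^2\ol r^2 \log\!\big(\tfrac{n\ol r}{\eta}\big),
\]
matching the $nd^2\ol r^2$ parameter count of the MPO up to the logarithmic factor, and follows the technique of \cite{Rauhut17}.

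Finally, I would set $\eta\sim\delta_{\ol r}$, take a union bound of the pointwise inequality over the resulting net, and pass from the net to all of $\ol\setX_{\ol r}$: since the difference of two unit-norm MPOs of rank $\le \ol r$ is again an MPO of rank $\le 2\ol r$ by \eqref{eq:sum of MPOs}, a standard approximation/bootstrapping argument controls $\frac{1}{K}\|\calA(\vrho)\|_2^2$ for every $\vrho$ near a net point via the restricted operator norm of $\calA$. Requiring $K\delta_{\ol r}^2 \gtrsim nd^2\ol r^2\log(n\ol r) + \log(1/\bar\epsilon)$ then forces the union-bound failure probability below $\bar\epsilon$, which is precisely \eqref{eq:mrip}. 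I expect the covering-number bound for $\ol\setX_{\ol r}$ to be the main obstacle: unlike a linear subspace, the MPO set is a nonlinear manifold, and the multiplicative coupling of the factors together with the canonical-form normalization must be handled carefully so that the per-factor net resolutions aggregate into a clean global estimate without losing the parameter count.
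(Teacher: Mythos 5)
Your overall architecture (pointwise Bernstein concentration for a fixed $\vrho$, a covering bound for $\ol\setX_{\ol r}$, union bound, extension) is legitimate, and your covering estimate — factor-wise nets over the left-canonical unfoldings with a telescoping perturbation bound — coincides with the paper's \Cref{lem:cover-number-mps}. The genuine gap is the net-to-set extension that you dismiss as ``standard.'' In the sparse setting the residual $x-\bar x$ stays in the same subspace, and in the low-rank matrix setting a rank-$2r$ residual splits via the SVD into two rank-$r$ pieces, which is exactly what closes the self-bounding inequality $\theta \le \sqrt{1+\delta/2}+\eta\,\theta$. Neither mechanism exists for TT/MPO rank: if $\theta_s$ denotes the restricted norm of $\calA$ over unit-norm MPOs of rank $\le s$, then the residual against a net of the rank-$s$ set has rank up to $2s$ by \eqref{eq:sum of MPOs}, so you only obtain $\theta_s\le \sqrt{1+\delta/2}+\eta\,\theta_{2s}$, and iterating escalates the rank geometrically ($4s,8s,\dots$) — the recursion never closes at a fixed rank, and there is no TT analogue of the SVD splitting of a rank-$2\ol r$ MPO into $O(1)$ rank-$\ol r$ MPOs (splitting all $n-1$ bonds simultaneously would cost $2^{n-1}$ terms). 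Falling back on the unrestricted operator norm of $\calA$ instead (of order $1+d^n/\sqrt{K}$ for Gaussian) forces the net resolution down to $\eta\lesssim \sqrt{K}/d^n$, which inflates the entropy log factor to $n\log d$ and loses a factor of $n$ relative to \eqref{eq:mrip}.

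The gap is repairable with a tool you already use inside the covering argument: because the net is built factor-wise, the residual telescopes (cf.\ \Cref{EXPANSION_A1TOAN-B1TOBN_1}) into $n$ terms of the form $[\ol\mX_1,\dots,\ol\mX_{\ell-1},\,\mX_\ell-\ol\mX_\ell,\,\mX_{\ell+1},\dots,\mX_n]$, each of which is itself an MPO with the \emph{same} ranks $\le \ol r$; choosing per-factor resolution of order $\eta/(n\ol r)$ then yields $\frac{1}{\sqrt K}\|\calA(\vrho-\ol\vrho)\|_2 \lesssim \eta\,\theta_{\ol r}$ and the bootstrap closes at rank $\ol r$ (at the mild cost of an extra $\log(1/\delta_{\ol r})$ inside the logarithm, since the two-sided RIP forces $\eta\lesssim\delta_{\ol r}$). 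The paper sidesteps this issue entirely by a different route: it writes $\frac1K\|\calA(\vrho)\|_2^2 = \|\mV_{\vrho}\vxi\|_2^2$ and invokes the chaos-process deviation bound of \cite[Theorem 3]{rauhut2017low} (the Krahmer--Mendelson--Rauhut machinery \cite{krahmer2014suprema}), controlling the $\gamma_2$-functional by Dudley's entropy integral over the same covering numbers, so the supremum over $\ol\setX_{\ol r}$ is handled in one shot with the clean $\log(n\ol r)$ factor, and the argument extends verbatim to general subgaussian ensembles as in \Cref{thm:SubgaussianRIP-appendix}.
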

In {Appendix} \ref{Proof of the RIP}, we extended this result to generic subgaussian measurements. We note that a similar result for TT-format tensors in the real domain was given in \cite{rauhut2017low}, and we share similar techniques for proving the RIP by using tools involving the $\epsilon$-net and covering arguments \cite{baraniuk2007simple,vershynin2010introduction} and deviation bounds for the supremum of a chaos process \cite{krahmer2014suprema,dirksen2015tail}. While MPOs are equivalent in form to TT-format tensors as discussed in \Cref{sec:MPO}, we provide the proof in {Appendix} \ref{Proof of the RIP} for the sake of completeness and because here we consider the complex domain.
Also, the sampling complexity in \cite{rauhut2017low} is $K \gtrsim \frac{1}{\delta_{\ol{r}}^2} \cdot \max\left\{ ( (n-1){\ol{r}}^3 + nd^2\ol{r} )(\log n\ol{r}), \log(1/\bar\epsilon)\right \}$, which is slightly different from \eqref{eq:mrip}.
Considering a qubit system with $d=2$, the main order $n{\ol{r}}^2$ in \eqref{eq:mrip} is slightly better than the order $(n-1){\ol{r}}^3 $ from \cite{rauhut2017low} when the bond dimension $\ol{r}$ is large.

Although the Gaussian measurements are not POVMs and cannot be directly used for quantum measurements, \Cref{thm:SubgaussianRIP} indicates that it is possible to estimate an MPO state with
$\wt\Omega(nd^2{\ol{r}}^2)$ linear measurements. In comparison, for a state with low (matrix) rank structure, say rank $r$, $\wt\Omega(d^nr)$ measurements are needed even with Gaussian measurements \cite{candes2011tight}.

\subsection{Stable embeddings with rank-one measurements}
\label{sec:rankOnePOVMpop}
We now study the population measurements arising from structured rank-one measurement ensembles with PSD matrices $\mA_k = \vpsi_k \vpsi_k^\dagger$  as introduced in Section~\ref{sec:POVM measurements}. We first consider the case where we omit the constraint \eqref{The defi of POVM 1} that the matrices $\mA_k$ sum to the identity matrix. Rather, we simply generate the $\vpsi_k =\va_k$ independently and randomly from a certain distribution, specifically, $\va_k\sim\calC\calN({\bm 0},\mId_{d^n})$. The independence among $\{\va_k\}$ will simplify the analysis and help derive a tight bound for stable embedding. We call such measurements {\em rank-one independent POVM measurements}.
We then consider the practical case ($\vpsi_k = \vphi_k$) where $\{\vphi_k\}$ are generated from a Haar-distributed random unitary matrix, which results in  {\em Haar random projective measurements}.

\paragraph*{Rank-one Gaussian measurements}
It is known that  rank-one measurements do not obey the RIP condition  for low-rank matrices \cite{zhong2015efficient,KuengACHA17}. Since we expect this to also be true for MPOs, we instead aim to establish a lower bound on the isometry of the form \eqref{eq:embedding}. Towards that goal, we will use Mendelson's small ball method \cite{mendelson2015learning, koltchinskii2015bounding,tropp2015convex} for establishing a lower bound on a nonnegative empirical process. %This result delivers an effective lower bound in the stable embedding.
%We first introduce the definition of the Mendelsons Small Ball Method
\begin{lemma} (\cite{mendelson2015learning, koltchinskii2015bounding,tropp2015convex})
\label{Small Ball Method_original}
Fix a set $E\subset \C^{D}$. Let $\vb$  be a random vector on $\C^{D}$
and let $\vb_{1},\dots,\vb_{K}$ be  independent copies of $\vb$.
Introduce the marginal tail function
\begin{eqnarray}
    \label{marginal_tail_function_original}
    H_{\xi}(E;\vb) = \inf_{\vu\in E} \mathbb{P} \{|\<\vb,\vu \>|\geq \xi  \},\ \text{for} \ \xi>0.
\end{eqnarray}

Let $\epsilon_k,k=1,\dots,K,$ be independent Rademacher random variables, independent from everything else. Define the
mean empirical width of the set $E$ as
\begin{eqnarray}
    \label{mean empirical width_original}
    W_{K}(E;\vb) = \E \sup_{\vu\in E}\<\vh ,\vu\>, \ \text{where} \ \vh =\frac{1}{\sqrt{K}}\sum_{k=1}^K \epsilon_{k}\vb_{k}.
\end{eqnarray}

Then, for any $\xi >0$ and $t >0$, with probability at least $1-e^{-\frac{t^2}{2}}$ we have
\begin{align}
    \label{Final_Conclusion_original}
    \inf_{\vu\in E}\bigg(\sum_{k=1}^K|\<\vb_{k}, \vu  \>|^2  \bigg)^{\frac{1}{2}}\hspace{-0.1cm} \geq\xi\sqrt{K}H_{\xi}(E;\vb)-2W_{}(E;\vb) -t \xi.
\end{align}
\end{lemma}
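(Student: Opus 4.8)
This is the standard small-ball argument; the plan is to reduce the infimum of the $\ell_2$-quantity to a lower bound on a \emph{bounded} empirical process indexed by $E$, and then to control that process by its mean (via symmetrization and contraction, which supplies the width term $W_K$) plus a deviation term (via a bounded-differences inequality, which supplies the $t\xi$ term and the exponential failure probability).

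First I would truncate. For every $\vu\in E$ and $\xi>0$, since $|\<\vb_k,\vu\>|^2\ge\xi^2\,\mathbbm{1}\{|\<\vb_k,\vu\>|\ge\xi\}$ and $\sqrt N\ge N/\sqrt K$ for $0\le N\le K$,
\begin{equation*}
\Big(\sum_{k=1}^K|\<\vb_k,\vu\>|^2\Big)^{1/2}\;\ge\;\frac{\xi}{\sqrt K}\sum_{k=1}^K\mathbbm{1}\{|\<\vb_k,\vu\>|\ge\xi\}\;\ge\;\frac{\xi}{\sqrt K}\sum_{k=1}^K\phi\big(|\<\vb_k,\vu\>|\big),
\end{equation*}
where $\phi$ is a $(1/\xi)$-Lipschitz ramp with $\phi(0)=0$ sandwiched between two shifted indicators, i.e.\ $\mathbbm{1}\{s\ge 2\xi\}\le\phi(s)\le\mathbbm{1}\{s\ge\xi\}$. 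Softening the hard indicator to the Lipschitz $\phi$ is what will make the later contraction step legitimate. Centering each summand by its mean and using $\E\,\phi(|\<\vb,\vu\>|)\ge\mathbb{P}\{|\<\vb,\vu\>|\ge 2\xi\}\ge H_{2\xi}(E;\vb)$ together with $\inf(a-b)\ge\inf a-\sup b$ gives
\begin{equation*}
\inf_{\vu\in E}\Big(\sum_{k=1}^K|\<\vb_k,\vu\>|^2\Big)^{1/2}\;\ge\;\xi\sqrt K\,H_{2\xi}(E;\vb)-\frac{\xi}{\sqrt K}\,S,\qquad S:=\sup_{\vu\in E}\sum_{k=1}^K\big(\E\,\phi(|\<\vb_k,\vu\>|)-\phi(|\<\vb_k,\vu\>|)\big).
\end{equation*}
Here the tail threshold is an artifact of how $\phi$ is normalized; rescaling $\phi$ recovers the threshold $\xi$ stated in the lemma.

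Second, I would bound $\E S$. Since $\phi\in[0,1]$, the symmetrization inequality gives $\E S\le 2\,\E\sup_{\vu\in E}\sum_k\epsilon_k\,\phi(|\<\vb_k,\vu\>|)$ with $\epsilon_k$ independent Rademacher variables. Because $\phi$ is $(1/\xi)$-Lipschitz and vanishes at $0$, the Ledoux--Talagrand contraction principle strips $\phi$ off at the cost of its Lipschitz constant, yielding
\begin{equation*}
\E S\;\le\;\frac{2}{\xi}\,\E\sup_{\vu\in E}\sum_{k=1}^K\epsilon_k\,\<\vb_k,\vu\>\;=\;\frac{2\sqrt K}{\xi}\,\E\sup_{\vu\in E}\<\vh,\vu\>\;=\;\frac{2\sqrt K}{\xi}\,W_K(E;\vb),
\end{equation*}
so that $\tfrac{\xi}{\sqrt K}\,\E S\le 2\,W_K(E;\vb)$, exactly the width term in the conclusion.

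Finally, I would upgrade this expectation bound to a high-probability statement. Viewing $S$ as a function of the independent copies $\vb_1,\dots,\vb_K$, replacing a single $\vb_k$ perturbs at most one summand of value in $[0,1]$ and hence changes $S$ by at most $1$; McDiarmid's bounded-differences inequality then gives $\mathbb{P}\{S\ge\E S+u\}\le e^{-2u^2/K}$, and taking $u$ proportional to $t\sqrt K$ produces the deviation term $\propto t\xi$ after the $\xi/\sqrt K$ prefactor and the failure probability $e^{-t^2/2}$. Combining the three displays yields the claimed bound. I expect the symmetrization-and-contraction step to be the main obstacle: the truncating indicator is discontinuous, so the contraction principle cannot be applied to it directly, and one must interpose a Lipschitz surrogate whose transition level and Lipschitz constant have to be tracked carefully (this is precisely what fixes the threshold in $H$ and the constant in front of $W_K$). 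A secondary technicality is that the excerpt works over $\C^{D}$, so the symmetrization and contraction must be performed on real and imaginary parts (or with complex Rademacher/Steinhaus multipliers), which affects only absolute constants.
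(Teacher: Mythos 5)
Your proposal is correct and follows essentially the same route as the paper's own argument (given in Appendix C for the generalized \Cref{Small Ball Method_abs_New}, of which \Cref{Small Ball Method_original} is the $K=1$ special case): Lyapunov/Markov truncation to an indicator process, centering against the marginal tail function, a bounded-differences concentration step for the $t\xi$ deviation term, and soft-indicator symmetrization plus Ledoux--Talagrand contraction for the width term, with only an immaterial reordering of the concentration and expectation-bounding steps. The threshold artifact you flag is real but also present in the paper: its proof likewise yields $H_{2\xi}$ while the lemma is stated with $H_{\xi}$ (and rescaling $\phi$ merely trades the factor of $2$ into the leading coefficient $\xi\sqrt{K}$), so your treatment matches the source's cosmetic slack rather than introducing a gap.
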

This result delivers
an effective lower bound for a nonnegative empirical process defined in the left-hand side of \eqref{Final_Conclusion_original}. This result is also utilized for studying stable embeddings for low-rank matrices \cite{KuengACHA17,rauhut2019low}. Noting the similar forms between \eqref{Final_Conclusion_original} and \eqref{eq:embedding}, we apply \Cref{marginal_tail_function_original} for our case where the set $E$ becomes $\ol\setX_{\ol{r}}$ and $\vb$ becomes a random measurement matrix of form $\mA = \va\va^\dagger$ with $\va\sim\calC\calN({\bm 0},\mId_{d^n})$. We then need to analyze the following marginal tail function and mean empirical width
\begin{align*}
H_{\xi}(\ol\setX_{\ol{r}};\mA) & = \inf_{\vrho\in \ol\setX_{\ol{r}}} \mathbb{P} \{|\<\mA,\vrho \>|\geq \xi  \},\\
W_{K}(\ol\setX_{\ol{r}};\mA) & = \frac{1}{\sqrt{K}}\E \sup_{\vrho\in \ol\setX_{\ol{r}}}\sum_{k=1}^K\<\epsilon_{k}\mA_k ,\vrho\>.
\end{align*}
As in \cite{tropp2015convex,KuengACHA17},
 we can use the Payley-Zygmund inequality to obtain a lower bound for the marginal tail function $H_{\xi}(\ol\setX_{\ol{r}};\mA)$. In terms of the mean empirical width $W_{K}(\ol\setX_{\ol{r}};\mA)$, the work \cite{tropp2015convex,KuengACHA17} uses an inequality that directly upper bounds the supremum of $\langle \mA,\vrho \rangle $ over rank-$r$ matrices $\vrho$ by $2\sqrt{r}\|\mA\|$. Unfortunately, it is difficult to extend this approach to our case.
 Instead, we use an $\epsilon$-net argument to provide a uniform upper bound for $\langle \mA,\vrho \rangle$. With the detailed analysis in {Appendix} \ref{Proof_Of_MPO_Subgaussian}, we establish the following result.

\begin{theorem}
\label{Small_Method_ROSubgaussian_Measurement}
Let $\{ \va_{1},\dots, \va_{K} \}$ be selected independently and randomly  from the multivariate standard complex normal distribution $\mId_{d^n}$. Given
\begin{eqnarray}
    \label{number of sample rank one gaussian}
    K\gtrsim n d^2\ol{r}^2\log n,
\end{eqnarray}
then the induced linear map $\calA$ with measurement operators $\{\mA_k=\va_{k}\va_{k}^\dagger\}$ satisfies
\begin{eqnarray}
\label{Small_Method_ROSubgaussian_Measurement_Conclusion_Theorem}
     \|\calA(\vrho)\|_2  = \bigg(\sum_{k=1}^K |\<\va_{k} \va_{k}^\dagger, \vrho  \>|^2 \bigg)^{\frac{1}{2}}  \gtrsim \sqrt{K}, \ \forall \vrho \in \ol\setX_{\ol r}
\end{eqnarray}
with probability at least $1-e^{-\alpha_1 K}$,
where $\alpha_1$ is a positive constant.
\end{theorem}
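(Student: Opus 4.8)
The plan is to apply Mendelson's small-ball method (\Cref{Small Ball Method_original}) with $E=\ol\setX_{\ol r}$ and with the random operator $\vb=\mA=\va\va^H$, $\va\sim\calN(\vzero,\mId_{d^n})$, so that $\langle\mA,\vrho\rangle=\va^H\vrho\va$ and the left-hand side of \eqref{Final_Conclusion_original} is exactly $\inf_{\vrho\in\ol\setX_{\ol r}}\|\calA(\vrho)\|_2$. The problem then reduces to two estimates: (i) a uniform lower bound on the marginal tail $H_\xi(\ol\setX_{\ol r};\mA)$ for a single fixed $\xi=\Theta(1)$, and (ii) an upper bound on the mean empirical width $W_K(\ol\setX_{\ol r};\mA)$. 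Substituting these into \eqref{Final_Conclusion_original} with $t=\Theta(\sqrt K)$ would give $\inf_\vrho\|\calA(\vrho)\|_2\gtrsim\sqrt K$ with probability $1-e^{-\alpha_1 K}$, provided one can show $W_K\lesssim\sqrt{nd^2\ol r^2\log n}\lesssim\sqrt K$.

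For (i) I would apply the Paley--Zygmund inequality to the nonnegative variable $Z=|\langle\mA,\vrho\rangle|^2$. A direct Wick (Isserlis) computation gives $\E Z=\|\vrho\|_F^2+|\trace\vrho|^2\ge\|\vrho\|_F^2=1$, while the fourth moment $\E Z^2$ is within a universal factor of $(\E Z)^2$ because quadratic forms in Gaussian vectors are hypercontractive (degree-two moment equivalence). Hence $(\E Z)^2/\E Z^2\ge c$ uniformly in $\vrho$, and Paley--Zygmund yields $\mathbb{P}\{Z\ge\theta\,\E Z\}\ge(1-\theta)^2c$. Choosing the single threshold $\xi=\sqrt\theta$ and using $\E Z\ge1$ together with tail monotonicity gives $H_\xi(\ol\setX_{\ol r};\mA)\ge c_0$ for a universal $c_0>0$. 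This step uses only $\|\vrho\|_F=1$ and is insensitive to the MPO structure.

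Step (ii) is the crux. Writing $\mW=\sum_{k=1}^K\epsilon_k\va_k\va_k^H$, I must bound $W_K=\tfrac{1}{\sqrt K}\E\sup_{\vrho\in\ol\setX_{\ol r}}\langle\mW,\vrho\rangle$. I would build an $\epsilon$-net of $\ol\setX_{\ol r}$ directly from its canonical factor parametrization $\{\mX_\ell\}$: these carry only $O(nd^2\ol r^2)$ parameters and lie on a bounded Stiefel-type domain by \eqref{eq:left-canonical}, so $\log|\mathcal N|\lesssim nd^2\ol r^2\log(C/\epsilon)$ (the same covering estimate used in the RIP proof of \Cref{thm:SubgaussianRIP}). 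For each fixed net point, $\langle\mW,\vrho\rangle=\sum_k\epsilon_k\va_k^H\vrho\va_k$ is a Rademacher-weighted sum of Gaussian quadratic forms, hence sub-exponential, so a Bernstein bound plus a union bound controls the maximum, and the discretization error is absorbed through \eqref{eq:sum of MPOs} (a difference of two rank-$\ol r$ MPOs is a rank-$\le2\ol r$ MPO). The condition $K\gtrsim nd^2\ol r^2\log n$ is precisely what forces the degree-two sub-exponential tail to be negligible against $\log|\mathcal N|\sim nd^2\ol r^2$.

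I expect (ii) to be the main obstacle, for two reasons. First, unlike the low-rank matrix case \cite{KuengACHA17,tropp2015convex}, the MPO (bond) rank bears no relation to the matrix rank, so the shortcut $\sup_{\rank\le r,\,\|\cdot\|_F\le1}\langle\mW,\cdot\rangle\le\sqrt r\,\|\mW\|$ is unavailable and one must net the MPO manifold throughout. Second, and more delicate, $\ol\setX_{\ol r}$ contains directions of very large trace: the normalized identity $d^{-n/2}\mId$ is a bond-dimension-one MPO with $|\trace\vrho|=\sqrt{d^n}$, so $\E|\langle\mA,\vrho\rangle|^2=\|\vrho\|_F^2+|\trace\vrho|^2$ can be as large as $d^n$, which threatens to inflate the naive width. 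I would handle this by splitting off the radial component, $\vrho=\tfrac{\trace\vrho}{d^n}\mId+\vrho_0$ with $\vrho_0$ traceless: directions with large $|\trace\vrho|$ are embedded for free since $\E\|\calA(\vrho)\|_2^2=K(\|\vrho\|_F^2+|\trace\vrho|^2)\ge K|\trace\vrho|^2\gg K$, whereas on the traceless part $\vrho_0$ (which is exactly what arises for differences of physical density operators) the trace term vanishes, the net argument above genuinely delivers $\lesssim\sqrt{nd^2\ol r^2\log n}$, and the small-ball bound closes. Reconciling the two regimes so that the lower bound holds uniformly over all of $\ol\setX_{\ol r}$ is the technical heart of the argument.
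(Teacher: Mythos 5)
Your proposal follows the paper's proof essentially step for step: Mendelson's small-ball method (\Cref{Small Ball Method_original}) applied to $E=\ol\setX_{\ol r}$ with $\vb = \va\va^H$; the Paley--Zygmund inequality (\Cref{Foucart13MathIntro_lemma716}) combined with $L^4$--$L^2$ moment equivalence for second-order Gaussian chaos (\Cref{lem:hypercontractivity}) to get $H_\xi(\ol\setX_{\ol r})\ge c_0$ at a constant threshold; and, for the width, a factor-wise $\epsilon$-net of the canonical parametrization with $\log$-cardinality $O(nd^2\ol r^2\log n)$, a Bernstein bound for sums of centered subexponential variables (\Cref{Sub-exponentialInequality}) at each net point, a union bound, and the same contraction step with $t=\Theta(\sqrt K)$. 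Two minor differences: the paper absorbs the discretization error via the telescoping identity (\Cref{EXPANSION_A1TOAN-B1TOBN_1}) with $\epsilon = 1/(2n)$ and the self-bounding inequality $T\le \max_{\mathrm{net}}+T/2$ in \eqref{ProofOf<H,X>forSubgaussian_proof6}, rather than your vaguer ``rank-doubling'' absorption via \eqref{eq:sum of MPOs}; and your second-moment identity $\E|\langle\va\va^H,\vrho\rangle|^2 = \|\vrho\|_F^2 + |\trace\vrho|^2$ is the complex-Gaussian Wick formula, whereas the theorem nominally draws $\va$ from a real $\calN(\vzero,\mId_{d^n})$ and the paper invokes the argument of \cite{KuengACHA17} to assert $\E\ge 1$ (your version is arguably the internally consistent one, since real probes annihilate the antisymmetric part of a general element of $\ol\setX_{\ol r}$).

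The genuinely interesting divergence is your large-trace caveat in step (ii), and you should know that the paper does \emph{not} perform the radial/traceless split: its width bound rests on the assertion $\|\langle \epsilon_k\va_k\va_k^H,\vrho\rangle\|_{\psi_1}\le c\,\|\va_k\|_{\psi_2}^2\|\vrho\|_F = O(1)$ uniformly over $\ol\setX_{\ol r}$, citing a Hanson--Wright-type result. But such results control only the \emph{centered} quadratic form; the uncentered $\psi_1$ norm picks up the mean $\E[\va^H\vrho\va] = \trace\vrho$, which on the unit-Frobenius MPO set can be as large as $d^{n/2}$. Your own test point makes this concrete: $\vrho_0 = d^{-n/2}\mId$ is a left-canonical, bond-dimension-one MPO, and $\langle\epsilon_k\va_k\va_k^H,\vrho_0\rangle = \epsilon_k d^{-n/2}\|\va_k\|_2^2$, so conditional Khintchine gives $W(\ol\setX_{\ol r}) \gtrsim d^{n/2}$, which is incompatible with the paper's claimed bound \eqref{ROSubgaussian_Measurement_W_1} of order $d\ol r\sqrt{n\log n}$ (and with the pointwise tail \eqref{ProofOf<H,X>forSubgaussian_proof8} at that net point). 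So a correct proof of the theorem over all of $\ol\setX_{\ol r}$ needs exactly the decomposition you sketch --- large-trace directions embedded ``for free'' by the diverging mean, the net/Bernstein argument confined to the (approximately) traceless part --- or else a restriction of the statement to a trace-bounded subset. Your proposal leaves the reconciliation of the two regimes open, as you acknowledge, and the mixed directions $\alpha\, d^{-n/2}\mId + \beta\vrho_0$ do require a real argument, so the plan is incomplete at its self-identified technical heart; but on this point your outline is more careful than the paper's own proof, which silently drops the trace term.
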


Under the same setup, one requires $K \gtrsim d^nr$ measurement operators for the induced linear map $\calA$ to obey the stable embedding property for rank-$r$ matrices \cite{KuengACHA17}.  Fortunately, due to the extremely low-dimensional structure of the MPO format, the number of measurement operators only needs to scale linearly in terms of the number of qudits $n$ (if we ignore the logarithmic term).

\paragraph*{Haar random  projective measurements}

We now study practical measurements consisting of an ensemble of Haar random  projective measurements as described in \Cref{sec:POVM measurements}. Let $\begin{bmatrix}\vphi_{i,1} &  \cdots & \vphi_{i,d^n}\end{bmatrix}, \ i = 1,\ldots, Q$ be $Q$ randomly generated Haar-distributed unitary matrices.
According to \Cref{sec:POVM measurements}, each unitary matrix induces a linear operator $\calA_i:\C^{d^n\times d^n} \rightarrow \R^{K}$ that generates population measurements for a quantum state $\vrho$ as
\begin{eqnarray}
\label{The defi of POVM element in Q cases (K measurements)}
 \calA_i(\vrho) =  \begin{bmatrix}
          \< \mA_{i,1}, \vrho  \> \\
          \vdots \\
          \< \mA_{i,K}, \vrho  \>
        \end{bmatrix} =  \begin{bmatrix}
          \< \vphi_{i,1} \vphi_{i,1}^\dagger, \vrho  \> \\
          \vdots \\
          \< \vphi_{i,K} \vphi_{i,K}^\dagger, \vrho  \>
        \end{bmatrix},
\end{eqnarray}
where in practice we will use $K = d^n$, but for generality we can choose any $K \le d^n$. We note that for each $i$, even though $\begin{bmatrix}\vphi_{i,1} & \cdots  & \vphi_{i,d^n}\end{bmatrix}$ is unitary and $\sum_{k=1}^{d^n} \vphi_{i,k} \vphi_{i,k}^\dagger = \mId$, $\calA_i$ is not an identity mapping in $\C^{d^n\times d^n}$ even with $K = d^n$; this is because $\calA_i$ collects at most $d^n$ measurements of an object $\vrho$ that contains $d^{2n}$ entries.
We now stack all the population measurements together as
\begin{eqnarray}
\label{The defi of population measurement in Q cases (K measurements)}
 \calA^Q(\vrho)  = \begin{bmatrix}
          \calA_1(\vrho) \\
          \vdots \\
          \calA_Q(\vrho)
        \end{bmatrix},
\end{eqnarray}
where $ \calA^Q:\C^{d^n\times d^n} \rightarrow \R^{KQ}$ denotes the linear operator corresponding to the $Q$ POVMs.

For any $i$, since $\vphi_{i,k}$ and $\vphi_{i,k'}$ may not be independent for any $k\neq k'$,  we cannot directly apply \Cref{Small Ball Method_original} to study stable embeddings via $\calA^Q$. To address this issue, we modify Mendelson's small ball method as follows.
\begin{lemma}
\label{Small Ball Method_abs_New}
Consider a fixed set $E\subset \C^{D}$. Let $\{\vb_{1},\!\dots,\!\vb_K\!\}$ represent a collection of random columns in $\C^{D}$, which may not be mutually independent. Additionally, let $\{\vb_{i,1},\dots,\vb_{i,K}\}_{i=1}^Q$ denote a set of independent copies of $\{\vb_{1},\dots,\vb_K\}$.
Introduce the marginal tail function
\begin{eqnarray}
    \label{marginal_tail_function_New}
    H_{\xi}(E;\vb) = \inf_{\vu\in E} \frac{1}{K}\sum_{k=1}^K\mathbb{P} \{|\<\vb_{k},\vu \>|\geq \xi  \},\ for \ \xi>0.
\end{eqnarray}
Let $\epsilon_i,i=1,\dots,Q$ be independent Rademacher random variables, independent from everything else, and define the
mean empirical width of the set:
\begin{align}
    \label{mean empirical width_New}
    W_{QK}(E;\vb) = \E \sup_{\vu\in E}\<\vh ,\vu\>, \text{where} \ \vh =\frac{1}{\sqrt{QK}}\sum_{i=1}^Q\sum_{k=1}^K \epsilon_{i}\vb_{i,k}.
\end{align}

Then, for any $\xi >0$ and $t >0$
\begin{eqnarray}
    \label{Final_Conclusion_New}
    \hspace{-0.5cm}\inf_{\vu\in E}\bigg(\sum_{i=1}^Q\sum_{k=1}^K|\<\vb_{i,k}, \vu  \>|^2  \bigg)^{\frac{1}{2}} &\!\!\!\!\geq\!\!\!\!& \xi\sqrt{QK}H_{\xi}(E;\vb)\nonumber\\
    &\!\!\!\!\!\!\!\!&\hspace{-0.5cm} -2W_{QK}(E;\vb) -t \xi\sqrt{K},
\end{eqnarray}
with probability at least $1-e^{-\frac{t^2}{2}}$.

\end{lemma}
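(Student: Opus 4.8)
The plan is to adapt Mendelson's small ball method (\Cref{Small Ball Method_original}) to the setting where the $K$ columns inside a single copy are dependent, so that the genuinely independent units are the $Q$ blocks $\{\vb_{i,1},\dots,\vb_{i,K}\}$, $i=1,\dots,Q$. The starting point is a deterministic truncation inequality. For a fixed $\vu\in E$ let $N(\vu)=\sum_{i=1}^Q\sum_{k=1}^K\mathbf{1}[|\langle\vb_{i,k},\vu\rangle|\ge\xi]$ count the ``large'' measurements among all $QK$ of them. Since $\sum_{i,k}|\langle\vb_{i,k},\vu\rangle|^2\ge\xi^2 N(\vu)$ and $N(\vu)/(QK)\in[0,1]$, so that $\sqrt{N(\vu)/(QK)}\ge N(\vu)/(QK)$, we obtain
\begin{equation*}
\Big(\sum_{i,k}|\langle\vb_{i,k},\vu\rangle|^2\Big)^{1/2}\ \ge\ \xi\sqrt{QK}\,\cdot\,\frac{1}{QK}\sum_{i,k}\mathbf{1}[|\langle\vb_{i,k},\vu\rangle|\ge\xi].
\end{equation*}
Thus it suffices to lower bound the bounded empirical process $\hat g(\vu):=\frac{1}{QK}\sum_{i,k}\mathbf{1}[|\langle\vb_{i,k},\vu\rangle|\ge\xi]$ uniformly over $\vu\in E$, and then multiply through by $\xi\sqrt{QK}$.

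Next I would center this process. Because the $Q$ copies are identically distributed, $\E\hat g(\vu)=\frac1K\sum_{k=1}^K\P{|\langle\vb_k,\vu\rangle|\ge\xi}\ge H_{\xi}(E;\vb)$ by the definition of the marginal tail function, so $\inf_{\vu\in E}\hat g(\vu)\ge H_{\xi}(E;\vb)-D$ where $D:=\sup_{\vu\in E}\big(\E\hat g(\vu)-\hat g(\vu)\big)$. The whole problem reduces to a high-probability upper bound on the single random variable $D$.

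The two places where block dependence enters are the symmetrization and the concentration, and these produce exactly the two non-classical features of the statement. For $\E D$, I would symmetrize \emph{at the block level}: writing $\hat g$ as a block average of i.i.d.\ block-functionals and applying the symmetrization lemma introduces a \emph{single} Rademacher sign $\epsilon_i$ per block, which is precisely the object appearing in $\vh=\frac{1}{\sqrt{QK}}\sum_{i,k}\epsilon_i\vb_{i,k}$. Replacing the indicator by a $1$-Lipschitz ramp surrogate vanishing at the origin and invoking a contraction-type comparison then passes from the surrogate to the linear functionals $\langle\vb_{i,k},\vu\rangle$, yielding $\xi\sqrt{QK}\,\E D\le 2W_{QK}(E;\vb)$ (up to harmless constants, the surrogate may cost a fixed rescaling of $\xi$ inside $H$). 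For the fluctuation of $D$ about its mean, I would apply a bounded-difference (Azuma/McDiarmid) inequality over the $Q$ independent blocks: altering one block changes the block-average $\hat g(\vu)$ by at most $1/Q$ uniformly in $\vu$, so $\sum_{i=1}^Q c_i^2=1/Q$ and $D\le\E D+t/\sqrt Q$ with probability at least $1-e^{-t^2/2}$. Multiplying by the prefactor $\xi\sqrt{QK}$ turns $t/\sqrt Q$ into the stated deviation $t\xi\sqrt K$; the extra $\sqrt K$ relative to the classical $t\xi$ is exactly the price of the $K$-fold within-block dependence. Assembling the three bounds gives \eqref{Final_Conclusion_New}.

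The main obstacle is this within-block dependence. It is what forces symmetrization and concentration to be performed blockwise rather than measurement-wise, and thereby dictates both the single-sign-per-block form of $\vh$ in $W_{QK}$ and the $\sqrt K$ inflation of the deviation term. The most delicate step is relating the block-symmetrized (surrogate) indicator process to the \emph{linear} empirical width $W_{QK}$: the usual contraction principle is stated for per-term Rademacher signs, so carrying it out with one sign per block requires handling the aggregated block functionals with care rather than a term-by-term contraction.
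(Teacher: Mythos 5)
Your plan follows essentially the same route as the paper's proof: a pointwise truncation to the indicator process (the paper via Lyapunov plus Markov, you via $\xi^2 N(\vu)$ and $\sqrt{N/(QK)}\ge N/(QK)$, which is equivalent), then block-level Gin\'e--Zinn symmetrization with a single Rademacher sign per block, the soft-indicator surrogate $\phi_{\xi}$ combined with the Ledoux--Talagrand comparison principle (whose cost is precisely the rescaled tail function $H_{2\xi}$ that appears in the paper's final display, matching the rescaling of $\xi$ you anticipate), and a bounded-difference concentration over the $Q$ independent blocks that yields the $t\xi\sqrt{K}$ deviation term. Your proposal is correct and even isolates the same delicate point the paper handles most briskly, namely carrying out the contraction comparison with one shared sign per block rather than per-term signs.
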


The proof has been provided in {Appendix} \ref{Proof_Lemma_abs_small ball new}. Note that when $K = 1$, \Cref{Small Ball Method_abs_New} reduces to \Cref{Small Ball Method_original} (by setting $Q = K$ in \Cref{Small Ball Method_abs_New}). In other words, $Q$ plays the same role as $K$ in \Cref{Small Ball Method_original}.
To effectively apply the modified method, we need to generalize the linear map in \eqref{eq:linear map A}. With \Cref{Small Ball Method_abs_New}, we now establish the stable embedding of \eqref{The defi of population measurement in Q cases (K measurements)} in the following theorem.

\begin{theorem} [Stable embedding of multiple Haar random projective measurements]
\label{Small_Method_ROHaar_Measurement}
Let $\calA^{Q}:\C^{d^n\times d^n} \rightarrow \R^{KQ}$ be the linear mapping defined in \eqref{The defi of population measurement in Q cases (K measurements)} that is induced by $Q$ random unitary matrices. For any $K\ge 1$, assuming
\begin{eqnarray}
    \label{number of sample rank one haar}
    Q\gtrsim nd^2\ol{r}^2 \log n, \ \ol{r}=\max_{i=1,\dots n-1}r_i,
\end{eqnarray}
then with probability at least $1-e^{-\alpha_2Q}$ (where $\alpha_2$ is a positive constant.), $\calA^Q$ obeys
\begin{eqnarray}
    \label{Small_Method_ROHaar_Measurement_Conclusion_Theorem}
    \|\calA^Q(\vrho)\|_2  = \bigg(\sum_{i=1}^Q\sum_{k=1}^K |\<\vphi_{i,k} \vphi_{i,k}^\dagger, \vrho  \>|^2 \bigg)^{\frac{1}{2}}  \gtrsim\frac{\sqrt{QK}}{d^n}
\end{eqnarray}
for any $\vrho\in \ol\setX_{\ol{r}}$.
\end{theorem}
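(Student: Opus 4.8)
The plan is to invoke the modified small-ball method of \Cref{Small Ball Method_abs_New} with $E=\ol\setX_{\ol r}$, identifying each $\vb_{i,k}$ with the vectorized rank-one operator $\vphi_{i,k}\vphi_{i,k}^H$ so that $\innerprod{\vb_{i,k}}{\vrho}=\vphi_{i,k}^H\vrho\vphi_{i,k}$ and the left-hand side of \eqref{Final_Conclusion_New} is exactly $\|\calA^Q(\vrho)\|_2$. It then suffices to (i) lower bound the marginal tail function $H_\xi(\ol\setX_{\ol r};\vb)$ and (ii) upper bound the mean empirical width $W_{QK}(\ol\setX_{\ol r};\vb)$; a suitable choice of $\xi$ and $t$ will then yield the claim.

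For (i), I would use that each column $\vphi_{i,k}$ of a Haar unitary is marginally uniform on the unit sphere of $\C^{d^n}$, together with the Paley--Zygmund inequality as in \cite{tropp2015convex,KuengACHA17}. With $Z=|\vphi^H\vrho\vphi|^2$, the second moment of the spherical measure gives $\E Z=\frac{|\trace\vrho|^2+\|\vrho\|_F^2}{d^n(d^n+1)}\ge\frac{1}{d^n(d^n+1)}$ on $\ol\setX_{\ol r}$, while the corresponding fourth-moment (Weingarten) formula, combined with $\|\vrho\|_F=1$ and elementary norm inequalities, bounds $\E Z^2\lesssim d^{-4n}$. Hence $(\E Z)^2/\E Z^2\gtrsim 1$, so taking $\xi\asymp 1/d^n$ yields $H_\xi(\ol\setX_{\ol r};\vb)\gtrsim 1$; this is where the $1/d^n$ factor in the conclusion originates, the leading term of \eqref{Final_Conclusion_New} being $\xi\sqrt{QK}H_\xi\gtrsim\sqrt{QK}/d^n$.

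For (ii), after conditioning on the unitaries and applying Khintchine's inequality to the Rademacher sum $\vh=\frac{1}{\sqrt{QK}}\sum_i\epsilon_i\sum_k\vphi_{i,k}\vphi_{i,k}^H$, I would bound $\E\sup_{\vrho}\innerprod{\vh}{\vrho}$ by the $\epsilon$-net argument announced before \Cref{Small_Method_ROSubgaussian_Measurement}. Since an MPO of bond dimension $\ol r$ has $O(nd^2\ol r^2)$ parameters, $\ol\setX_{\ol r}$ admits a net of cardinality $\exp(O(nd^2\ol r^2\log(n\ol r)))$; combining the size of the net with the per-direction sub-Gaussian magnitude of $\innerprod{\vh}{\vrho}$ gives $W_{QK}\lesssim\frac{1}{d^n}\sqrt{nd^2\ol r^2\log(n\ol r)}$, and the concentration of the width over this net is what forces $Q\gtrsim nd^2\ol r^2\log n$. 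Choosing $t\asymp\sqrt Q$ then makes $t\xi\sqrt K$ a controllable fraction of the leading term and produces the probability $1-e^{-\alpha_2 Q}$.

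The main obstacle is the identity/trace direction. In contrast to the rank-$r$ setting, where $|\trace\vrho|\le\sqrt r\,\|\vrho\|_F$ is bounded, $\ol\setX_{\ol r}$ contains $\mId/\sqrt{d^n}$ (bond dimension one) with $|\trace\vrho|=\sqrt{d^n}$, so $\vh$ has a component along $\mId$ that does not decay with $Q$, and a naive width bound would then demand $Q\gtrsim d^n$. I would circumvent this by splitting $\ol\setX_{\ol r}$ according to $|\trace\vrho|$: when $|\trace\vrho|\gtrsim 1$ the conclusion holds \emph{directly}, since Cauchy--Schwarz within each block gives $\|\calA_i(\vrho)\|_2^2\ge\frac{1}{K}\bigl|\sum_{k=1}^K\vphi_{i,k}^H\vrho\vphi_{i,k}\bigr|^2$, and $\sum_{k=1}^K\vphi_{i,k}^H\vrho\vphi_{i,k}$ equals $\trace\vrho$ when $K=d^n$ (and concentrates near $\tfrac{K}{d^n}\trace\vrho$ otherwise), yielding $\|\calA^Q(\vrho)\|_2\gtrsim\frac{\sqrt{QK}}{d^n}|\trace\vrho|$. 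The small-ball estimate above is then applied only to the bounded-trace part of $\ol\setX_{\ol r}$, on which the dangerous width contribution is tamed, completing the argument.
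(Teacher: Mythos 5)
Your skeleton coincides with the paper's own proof (\Cref{ProofOfSmallBallMethod_For_Haar_Measurement}): it likewise invokes \Cref{Small Ball Method_abs_New} with $E=\ol\setX_{\ol{r}}$ and $\vb_{i,k}=\vphi_{i,k}\vphi_{i,k}^H$, lower-bounds $H_\xi$ via Paley--Zygmund with $\xi\asymp d^{-n}$ (the paper computes $\E[|\<\vphi_k\vphi_k^H,\vrho\>|^2]\ge\frac{d^n-1}{d^{2n}(d^n+1)}\|\vrho\|_F^2$ through the eigendecomposition of $\vrho$ and the Haar moments of \Cref{Stastitic property of UV} and \Cref{Unitary_matrix_Expectation}, matching your spherical-moment calculation up to constants), bounds $W$ by the same $\exp(O(nd^2\ol{r}^2\log n))$-net argument reused from \Cref{Small_Method_ROSubgaussian_Measurement}, and sets $t\asymp\sqrt{Q}$. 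Where you genuinely depart is the trace direction, and your concern there is well founded: the paper does \emph{not} split the set, but instead bounds $W$ uniformly over all of $\ol\setX_{\ol{r}}$ by claiming $\|\<\vphi_{i,k}\vphi_{i,k}^H,\vrho\>\|_{\psi_1}=O(d^{-n})$ in \eqref{eq:subexp-phiRho} and feeding this into \Cref{Concentration inequality of sum of dependent sub-exp sum}. That $\psi_1$ estimate is a centered (Hanson--Wright-type) bound applied to an uncentered quadratic form whose mean is $\trace(\vrho)/d^n$; on $\vrho=\mId/\sqrt{d^n}\in\ol\setX_{\ol{1}}$ one has $\vphi^H\vrho\vphi=d^{-n/2}$ deterministically, so $\|\<\vphi\vphi^H,\vrho\>\|_{\psi_1}\asymp d^{-n/2}\gg d^{-n}$, and evaluating $\<\vh,\pm\mId/\sqrt{d^n}\>$ gives $W\gtrsim\sqrt{K}/d^{n/2}$, which exceeds the width bound \eqref{ProofOfSmallBallMethod_For_Haar_Measurement_W_3} claimed in the paper and would swamp the leading term $\xi\sqrt{QK}H_\xi\asymp\sqrt{QK}/d^n$ unless $Q\gtrsim d^n$. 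Your repair---disposing of $|\trace\vrho|\gtrsim 1$ directly by Cauchy--Schwarz within each POVM block (exact when $K=d^n$, which is the regime actually used in \Cref{Statistical Error_Haar_Measurement}) and running the small-ball machinery only on the bounded-trace slice, where the mean component of $\<\mP_i,\vrho\>$ contributes only $O(\sqrt{K}/d^n)$ to the width---is a sound way to obtain the stated conclusion, and in this respect your argument is more careful than the paper's.

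Two caveats on your version. First, for $K<d^n$ the statement that $\sum_{k=1}^K\vphi_{i,k}^H\vrho\vphi_{i,k}$ concentrates near $\frac{K}{d^n}\trace\vrho$ must be made uniform over the (infinite) high-trace slice: this quantity is $\<\mP_i,\vrho\>$ for a Haar random rank-$K$ projector $\mP_i$, so you need either a net over that slice or a variance bound showing the fluctuation $O(\sqrt{K}/d^n)$ is dominated by the signal $\frac{K}{d^n}|\trace\vrho|$, which requires $K\gtrsim 1$ and some care at the splitting threshold. Second, your width bound $W_{QK}\lesssim\frac{1}{d^n}\sqrt{nd^2\ol{r}^2\log(n\ol{r})}$ omits the $\sqrt{K}$ factor that the paper's Tanoue-based tail produces; this is harmless for the theorem, since either normalization combined with $t\asymp\sqrt{Q}$ yields \eqref{Small_Method_ROHaar_Measurement_Conclusion_Theorem} under $Q\gtrsim nd^2\ol{r}^2\log n$, but you should verify which scaling your conditional-Khintchine step actually delivers before relying on the stronger form.
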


The proof is given in {Appendix} \ref{ProofOfSmallBallMethod_For_Haar_Measurement}. First note that in \Cref{Small_Method_ROHaar_Measurement}, the requirement on $Q$ in \eqref{number of sample rank one haar} and the failure probability $e^{-\alpha_3 Q}$ are similar to those in \Cref{Small_Method_ROSubgaussian_Measurement} on $K$. This is because, as we explained before, $Q$ in \Cref{Small Ball Method_abs_New} plays the same role as $K$ in \Cref{Small Ball Method_original}, and likewise $Q$ in \Cref{Small_Method_ROHaar_Measurement} is equivalent to $K$ in \eqref{number of sample rank one gaussian}. Thus, \Cref{Small_Method_ROHaar_Measurement} holds for any $K\ge 1$. On the other hand, without exploiting the randomness between different columns within a random unitary matrix, the number of POVMs $Q$ is required to be relatively large as stated in \eqref{number of sample rank one haar}. Considering that the local correlations between the columns in the unitary matrix are very weak because the orthogonality is a global property \cite{tropp2012comparison}, we conjecture that the requirement on $Q$ can be significantly reduced, even to $Q=1$. Indeed, according to \cite[Theorem 3]{jiang2006many}, when $n\to \infty$, in an ``in probability"
sense, all elements (scaled by $\sqrt{d^n}$) of $o(\frac{d^{n}}{n\log d})$ columns in a Haar-distributed random unitary matrix can be approximated by entries generated independently from  a standard complex normal distribution. As $o(\frac{d^{n}}{n\log d})$ independent columns from a multivariate complex normal distribution are sufficient for \Cref{Small_Method_ROSubgaussian_Measurement}, this suggests that it is highly possible to ensure stable embedding \eqref{Small_Method_ROHaar_Measurement_Conclusion_Theorem} with a single POVM $Q = 1$. While we leave a formal analysis as future work, we conduct a numerical experiment to support this conjecture. Set $d=2,Q =1, K = d^n, r_1=\cdots = r_{n-1} =2$. Then for each $n$, we randomly generate a unitary matrix (i.e., $Q = 1$), randomly sample many  MPOs $\vrho$  with $\|\vrho\|_F=1$, and compute the minimum of $\|\calA^Q(\vrho)\|_2$ among all the generated MPOs.
In Figure~\ref{Test of Theorem 4}, we  compare the minimum of $\|\calA^Q(\vrho)\|_2$ (averaged over $50$ Monte Carlo trails) with $\frac{1}{\sqrt{d^n}}$. We observe that $\|\calA^Q(\vrho)\|_2$ is of the same order as $\frac{1}{\sqrt{d^n}}$. Furthermore, as the number of qudits increases, $\|\calA^Q(\vrho)\|_2$ approaches $\frac{1}{\sqrt{d^n}}$. This is consistent with \eqref{Small_Method_ROHaar_Measurement_Conclusion_Theorem}, where the right hand side becomes $\frac{1}{\sqrt{d^n}}$ when $K = d^n, Q = 1$.

\begin{figure}[thbp]
\centering
\includegraphics[width=8.5cm, keepaspectratio]%
{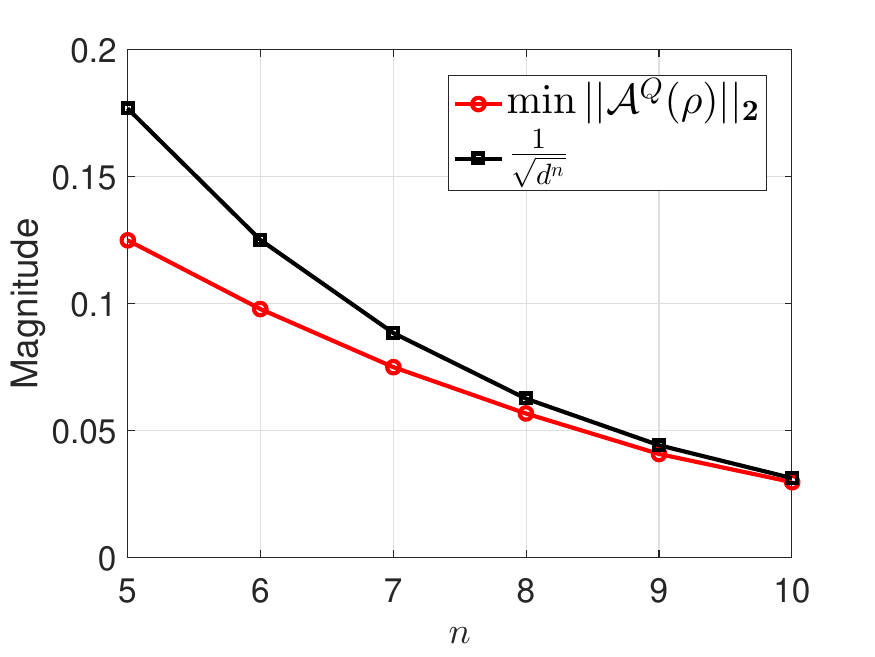}
\caption{Numerical computation of $\min_{\vrho\in\ol{\mathbb{X}}_{\ol{r}}}\|\calA^Q(\vrho)\|_2$ with $Q = 1$ and $K = d^n$.}
\label{Test of Theorem 4}
\end{figure}

\section{Stable recovery with empirical measurements}
\label{sec:stable recovery}

The results of Section~\ref{sec:rankOnePOVMpop} ensure a distinct set of population measurements $\calA^Q(\vrho)$ for any ground-truth MPO $\vrho^\star$ under multiple Haar random projective measurements. Based on these results, in this section, we study the stable recovery of $\vrho$ from empirical measurements obtained by multiple Haar random  projective measurements.
With $Q$ randomly generated Haar-distributed unitary matrices $\begin{bmatrix}\vphi_{i,1} & \cdots & \vphi_{i,d^n}\end{bmatrix}, \ i = 1,\ldots, Q$, according to \eqref{The defi of population measurement in Q cases (K measurements)}, we can generate $Qd^n$ population measurements through the linear measurement operator $ \calA^Q:\C^{d^n\times d^n} \rightarrow \R^{Qd^n}$ (set $K = d^n$ in \eqref{The defi of population measurement in Q cases (K measurements)}) as
\begin{eqnarray}
\label{The defi of population measurement in Q cases sec:4}
{{\bm p}^Q}= \calA^Q(\vrho^\star) = \begin{bmatrix}
          {\bm p}_{1} \\
          \vdots \\
          {\bm p}_{Q}
        \end{bmatrix} = \begin{bmatrix}
          \calA_1(\vrho^\star) \\
          \vdots \\
          \calA_Q(\vrho^\star)
        \end{bmatrix},
\end{eqnarray}
where $\calA_i$ is as defined in~\eqref{The defi of POVM element in Q cases (K measurements)} with $K = d^n$ and with $\mA_{i,k} = \vphi_{i,k} \vphi_{i,k}^\dagger$.
Denote by $p_{i,k}$ the $k$-th element in $\vp_i$.

For each POVM, suppose we repeat the measurement process $M$ times and take the average of the outcomes to generate empirical probabilities
\begin{align}
\wh p_{i,k} = \frac{f_{i,k}}{M}, \ i=1,\dots,Q, \ k=1,\dots,d^n,
\label{eq:empirical-prob in Q cases sec:4}\end{align}
where $f_{i,k}$ denotes the number of times the $k$-th output is observed when using the $i$-th POVM $M$ times. Denote by ${\widehat{\bm p}_i}= \begin{bmatrix}
          \widehat{p}_{i,1} & \cdots &
          \widehat{p}_{i,d^n}
        \end{bmatrix}^\top$ the empirical measurements obtained by the $i$-th POVM and stack all the total empirical measurements together as
${\widehat{\bm p}^Q}= \begin{bmatrix}
          \widehat{\bm p}_{1}^\top & \cdots &
          \widehat{\bm p}_{Q}^\top
        \end{bmatrix}^\top$, which are unbiased estimators of the population measurements $\vp^Q$.
We denote by $\veta$ the measurement error as
\begin{eqnarray}
    \label{Noise_Haar Measurement restate1}
    \veta = \wh{\vp}^Q -  \vp^Q = \wh{\vp}^Q - \calA^Q(\vrho^\star) =  \begin{bmatrix}
          \veta_{1}^\top,
          \cdots,
          \veta_{Q}^\top\end{bmatrix}^\top,
\end{eqnarray}
where $\eta_{i,k}$ is the $k$-th element in $\veta_i$.

With empirical measurements $\wh \vp^Q$, for simplicity, we consider minimizing the following constrained least squares objective:
\begin{eqnarray}
    \label{The loss function in QST}
    \wh{\vrho} = \argmin_{\vrho\in\setX_{\ol{r}}}\|\calA^Q(\vrho) - {\widehat{\bm p}^Q}\|_2^2,
\end{eqnarray}
where $\calA^Q$ is the induced linear map as defined in \eqref{The defi of population measurement in Q cases sec:4}. Supposing one can find a  global solution of \eqref{The loss function in QST}, our goal is to study how the recovery error $\|\wh{\vrho}  -  \vrho^\star\|_F$ scales with the size of the MPO (particularly with respect to the number of qudits $n$) and the total number of measurements $QM$. To enable a stable estimate of the state by measuring it only a polynomial number of times in terms of $n$, we desire the recovery error to grow only polynomially rather than exponentially in $n$.

\subsection{Challenge: Abundant but extremely noisy measurements}
\label{sec:challenge}
Before presenting the main result, we first discuss the challenge and hope of obtaining a recovery error that only grows polynomially in terms of the number of qudits. Recall that $(f_{i,1},\ldots,f_{i,d^n})$ in \eqref{eq:empirical-prob in Q cases sec:4} follows a multinomial distribution $\operatorname{Multinomial}(M, \vp_i)$  with parameters $M$ and $\vp_i$. Thus, $\wh\vp_i - \vp_i$ has mean zero and covariance matrix $\mSigma_i$, where $\mSigma_i$ has elements given by ${\mSigma_i}[l,j] = \begin{cases} \frac{p_{i,l}(1-p_{i,l})}{M}, & l=j \\ -\frac{p_{i,l}p_{i,j}}{M}, & l\neq j \end{cases}$. With this observation, we have
\begin{align}
    \label{Expectation of noise square}
  \E \|\veta\|_2^2 =  \E \bigg[\sum_{i=1}^Q\sum_{k=1}^{d^n}  \eta_{i,k}^2\bigg]  = \sum_{i=1}^Q\sum_{k=1}^{d^n} \frac{p_{i,k} (1 - p_{i,k})}{M}\leq \frac{Q}{M}.
\end{align}
Note that $\frac{p_{i,k} (1 - p_{i,k})}{M}$ could be as small as 0 which can be achieved, although rarely, when $p_{i,k}\in\{0,1\}$ (i.e., when $\{p_{i,k},k =1,\ldots,d^n\}$ has a spiky distribution). However, the above bound on the order of $\frac{Q}{M}$ is tight when the distribution of $\{p_{i,k},k =1,\ldots,d^n\}$ is not spiky (e.g., when each $p_{i,k}$ is on the order of $\frac{1}{d^n}$). To see this, denote the eigenvalue decomposition of $\vrho^\star$ as $\vrho^\star=\sum_{i=1}^{d^n}\lambda_i \vu_i \vu_i^\dagger$, where $\sum_{i=1}^{d^n} \lambda_i=1$.
Now for any $i$ and $k$, we can compute $\E[p_{i,k}^2]$ as
\begin{eqnarray}
    \label{Unitary Distrubuted_2 another}
    &\!\!\!\!\!\!\!\!&\E[|\< {\vphi_{i,k}}{\vphi^\dagger_{i,k}},\vrho^\star \>|^2]\nonumber\\
    &\!\!\!\!=\!\!\!\!& \sum_{j=1}^{d^n}\sum_{l=1}^{d^n}\lambda_j\lambda_l \E[|{\vphi^\dagger_{i,k}} \vu_j|^2|{\vphi^\dagger_{i,k}} \vu_l|^2]\nonumber\\
    &\!\!\!\!=\!\!\!\!& \sum_{l\neq j}\lambda_j\lambda_l(\E[| {\vphi_{i,k}}[1]|^2])^2 + \sum_{l} \lambda_l^2\E[| {\vphi_{i,k}}[1]|^4]\nonumber\\
    &\!\!\!\!=\!\!\!\!& \sum_{l\neq j}\frac{\lambda_j\lambda_l}{d^{2n}} + 2\sum_{l}\frac{\lambda_l^2}{d^n(d^n+1)}\nonumber\\
    &\!\!\!\!=\!\!\!\!& \sum_{j=1}^{d^n}\sum_{l=1}^{d^n}\frac{\lambda_j\lambda_l}{d^{2n}} + \sum_{l=1}^{d^n}\frac{d^n-1}{d^{2n}(d^n+1)}\lambda_l^2\nonumber\\
    &\!\!\!\! = \!\!\!\!&\frac{1}{d^{2n}} + \frac{d^n-1}{d^{2n}(d^n+1)}\|\vrho^\star\|_F^2,
\end{eqnarray}
where ${\vphi_{i,k}}[1]$ is the first element of ${\vphi_{i,k}}$,  the second line utilizes the rotation invariance of the unitary matrix in \Cref{Stastitic property of UV}, and the third line uses \Cref{Unitary_matrix_Expectation}.

Noting that $\|\vrho^\star\|_F^2 \le (\sum_{i=1}^{d^n} \lambda_i)^2 = 1$, we further have
\e\label{eq:2nd-moment-pk}
\frac{1}{d^{2n}} \le \E[p_{i,k}^2] \le \frac{2}{d^{2n}}, \ \forall 1\le i\le Q \ \ \text{and} \ \ \forall 1\le k\le d^n.
\ee
In other words, if $\begin{bmatrix}\vphi_{i,1} & \cdots & \vphi_{i,d^n}\end{bmatrix}$ is a randomly generated unitary matrix, then each $p_{i,k}$ has the same second moment of order $1/d^{2n}$. This suggests that the distribution of $\{p_{i,k},k =1,\ldots,d^n\}$ is more uniform than spiky.

In addition, \eqref{eq:2nd-moment-pk} also gives the energy of the clean measurements or population measurements as
\begin{align}
    \label{Expectation of clean energy square}
    \frac{Q}{d^n} \le \E  \bigg[\sum_{i=1}^Q\sum_{k=1}^{d^n}  p_{i,k}^2\bigg]  = \sum_{i=1}^Q\sum_{k=1}^{d^n}\E \< \vphi_{i,k}\vphi_{i,k}^\dagger, \vrho^\star  \>^2\leq \frac{2Q}{d^n}.
\end{align}
To summarize, the above discussion gives the following comparison between the energy of the clean measurements and the noise in the measurements:
\begin{align*}
\text{Clean measurements: } & \E{\|\vp^Q\|_2^2} = O\parans{ \frac{Q}{d^n} },\\
\text{Statistical error: } & \E \|\veta\|_2^2 =  O\left(\frac{Q}{M}\right),
\end{align*}
which indicates that the statistical error or measurement noise is exponentially larger than the clean measurements. This seems to suggest that $M$ has to be on the order of $d^n$ to obtain measurements with suitable signal-to-noise ratio for stable recovery.

Fortunately, though each measurement could be extremely noisy, we have an exponentially large number of such measurements $\{\wh p_{i,1},\ldots,\wh p_{i,d^n}\}_i$, from $Q$ POVMs.
This setting is slightly different from some common inverse problems \cite{beck2009fast,chandrasekaran2012convex,tropp2015convex}, where the number of measurements matches the number of degrees of freedom behind the underlying signal but the measurements are not overwhelmed by noise. In addition, conditioned on the selected POVM, the measurement noise $\veta$ is random and behaves close to a multivariate Gaussian distribution~\cite{carter2002deficiency,muirhead2009aspects,ouimet2021precise}. By exploiting these observations together with the stable embeddings established in the last section, we anticipate stable recovery even when $M$ is only polynomially large in $n$.

\subsection{Stable recovery with empirical measurements}
We now provide a formal analysis of the recovery error $\|\wh{\vrho}  -  \vrho^\star\|_F$, where  $\wh{\vrho}$ is a global solution of \eqref{The loss function in QST}. Using \eqref{The loss function in QST} and the fact that $\vrho^\star \in \setX_{\bar{r}}$, we have
\begin{eqnarray}
    \label{whrho and rho star relationship}
    \hspace{-0.1cm}0 &\!\!\!\!\!\leq\!\!\!\!\!& \|\calA^Q(\vrho^\star) - \wh{\vp}^Q \|_2^2  - \|\calA^Q(\wh{\vrho}) - \wh{\vp}^Q \|_2^2\nonumber\\
&\!\!\!\!\!=\!\!\!\!\!&\|\calA^Q(\vrho^\star)-\calA^Q(\vrho^\star) - \veta\|_2^2 - \|\calA^Q(\wh{\vrho})-\calA^Q(\vrho^\star) - \veta\|_2^2\nonumber\\
&\!\!\!\!\!=\!\!\!\!\!& 2\<\calA^Q(\vrho^\star)+\veta, \calA^Q(\wh{\vrho} - \vrho^\star)  \> + \|\calA^Q(\vrho^\star)\|_2^2 - \|\calA^Q(\wh{\vrho})\|_2^2\nonumber\\
&\!\!\!\!\!=\!\!\!\!\!& 2\<  \veta, \calA^Q(\wh{\vrho} - \vrho^\star) \> - \|\calA^Q(\wh{\vrho} - \vrho^\star)\|_2^2,
\end{eqnarray}
which further implies that
\begin{eqnarray}
    \label{whrho and rho^star relationship_1}
    \|\calA^Q(\wh{\vrho} - \vrho^\star)\|_2^2 \leq 2\<  \veta, \calA^Q(\wh{\vrho} - \vrho^\star) \>.
\end{eqnarray}
The left-hand side of the above equation can be further lower bounded by order of $\frac{Q}{d^n} \|\wh{\vrho} - \vrho^\star\|_F^2$ according to \Cref{Small_Method_ROHaar_Measurement}. The challenging part is to deal with the right-hand side of \eqref{whrho and rho^star relationship_1}. A simple Cauchy–Schwarz inequality  $\<  \veta, \calA^Q(\wh{\vrho} - \vrho^\star) \> \le \|\veta\|_2 \cdot \|\calA^Q(\wh{\vrho} - \vrho^\star)\|_2$ is insufficient to provide a tight result since $\|\veta\|_2$ scales as $\frac{1}{\sqrt{M}}$ as discussed after \eqref{Expectation of noise square}.
Instead, we exploit the randomness of $\veta$ and use the following concentration bound for multinomial random variables, which is proved in \Cref{General bound of multinomial distribution Q cases} of {Appendix} \ref{sec:app-aux} and is derived based on \cite{kawaguchi2022robustness}.

\begin{lemma}
\label{General bound of multinomial distribution Q cases1}
Suppose $\{(f_{i,k},\dots, f_{i,K})\},i=1,\dots,Q$ are mutually independent and follow the multinomial distribution \\ $\operatorname{Multinomial}(M,\vp_i)$  where $\sum_{k=1}^{K}f_{i,k} =M $ and $\vp_i = [p_{i,1}, \cdots, p_{i,K}]$.
Let $a_{i,1},\dots, a_{i,K}$ be fixed. Then, for any $t>0$,
\begin{align}
    \label{General bound of multinomial distribution for all constant Q cases1}
    &\P{\sum_{i=1}^Q\sum_{k=1}^Ka_{i,k}(\frac{f_{i,k}}{M} - p_{i,k}) > t   }\nonumber\\
    &\leq  e^{-\frac{Mt}{4a_{\max}}\! \min\bigg\{\! 1, \frac{a_{\max}t }{4\sum_{i=1}^Q\sum_{k=1}^K a_{i,k}^2p_{i,k}} \! \bigg\}}\! + \! e^{-\frac{Mt^2 }{8\sum_{i=1}^Q\sum_{k=1}^K a_{i,k}^2p_{i,k}}},
\end{align}
where $a_{\max} = \max_{i,k}|a_{i,k}|$.
\end{lemma}
One may not be able to directly apply the above result for
$\<  \veta, \calA^Q(\wh{\vrho} - \vrho^\star) \>$ since $\wh \vrho$ depends on $\veta$. We address this issue by using the covering argument to bound $\<  \veta, \calA^Q({\vrho} - \vrho^\star) \>$ for all possible $\vrho$. We refer to {Appendix} \ref{Proof of Statistical error in Haar} for the detailed analysis. We now summarize the main result as follows.
\begin{theorem}
\label{Statistical Error_Haar_Measurement}
Given an MPO state $\vrho^\star\in\C^{d^n \times d^n}$ of the form \eqref{DefOfMPO} with MPO ranks $\vr$, independently generate $Q$ Haar-distributed random unitary matrices $\begin{bmatrix}  \vphi_{i,1} & \cdots & \vphi_{i,d^n}\end{bmatrix}, i=1,\dots,Q$. Use each induced rank-one POVM $\{\vphi_{i,k}\vphi_{i,k}^\dagger\}_{k=1}^{d^n}$ to measure the state $M$ times and get the empirical measurements $\wh \vp_i$. For any $\epsilon>0$, suppose  $Q\gtrsim nd^2\ol{r}^2(\log n)$ and
\begin{align}
    \label{Statistical Error_Haar_Measurement_Conclusion_Theorem_M}
    QM \gtrsim \frac{n d^2\ol{r}^2\log n (\log Q+n\log d)^2}{\epsilon^2}, \! \quad \ol{r}=\max_{i=1,\dots n-1}r_i.
\end{align}
Then any global solution $\wh \vrho$ of \eqref{The loss function in QST} satisfies
\begin{eqnarray}
    \label{Statistical Error_Haar_Measurement_Conclusion_Theorem}
    \|\wh{\vrho}-\vrho^\star\|_F \le \epsilon
\end{eqnarray}
with probability at least $\min\{1-e^{-\alpha_3 (\log Q + n\log d)} - e^{-\alpha_4 \nqbit d^2 \ol{r}^2 \log n}, 1-e^{-\alpha_2Q} \}$, where $\alpha_3 $ and $\alpha_4$ are positive constants,  $\alpha_2$  corresponds to constants of the probability in \Cref{Small_Method_ROHaar_Measurement}.
\end{theorem}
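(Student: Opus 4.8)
The plan is to combine the stable-embedding lower bound of \Cref{Small_Method_ROHaar_Measurement} with a uniform upper bound on the noise term, the latter obtained from the multinomial concentration inequality of \Cref{General bound of multinomial distribution Q cases1} together with a covering argument. The starting point is the basic inequality \eqref{whrho and rho^star relationship_1},
$$\|\calA^Q(\wh{\vrho} - \vrho^\star)\|_2^2 \le 2\innerprod{\veta}{\calA^Q(\wh{\vrho} - \vrho^\star)},$$
which is already derived in the excerpt. Writing $\mtx{\Delta} = \wh{\vrho} - \vrho^\star$ and noting from \eqref{eq:sum of MPOs} that $\mtx{\Delta}$ is an MPO of rank at most $2\ol r$, I would first apply \Cref{Small_Method_ROHaar_Measurement} with maximum rank $2\ol r$ (which is exactly why the hypothesis $Q\gtrsim nd^2\ol r^2\log n$ is the right scale) to lower bound the left-hand side by $c\,\tfrac{Q}{d^n}\|\mtx{\Delta}\|_F^2$. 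The whole difficulty is thereby transferred to producing a matching upper bound $\innerprod{\veta}{\calA^Q(\mtx{\Delta})}\lesssim \tfrac{Q}{d^n}\|\mtx{\Delta}\|_F\,\epsilon$ for the right-hand side.

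The hard part is bounding $\innerprod{\veta}{\calA^Q(\mtx{\Delta})}$, and as flagged in \Cref{sec:challenge} a Cauchy--Schwarz split is hopeless because $\|\veta\|_2 = \Theta(\sqrt{Q/M})$ is exponentially larger than the clean signal energy $\Theta(\sqrt{Q/d^n})$. Instead I would exploit that $\veta$ is a centered multinomial deviation: after normalizing $\ol{\mtx{\Delta}} = \mtx{\Delta}/\|\mtx{\Delta}\|_F\in\ol\setX_{2\ol r}$, the quantity $\innerprod{\veta}{\calA^Q(\ol{\mtx{\Delta}})} = \sum_{i,k}\bigl(\tfrac{f_{i,k}}{M} - p_{i,k}\bigr)a_{i,k}$ with the fixed coefficients $a_{i,k} = \innerprod{\vphi_{i,k}\vphi_{i,k}^H}{\ol{\mtx{\Delta}}}$ is exactly of the form treated by \Cref{General bound of multinomial distribution Q cases1}. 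To invoke that lemma I must control $a_{\max} = \max_{i,k}|a_{i,k}|$ and the variance proxy $V = \sum_{i,k}a_{i,k}^2 p_{i,k}$. Conditioning on the unitaries, concentration of the Haar quadratic forms $\vphi_{i,k}^H\ol{\mtx{\Delta}}\vphi_{i,k}$ and $p_{i,k} = \vphi_{i,k}^H\vrho^\star\vphi_{i,k}$ about their $O(1/d^n)$ means (cf.\ \eqref{eq:2nd-moment-pk}), plus a union bound over the $Qd^n$ outcomes, should give $a_{\max},\,p_{\max}\lesssim \tfrac{\log Q + n\log d}{d^n}$; combined with the energy estimate $\sum_{i,k}a_{i,k}^2 = \|\calA^Q(\ol{\mtx{\Delta}})\|_2^2\lesssim \tfrac{Q}{d^n}$ (the upper companion to \Cref{Small_Method_ROHaar_Measurement}, consistent with \eqref{Expectation of clean energy square}), this yields $V\lesssim \tfrac{Q(\log Q + n\log d)}{d^{2n}}$. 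Setting the threshold $t^\star\asymp \tfrac{Q}{d^n}\epsilon$ and checking that $t^\star < 4V/a_{\max}$ so that the lemma operates in its sub-Gaussian regime $\min\{1,\tfrac{a_{\max}t^\star}{4V}\} = \tfrac{a_{\max}t^\star}{4V}$, the tail exponent becomes $\asymp \tfrac{M(t^\star)^2}{V}\asymp \tfrac{MQ\epsilon^2}{\log Q + n\log d}$.

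Since $\wh{\vrho}$, and hence $\ol{\mtx{\Delta}}$, depends on $\veta$, the per-point estimate cannot be applied directly, so I would discretize $\ol\setX_{2\ol r}$ by a $\delta$-net of log-cardinality $O(nd^2\ol r^2\log(\,\cdot/\delta))$ (the MPO carries $O(nd^2\ol r^2)$ parameters), take a union bound of the tail estimate over the net, and extend to all of $\ol\setX_{2\ol r}$ through the crude Lipschitz bound $|\innerprod{\veta}{\calA^Q(\ol{\mtx{\Delta}} - \ol{\mtx{\Delta}}')}|\le \|\veta\|_2\,\|\calA^Q\|\,\|\ol{\mtx{\Delta}} - \ol{\mtx{\Delta}}'\|_F$. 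Because this Lipschitz constant is at most polynomially-times-exponentially large, choosing $\delta$ exponentially small makes the discretization error negligible relative to $t^\star$ while inflating the entropy only by an $O(\log Q + n\log d)$ factor. Demanding that the tail exponent dominate the net entropy plus the failure budget, i.e.\ $\tfrac{MQ\epsilon^2}{\log Q + n\log d}\gtrsim nd^2\ol r^2\log n\,(\log Q + n\log d)$, reproduces the sample-complexity condition \eqref{Statistical Error_Haar_Measurement_Conclusion_Theorem_M}. Substituting the resulting two bounds into the basic inequality gives $c\,\tfrac{Q}{d^n}\|\mtx{\Delta}\|_F^2 \le 2\|\mtx{\Delta}\|_F\cdot C\,\tfrac{Q}{d^n}\epsilon$, whence $\|\wh{\vrho} - \vrho^\star\|_F\lesssim\epsilon$, and the failure probability is the union of the embedding event of \Cref{Small_Method_ROHaar_Measurement} and the concentration events.

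I expect the main obstacle to be the third step: simultaneously (i) proving that the Haar quadratic forms concentrate sharply enough that both $a_{\max}$ and $V$ inherit the decisive $1/d^n$ (respectively $1/d^{2n}$) suppression \emph{uniformly} over the net, which is what replaces the failed Cauchy--Schwarz bound, and (ii) bookkeeping the logarithmic factors so that the net resolution and the union bounds combine into the stated $\log n\,(\log Q + n\log d)^2$ rather than a larger power. The delicate tension is that the net must be fine enough for the continuity estimate to control the $\veta$-dependence of $\wh{\vrho}$, yet coarse enough that its entropy never overwhelms the concentration exponent.
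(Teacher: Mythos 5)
Your proposal tracks the paper's own proof almost step for step: the same basic inequality \eqref{whrho and rho^star relationship_1}, the same lower bound $\|\calA^Q(\wh{\vrho}-\vrho^\star)\|_2^2\gtrsim \frac{Q}{d^n}\|\wh{\vrho}-\vrho^\star\|_F^2$ from \Cref{Small_Method_ROHaar_Measurement}, the same reduction of the cross term to a supremum of $\sum_{i,k}\eta_{i,k}\vphi_{i,k}^H\vrho\,\vphi_{i,k}$ over unit-norm MPOs, the same conditioning on the event $\max_{i,k}|\vphi_{i,k}^H\vrho\,\vphi_{i,k}|\lesssim (\log Q+n\log d)/d^n$ (subexponentiality of Haar quadratic forms plus a union bound over the $Qd^n$ outcomes), the same multinomial tail bound \Cref{General bound of multinomial distribution Q cases1} applied to fixed coefficients $a_{i,k}$, a covering argument with entropy $O(nd^2\ol r^2\log n)$, and the same final arithmetic leading to \eqref{Statistical Error_Haar_Measurement_Conclusion_Theorem_M}. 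You are in fact more careful than the paper on one point (tracking that $\wh{\vrho}-\vrho^\star$ has MPO rank at most $2\ol r$, which the paper absorbs into constants), and your net mechanics differ harmlessly: the paper uses a coarse $\tfrac{1}{2n}$-net over the MPO factors together with the telescoping identity of \Cref{EXPANSION_A1TOAN-B1TOBN_1} to self-bound the supremum by twice its value on the net, whereas you use an exponentially fine net plus a crude Lipschitz extension; both work and give comparable entropy.

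The one step that does not stand as written is your variance proxy. You bound $V=\sum_{i,k}a_{i,k}^2 p_{i,k}\le p_{\max}\sum_{i,k}a_{i,k}^2$ and then assert $\sum_{i,k}a_{i,k}^2=\|\calA^Q(\ol{\mtx{\Delta}})\|_2^2\lesssim Q/d^n$ \emph{uniformly} over the net, calling this "the upper companion to \Cref{Small_Method_ROHaar_Measurement}." No such upper embedding is proved anywhere in the paper, and it is not free: rank-one ensembles are exactly the setting where two-sided RIP fails \cite{zhong2015efficient}, \eqref{Expectation of clean energy square} controls only the expectation at a fixed matrix, and under your own conditioning event the pointwise bound yields only $\sum_{i,k}a_{i,k}^2\le Qd^n\,a_{\max}^2\lesssim Q(\log Q+n\log d)^2/d^n$, hence $V\lesssim Q(\log Q+n\log d)^3/d^{2n}$ --- a \emph{worse} log power than you claim. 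The paper sidesteps the issue with a simpler observation you missed: $\sum_{i,k}a_{i,k}^2 p_{i,k}\le a_{\max}^2\sum_{i,k}p_{i,k}=a_{\max}^2 Q\lesssim Q(\log Q+n\log d)^2/d^{2n}$, since the outcome probabilities of each POVM sum to one. Substituting this into your computation gives the tail exponent $MQ\epsilon^2/(\log Q+n\log d)^2$, which, matched against the net entropy, reproduces \eqref{Statistical Error_Haar_Measurement_Conclusion_Theorem_M} exactly --- so the repair costs you nothing, and it also removes your need for the $p_{\max}$ bound and cleans up the slightly muddled bookkeeping by which you inflated the entropy by an extra $(\log Q+n\log d)$ factor to land on the stated condition.
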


\Cref{Statistical Error_Haar_Measurement} ensures a stable recovery of the ground-truth state when the total number of state copies $QM$ only grows polynomially ($n^3$) in terms of the number of qudits $n$, as the order specified in \eqref{Statistical Error_Haar_Measurement_Conclusion_Theorem_M}. If we ignore the $\log Q$ term, which exists due a to proof artifact and which we conjecture can be removed, then \eqref{Statistical Error_Haar_Measurement_Conclusion_Theorem_M} only requires $QM$ to be sufficiently large, without any requirement on the number of measuring times $M$ for each POVM. In other words,  \Cref{Statistical Error_Haar_Measurement} provides theoretical support for the practical use of
single-shot measurements (i.e., $M = 1$ where each POVM is measured only once) that are used in \cite{wang2020scalable,huang2020predicting}. Note that the orders of the polynomial in \eqref{Statistical Error_Haar_Measurement_Conclusion_Theorem_M}, particularly in terms of $n$, are fairly large compared to the number $O(nd^2\ol{r}^2)$ of degrees of freedom of the MPO and may not be optimal. For this reason, we conjecture that the bound in \eqref{Statistical Error_Haar_Measurement_Conclusion_Theorem_M} could be further improved, such as by removing the term $(\log Q+n\log d)^2$ that extends the bound beyond the number $O(nd^2\ol{r}^2)$ of degrees of freedom of the MPO. We refer to \Cref{sec: Discussion and conclusion} for additional detailed discussion.

The requirement $Q\gtrsim nd^2\ol{r}^2\log n$ and failure probability $e^{-\alpha_2Q}$ are inherited from \Cref{Small_Method_ROHaar_Measurement} for a stable embedding via the $Q$ POVMs $\{\vphi_{i,k} \vphi_{i,k}^\dagger\}$. As discussed right after \Cref{Small_Method_ROHaar_Measurement}, we conjecture that \Cref{Small_Method_ROHaar_Measurement} holds with $Q = 1$ by setting $K = d^n$. If this is the case, then the requirement $Q\gtrsim nd^2\ol{r}^2\log n$ can also be dropped, and \Cref{Statistical Error_Haar_Measurement} would also hold for $Q =1$. In the next section, we will use experiments to demonstrate that a single POVM is sufficient to stably recover  $\vrho^\star$.

Based on the stable embedding results in \Cref{Small_Method_ROHaar_Measurement}, the recovery guarantee \eqref{Statistical Error_Haar_Measurement_Conclusion_Theorem} is established in the Frobenius norm instead of the trace norm.
However, if the MPO state $\vrho^\star$ has a further low matrix rank, we can also establish recovery guarantee in the trace norm by using a bound between trace distance and Hilbert-Schmidt distance for low-rank states \cite{coles2019strong}, i.e.,  $\|\wh\vrho - \vrho^\star\|_1\leq 2\sqrt{\text{rank}(\vrho^\star)}\| \wh\vrho - \vrho^\star \|_F$, which is also used in \cite{haah2017sample} for obtaining guarantees in trace norm for low-rank states. While this approach provides a vacuous bound when the matrix rank is high, we conjecture \Cref{Statistical Error_Haar_Measurement}  can be extended for the trace norm, regardless of the matrix rank of $\vrho^\star$, by directly analyzing it, but we leave this as future work.

Finally, we note that while the solution $\wh{\vrho}$ of \eqref{The loss function in QST} may be non-physical, we can impose additional constraints to obtain a physical state without sacrificing the recovery guarantee in \Cref{Statistical Error_Haar_Measurement}. Specifically, let $\vrho^\diamond$ be the global solution to the following minimization problem with additional PSD and trace constraints:
\begin{eqnarray}
    \label{The loss function in QST-PSD}
    \vrho^\diamond = \argmin_{\vrho\in\setX_{\ol{r}}, \vrho\succeq \vzero, \trace(\vrho) = 1}\|\calA^Q(\vrho) - {\widehat{\bm p}^Q}\|_2^2.
\end{eqnarray}
Then $\vrho^\diamond$ has the same guarantee as $\wh \vrho$ under the same setup as \Cref{Statistical Error_Haar_Measurement}. Alternatively, we can simply project $\wh \vrho$ onto the set of physical states $\setS_+:=\{\vrho\in\C^{d^n\times d^n}: \vrho \succeq \vzero, \trace(\vrho) = 1\}$. Denote by $P_{\setS_+}$ the projection onto the set $\setS_+$, which can be efficiently computed by projecting the eigenvalues onto a simplex \cite{condat2016fast}. Since the set $\setS_+$ is convex, the corresponding projector is non-expansive and hence
\[
\|P_{\setS_+}(\wh{\vrho} )-\vrho^\star\|_F = \|P_{\setS_+}(\wh{\vrho} )-P_{\setS_+}(\vrho^\star)\|_F \le \|\wh{\vrho}-\vrho^\star\|_F \le \epsilon,
\]
which implies that the projection step ensures that the state becomes physically valid while preserving or even improving the recovery guarantee.

\section{Numerical Experiments}
\label{Numerical Experiments}

In this section, we perform numerical experiments on quantum state tomography for MPOs to illustrate our theoretical results. Due to computational constraints, we conduct experiments on real matrix product states (MPSs, which are pure states) of the form $\vrho^\star = \vu^\star {\vu^\star }^\top$, where $\vu^\star \in \R^{d^n\times 1}$ satisfies $\|\vu^\star\|_2=1$ and its $(i_1\cdots i_n)$-element can be represented in a matrix product form similar to the MPO form in~\eqref{DefOfMPO}:
\[
\vu^\star(i_1\cdots i_n) = {\mU_1^\star}^{i_1}\cdots {\mU_n^\star}^{i_n}.
\]
Here, each matrix ${\mU_{\ell}^\star}^{i_{\ell}}$ has size $r \times r$, except for ${\mU_1^\star}^{i_1}$ and ${\mU_n^\star}^{i_n}$ that have dimension of $1 \times r$ and $r \times 1$, respectively. We generate each MPS $\vu^\star$ by first generating a random Gaussian vector of length $d^n$ and then applying the sequential SVD \cite{Oseledets11} to truncate it to an MPS, which we finally normalize to have unit length.
 As a consequence, entry $\vrho^\star(i_1 \cdots i_\nqbit, j_1 \cdots j_\nqbit)$ can be expressed as
\begin{eqnarray}
   % \label{element of MPO by MPS}
    &\!\!\!\!\!\!\!\!&\vrho^\star(i_1 \cdots i_\nqbit, j_1 \cdots j_\nqbit)\nonumber\\
    &\!\!\!\!=\!\!\!\!&  {\mU_1^\star}^{i_1}\cdots {\mU_n^\star}^{i_n}  {\mU_1^\star}^{j_1}\cdots {\mU_n^\star}^{j_n} \nonumber\\
    &\!\!\!\!=\!\!\!\!& ({\mU_1^\star}^{i_1}\cdots {\mU_n^\star}^{i_n}) \otimes ( {\mU_1^\star}^{j_1}\cdots {\mU_n^\star}^{j_n} )\nonumber\\
    &\!\!\!\!=\!\!\!\!& (\underbrace{{\mU_1^\star}^{i_1}\otimes {\mU_1^\star}^{j_1}}_{{\mX_1^\star}^{i_1,j_1}}) \cdots (\underbrace{{\mU_n^\star}^{i_n}\otimes {\mU_n^\star}^{j_n}}_{{\mX_\nqbit^\star}^{i_\nqbit,j_\nqbit}}),\nonumber
\end{eqnarray}
where $\otimes$ denotes the Kronecker product. Thus, $\vrho^\star = \vu^\star {\vu^\star }^\top$ is also an MPO with MPO ranks $r_1=\dots=r_{n-1} = r^2$.

To illustrate that \Cref{Statistical Error_Haar_Measurement} might hold even with $Q=1$, we only use a single Haar random  projective in the experiments.
We generate a real Haar-distributed random unitary matrix $\begin{bmatrix}  \vphi_{1} & \cdots & \vphi_{d^n}\end{bmatrix} \in \R^{d^n \times d^n}$. Each population measurement \eqref{The defi of POVM 2} can then be rewritten as
\[
p_k = \trace(\vphi_{k}\vphi_k^\top \vrho^\star) = \abs{\vphi_k^\top \vu^\star}^2.
\]
This is our reason for considering a pure state $\vrho^\star$ as it reduces the complexity for computing $ \trace(\vphi_{k}\vphi_k^\top \vrho^\star)$ from $O(d^{2n})$ to $O(d^{n})$.

We use the induced POVM to measure the state $\vrho^\star$ $M$ times to get the empirical measurements $\wh \vp$. With the obtained measurements, as in \eqref{The loss function in QST}, we attempt to recover the MPS $\vu^\star$ (and hence $\vrho^\star$) by minimizing the following constrained mean squared error loss
\begin{eqnarray}\begin{split}
    \wh\vu &= \argmin_{\vu \in \setU_r} \frac{1}{2}\sum_{i=1}^{d^n}(|\vphi_i^\top\vu|^2 - \wh p_i)^2,\\
   \setU_r &= \bigg\{\vu\in\R^{d^n}: \vu(i_1\cdots i_n) = {\mU_1}^{i_1}\cdots {\mU_n}^{i_n}, \\
    &\mU_1^{i_1} \in \R^{1\times r}, \mU_n^{i_n} \in \R^{r\times 1}, \mU_\ell^{i_\ell} \in \R^{r\times r}, 1<i<n\bigg\},
\end{split} \label{Loss function of MPS}\end{eqnarray}
which has the same form as \eqref{The loss function in QST}.

As in \cite{Rauhut17}, we solve \eqref{Loss function of MPS} by the following iterative hard thresholding (IHT, i.e., projected gradient descent):
\begin{eqnarray}
    \label{Iterative equ of GDwithTT_SVD_1}
    {\vu}_{t+1} =\calP_{\setU_r}\parans{ \vu_t - \mu \sum_{i=1}^{d^n} (|\vphi_i^\top\vu_t|^2 - \wh p_i) \vphi_i\vphi_i^\top\vu_t },
\end{eqnarray}
where $\mu$ is the step size and
$\calP_{\setU_r}$ denotes the projection onto the MPS set $\setU_r$, which can be approximately computed via a sequential SVD algorithm \cite{Oseledets11}. We adopt this approach for computing an approximate projection in the following experiments.

Since our goal is to verify how the global solution $\wh \vu$ behaves, to ensure the convergence to a global solution, we use a good initialization $\vu_0 = \frac{\vu^\star + \lambda\vv}{\|\vu^\star + \lambda\vv\|_2}$ where $\vv$ is randomly generated from the unit sphere of $\R^{d^n}$. In all the experiments, we set $\lambda = 0.7$ so that the initialization $\vu_0$ is still not very close to the ground truth $\vu^\star$. Since the gradient becomes exponentially small in $n$, which can be observed by using the same argument in \eqref{Expectation of clean energy square} for
$\vphi_i^\top\vu_t$, we set the step size $\mu=0.01\times d^n$. The solution $\wh{\vu}$ is obtained by running the IHT algorithm \eqref{Iterative equ of GDwithTT_SVD_1} until convergence. Since the factorization $\vrho^\star = \vu^\star {\vu^\star }^\top$ is not unique as $\vrho^\star = (-\vu^\star) (-\vu^\star)^\top$ also holds,
we measure the quality of the recovered $\wh \vu$ by the following distance %\zz{What about sign ambiguity in $\vu^\star$?}
\begin{eqnarray}
    \label{THE definition of MSE}
    \dist(\wh\vu,\vu^\star) = \min\left\{\|\wh{\vu} - \vu^\star\|_2^2, \|\wh{\vu} + \vu^\star\|_2^2 \right\}.
\end{eqnarray}
For each experiment, we conduct $10$ Monte Carlo trials and take the average recovery distance over the 10 trials.

\begin{figure}[!htbp]
\centering
\subfigure[]{
\begin{minipage}[t]{0.45\textwidth}
\centering
\includegraphics[width=8.5cm]{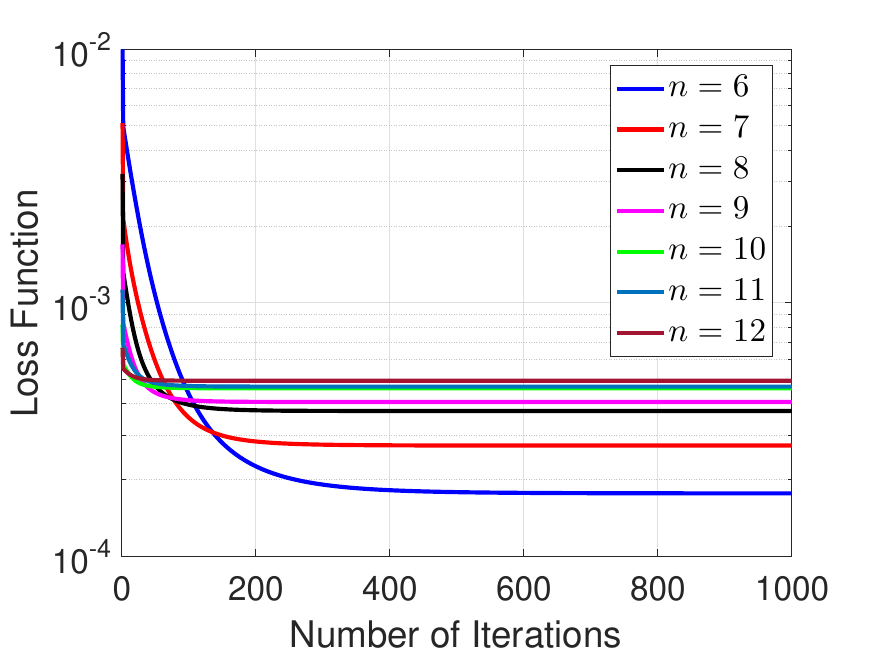}
%\caption{(B)}
\end{minipage}
\label{Performance Comparison_M_loss}
}
\subfigure[]{
\begin{minipage}[t]{0.45\textwidth}
\centering
\includegraphics[width=8.5cm]{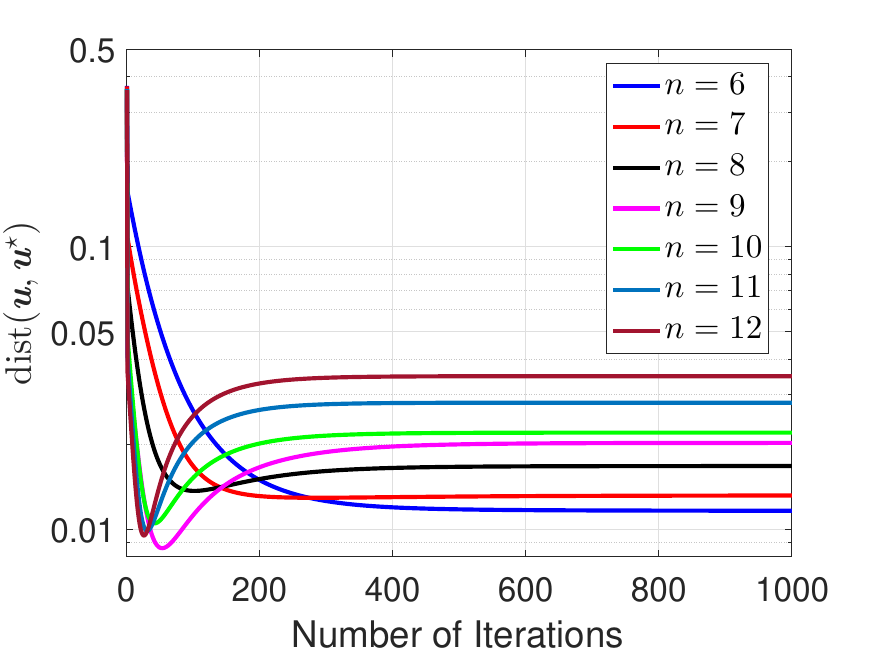}
%\caption{(A)}
\end{minipage}
\label{different m2}
}
\caption{Illustration of convergence of IHT in \eqref{Iterative equ of GDwithTT_SVD_1} in terms of (a)  loss function defined in \eqref{Loss function of MPS}, and (b) recovery error defined in \eqref{THE definition of MSE} for different $\nqbit$ with $M=1000$, $r=2$, and $d=2$.  }
\label{Performance Comparison_Convergence}
\end{figure}

\begin{figure}[htbp]
\centering
\includegraphics[width=8.5cm, keepaspectratio]%
{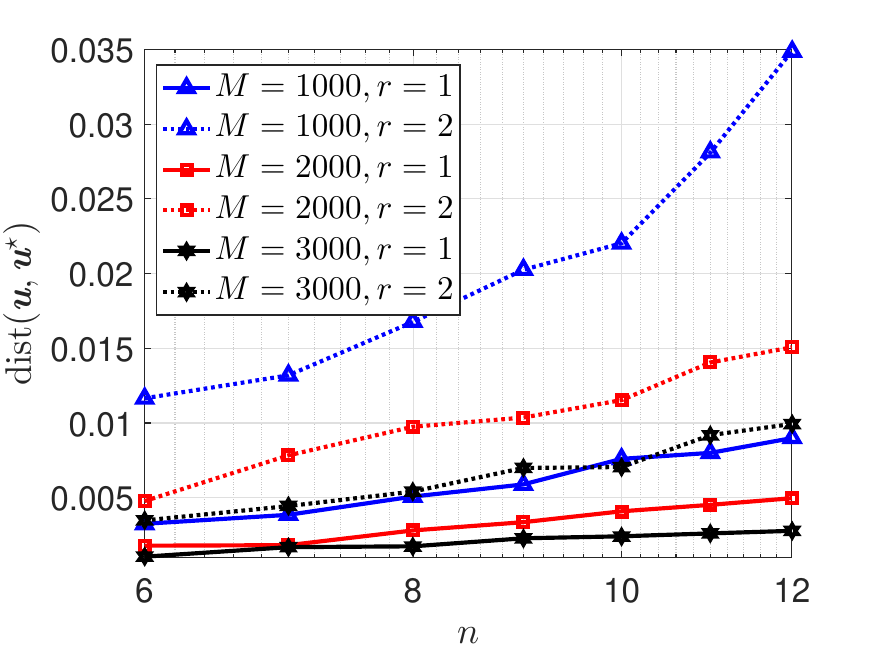}
\caption{IHT recovery error as the number of qudits $n$ increases with several choices of $M$ and $r$.}
\label{Performance Comparison_M}
\end{figure}

\paragraph*{Experimental results}
We first set $M=1000$, $r=2$, and $d = 2$  and examine the convergence of the IHT algorithm defined in \eqref{Iterative equ of GDwithTT_SVD_1}. \Cref{Performance Comparison_M_loss} shows the convergence of the algorithm in minimizing the loss function defined in \eqref{Loss function of MPS}; it can be observed that the IHT algorithm converges relatively fast.
\Cref{different m2} plots the learning curves in terms of the recovery error for the ground-truth $\vu^\star$ as defined in \eqref{THE definition of MSE}. We first note that the initialization $\vu_0$ is not close to the ground truth $\vu^\star$, which is consistent with our choice of initialization described above. After convergence, the algorithm produces a much better recovery of $\vu^\star$ and the recovery error increases steadily as $n$ increases. It is also of interest to note that when $n$ increases while $M$ remains the same, the recovery error curve exhibits a ``U-shape" that first decreases, followed by an increasing trend. In other words, if the algorithm stops appropriately at the initial phase, it produces an iterate much closer to the ground truth than the final one. This is sometimes called algorithmic bias and can be exploited to produce a better solution~\cite{stoger2021small,wang2021early,
ding2022validation}. But we highlight that we do not use this early-stopping approach here, and instead run the algorithm until convergence and use the final iterate since our goal is to verify the properties of the global minimizer.

Next, in \Cref{Performance Comparison_M}, we plot the recovery error as a function of $n$ for various values of $M$ and $r$.
As expected, the recovery error increases when $M$ decreases or $r$ increases, but the IHT algorithm produces stable performance in all cases. We also observe that the recovery error increases only polynomially rather than exponentially with $n$, which is consistent with  \Cref{Statistical Error_Haar_Measurement}.

\section{Discussion and Conclusion}
\label{sec: Discussion and conclusion}

In this paper, we have studied sampling bounds for recovering structured quantum states that can be represented as matrix product operators (MPOs). We first established a non-asymptotic lower bound on the number of
requisite measurements to ensure a stable embedding of MPOs under several choices of random measurement ensembles, including generic subgaussian measurements,  rank-one Gaussian measurement ensembles, and Haar random  projective measurements. We then established theoretical bounds on the accuracy of a constrained least-squares estimator for recovering an MPO by using its empirical measurements obtained from multiple Haar random  projective measurements. Our research shows that a stable recovery guarantee requires only \emph{polynomial} growth in the total number of state copies relative to the number of qudits. Thus, these results support the growing evidence for using MPOs for quantum state tomography and may have implications for the advancement of more efficient quantum state tomography methods in the future. Our findings suggest interesting directions for enhancing the current results or expanding our research to a more practical context. We elaborate on these possibilities below.

\paragraph*{Stable embedding for MPOs with a single Haar random projective} As discussed right after \Cref{Small_Method_ROHaar_Measurement}, we conjecture that a single Haar random  projective is sufficient to establish stable embeddings for MPOs. This is supported by our numerical experiments with measurements from a single POVM to recover the MPO state. One possible approach is to exploit the fact that the local correlations between the columns in the unitary matrix are very weak because orthogonality is a global property \cite{tropp2012comparison}. Incorporating this property into Mendelson's small ball method presents a challenge, however. Another approach is to exploit the connection between the unitary matrix and the Gaussian distribution, as used in \cite{rauhut2019low} for studying rank-one tight frame measurements.

\paragraph*{Improving sampling complexity for the number of state copies}

In \Cref{Statistical Error_Haar_Measurement}, we established a recovery guarantee for MPOs from Haar random projective measurements. The result requires a total number of state copies
$QM = \wt\Omega\big(\frac{n^3 d^2\ol{r}^2}{\epsilon^2}\big)$. This sampling complexity is probably not optimal; one may compare it to $O(nd^2\ol{r}^2)$, the number of degrees of freedom in the MPOs. Below we consider rank-one Gaussian measurements and use an alternative approach to establish a recovery guarantee.

Consider the rank-one Gaussian measurement ensembles $\{\va_k \va_k^\dagger\}_{k=1}^K$. The Chernoff bound \cite{ahlswede2002strong} implies that for sufficiently large $K$, $\frac{1}{K}\sum_{k=1}^K \va_k \va_k^\dagger \approx \mId_{d^n}$. Hence, we may view $\{\frac{1}{K}\va_k \va_k^\dagger\}_{k=1}^K$ as being similar to a POVM, though the rank-one measurement matrices do not exactly sum to the identity. Then, we may define the population measurements as $p_k = \<\frac{1}{K}\va_k \va_k^\dagger, \vrho^\star\>, k = 1,\ldots,K$, and denote by $\calA$ the associated linear measurement operator such that $\calA(\vrho^\star) = \{p_k\}$. Also, the empirical measurements obtained by measuring the states $M$ times are denoted by $\wh p_k = f_k/M$, where $f_1,\ldots,f_K$  follow the multinomial distribution $\operatorname{Multinomial}(M, \vp)$  with parameters $M$ and $\vp = \begin{bmatrix} p_1 & \cdots & p_K\end{bmatrix}^\top$. Denote by $\veta$ the measurement errors with entries $\eta_k = \wh p_k - p_k,k = 1,\ldots,K$.

Suppose we solve the same problem \eqref{The loss function in QST} (with $\calA^Q$ replaced by $\calA$) and denote its global solution as $\wh \vrho$. It follows that
\[
\|\calA(\wh{\vrho}) - \calA(\vrho^\star)- \veta\|_2\leq \|\calA(\vrho^\star) - \calA(\vrho^\star) - \veta\|_2 = \|\veta\|_2.
\]
On the other hand, $\|\calA(\wh{\vrho}) - \calA(\vrho^\star)- \veta\|_2 \ge \|\calA(\wh{\vrho}) - \calA(\vrho^\star)\|_2  - \|\veta\|_2$, which together with the above equation gives
\e
\|\calA(\wh{\vrho}) - \calA(\vrho^\star)\|_2 \le 2\|\veta\|_2.
\label{eq:gaussian-recovery-1}\ee

According to \Cref{Small_Method_ROSubgaussian_Measurement}, the left-hand side can be further lower bounded by $\|\calA(\wh{\vrho}) - \calA(\vrho^\star)\|_2 \gtrsim  \|\wh{\vrho} - \vrho^\star\|_F/\sqrt{K}$. Note that the scaling is different from \eqref{Small_Method_ROSubgaussian_Measurement_Conclusion_Theorem}, which is due to the scaling difference between the measurement operator $\calA$ and the one defined in \Cref{Small_Method_ROSubgaussian_Measurement}. On the other hand, since $\E{\veta}=0$ and $\E{\|\veta\|_2^2} = \frac{1}{M^2}\sum_{k=1}^K Mp_k(1-p_k) \leq \frac{1}{M}$, we can use a concentration inequality such as Chebyshev’s inequality to obtain $\|\veta\|_2 \lesssim \frac{1}{\sqrt{M}}$ with high probability. Plugging these equations into \eqref{eq:gaussian-recovery-1} gives
\begin{eqnarray}
    \label{Statistical Error of rank one sub-gaussian_conclusion_2}
    \|\wh{\vrho} - \vrho^\star\|_F\lesssim \sqrt{\frac{K}{M}}.
\end{eqnarray}

The above derivation is commonly used in studying recovery accuracy for inverse problems. On one hand, since \Cref{Small_Method_ROSubgaussian_Measurement} only requires $K=\wt\Omega(nd^2\ol{r}^2)$, by taking $K=\Omega(nd^2\ol{r}^2\log n)$ in \eqref{Statistical Error of rank one sub-gaussian_conclusion_2}, we observe that $M = \wt\Omega(nd^2\ol{r}^2/\epsilon^2)$ is sufficient to ensure $ \|\wh{\vrho} - \vrho^\star\|_F \le \epsilon$. As demonstrated in this context, this approach often leads to an optimal, or nearly optimal, recovery bound when using a minimal yet sufficient number of measurements. As another example, the work~\cite{haah2017sample} employs this approach to establish a recovery bound for low-rank states. But on the other hand, recall that the above derivation is based on the assumption that $\frac{1}{K}\sum_{k=1}^K \va_k \va_k^\dagger \approx \mId_{d^n}$, which holds only when $K\gtrsim d^n$. If we plug this into \eqref{Statistical Error of rank one sub-gaussian_conclusion_2}, then $M\gtrsim d^n/\epsilon^2$ is required to ensure $\epsilon$-accuracy. This also illustrates the challenge described in \Cref{sec:challenge}. Nevertheless, the discussion above suggests that it may still be possible to improve upon our current bound for the total number of state copies.

\paragraph*{Convergence of IHT and other efficient optimization algorithms}
The algorithmic aspect is not the focus of this work. In our experiments, we employ an IHT algorithm, but we do not provide a formal guarantee for that algorithm. It will be of interest to develop a convergence guarantee for this algorithm. As discussed in \cite{Rauhut17}, one potential challenge is to find a good initialization that allows IHT to converge quickly to the target solution. On the other hand, the IHT algorithm requires performing a sequential SVD algorithm in each iteration, which could be computationally expensive, especially for large quantum systems.
Consequently, exploring alternative optimization algorithms that offer computational efficiency without the need for a projection step and can effectively handle an increasing number of qudits has become an area of great interest.

\paragraph*{Extension to local measurements}
In this paper, we primarily focus on rank-one POVMs with Haar-distributed unitary matrices. Such measurements are known as global measurements since the unitary matrix will rotate the entire system of qudits simultaneously. This poses challenges in performing these measurements with practical quantum circuits. Therefore, an important future direction we will pursue is to study other measurement settings, such as the unitary t-design \cite{scott2006tight,brandao2016local} or even local measurements \cite{brandao2020fast, lancien2013distinguishing} that can be conducted efficiently on current quantum computers. Such measurement settings may also reduce the cost for computing the gradient of the least squares loss \eqref{The loss function in QST}. It is also interesting to design measurement operators that can improve efficiency in both performing experimental measurements and post-processing for estimating the state, which is often achieved by using certain iterative algorithms.

%\paragraph*{Extension to (approximate) t-design measurements}
%\textcolor{blue}{In practice, implementing a Haar random unitary matrix in a quantum computer, even with a fault-tolerant device and exponential time, remains a challenge due to the inherent limitations of approximating unitaries to a desired precision. An alternative measurement approach is the (approximate) t-design measurement \cite{ambainis2007quantum}. A t-design for quantum states refers to a finite set of quantum states capable of emulating the Haar measure on quantum states concerning any test that utilizes at most t copies of a state. In addition, uniformly sampling from a t-design replicates the first 2t moments of uniform sampling according to the Haar measure, a process equivalent to sampling standard Gaussian vectors followed by renormalization \cite{KuengACHA17}. Consequently, our ongoing research will delve into the repeated measurements required for the (approximate) t-design when applying the MPO structure.}

% use section* for acknowledgment
\section*{Acknowledgment}
We acknowledge funding support from NSF Grants No.\ CCF-1839232, PHY-2112893, CCF-2106834 and CCF-2241298, as well as the W. M. Keck Foundation. We thank the Ohio Supercomputer Center for providing the computational resources needed in carrying out this work. Finally, we are grateful to Rungang Han, Holger Rauhut, Gongguo Tang, and Roman Vershynin for many valuable discussions and to Alireza Goldar for helpful comments on the manuscript.

\section*{Appendices}
%\crefalias{section}{appendix}
%\begin{appendices}
\appendices
To simplify the notations, universal constants in each proof may share the same symbols (e.g., $c_0$), but they could represent different values.

\section{Proof of \Cref{thm:SubgaussianRIP}}
\label{Proof of the RIP}

\subsection{Generic subgaussian measurements}
We first extend the statement of \Cref{thm:SubgaussianRIP} to generic subgaussian measurements and then prove this more general result.

\begin{defi}[Subgaussian measurement ensembles~\cite{bierme2015modulus}]
A complex random variable $X$ is called $L$-subgaussian if there exists a constant $L > 0$ such that $\E e^{\mathscr{R}(t X)} \le e^{L^2|t|^2/2}
$ holds for all $t \in \C$. We say that $\calA: \C^{d^n\times d^n} \rightarrow \C^K$  is an $L$-subgaussian measurement ensemble if all the elements of $\mA_k,k=1,\dots,K$ are independent $L$-subgaussian random variables with mean zero and variance one.
\label{def:Subgaussian}\end{defi}

Note that a complex-valued random variable $X$ is subgaussian if and only if its both real part $\mathscr{R}(X)$ and imaginary part $\mathscr{I}(X)$ are real subgaussian random variables. We define  the subgaussian norm of $X$ as
\begin{equation}
    \label{Subgaussian norm}
    \|X\|_{\psi_2} = \inf \left\{t>0, \ \E e^{\frac{|X|^2}{t^2}} \leq 2  \right\}.
\end{equation}
Here  are some classical examples of subgaussian distributions.
\begin{itemize}
\item{(Gaussian)} A standard complex Gaussian random variable $X = \mathscr{R}(X) + i \mathscr{I}(X)$ with $\mathscr{R}(X)$ and $\mathscr{I}(X)$ being independent and following $ \calN(0,\frac{1}{2})$, is a subgaussian random variable with $\|X\|_{\psi_2}\leq C$, where $C$ is an absolute constant.
\item{(Bernoulli)} A Bernoulli random variable $X$ that takes values $-1$ and $1$ with equal probability is a subgaussian random variable with $\|X\|_{\psi_2}= \frac{1}{\sqrt{\ln 2}}$.
\end{itemize}

The following result establishes the RIP for an $L$-subgaussian measurement ensemble.
\begin{theorem}
Suppose the linear map $\calA: \C^{d^n \times d^n} \rightarrow \C^K$  is an $L$-subgaussian measurement ensemble defined in {Definition} \ref{def:Subgaussian}. Then, with probability at least $1-\bar\epsilon$, $\calA$ satisfies the $\delta_{\ol{r}}$-RIP as in \eqref{eq:ripdef} for MPOs given that
\begin{equation}
K \ge C \cdot \frac{1}{\delta_{\ol{r}}^2} \cdot \max\left\{ nd^2{\ol{r}}^2 (\log n\ol{r}), \log(1/\ol\epsilon)\right \},
\label{eq:mrip1}
\end{equation}
where $C$ is a universal constant depending only on $L$.
\label{thm:SubgaussianRIP-appendix}
\end{theorem}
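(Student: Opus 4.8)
The plan is to recast the RIP quantity as the supremum of a second-order chaos process and invoke the deviation bound of Krahmer--Mendelson--Rauhut \cite{krahmer2014suprema} together with its tail refinement \cite{dirksen2015tail}. Collect the entries of all measurement matrices into a single random vector $\vxi = (\vec(\mA_1),\ldots,\vec(\mA_K))$ with independent, mean-zero, variance-one, $L$-subgaussian coordinates. Since each population measurement $\langle \mA_k,\vrho\rangle$ is a linear form in $\vxi$, for every unit-norm MPO $\vrho\in\ol\setX_{\ol r}$ there is a block-diagonal matrix $\mathcal{V}_\vrho$ whose $K$ identical blocks are copies of $\tfrac{1}{\sqrt K}\vec(\vrho)^H$, so that $\tfrac{1}{K}\|\calA(\vrho)\|_2^2 = \|\mathcal{V}_\vrho\vxi\|_2^2$ and $\E\|\mathcal{V}_\vrho\vxi\|_2^2 = \|\mathcal{V}_\vrho\|_F^2 = \|\vrho\|_F^2$. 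The RIP constant is then exactly $\sup_{\vrho\in\ol\setX_{\ol r}}\big|\,\|\mathcal{V}_\vrho\vxi\|_2^2 - \E\|\mathcal{V}_\vrho\vxi\|_2^2\,\big|$, which is precisely the object the chaos bound controls. Because a complex subgaussian vector splits into independent real and imaginary parts, I would pass to a real reformulation in order to apply the cited inequalities, paying only a harmless constant factor absorbed into $C(L)$.

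For the set $\mathcal{A} = \{\mathcal{V}_\vrho : \vrho\in\ol\setX_{\ol r}\}$ the two radii entering the chaos bound follow immediately from the block structure: $d_F(\mathcal{A}) = \sup\|\mathcal{V}_\vrho\|_F = 1$ and $d_{2\to2}(\mathcal{A}) = \sup\|\mathcal{V}_\vrho\|_{2\to2} = \tfrac{1}{\sqrt K}$, and moreover $\|\mathcal{V}_\vrho - \mathcal{V}_{\vrho'}\|_{2\to2} = \tfrac{1}{\sqrt K}\|\vrho - \vrho'\|_F$. Consequently Talagrand's functional satisfies $\gamma_2(\mathcal{A},\|\cdot\|_{2\to2}) \lesssim \tfrac{1}{\sqrt K}\,\gamma_2(\ol\setX_{\ol r},\|\cdot\|_F)$, and via Dudley's inequality it remains to bound the metric entropy of the unit-norm MPOs in the Frobenius norm.

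The heart of the argument, and the step I expect to be the main obstacle, is the covering-number estimate $\log N(\ol\setX_{\ol r},\|\cdot\|_F,u) \lesssim nd^2\ol r^2\,\log(Cn/u)$, where the tensor-train/MPO structure is essential. Using the left-canonical form \eqref{eq:left-canonical}, each factor $\mX_\ell$ carries at most $d^2 r_{\ell-1}r_\ell \le d^2\ol r^2$ free complex parameters and, thanks to the orthogonality constraints, lives in a set of bounded diameter admitting an $\epsilon$-net of cardinality $(C/\epsilon)^{d^2 r_{\ell-1}r_\ell}$ \cite{baraniuk2007simple,vershynin2010introduction}. A multilinear perturbation lemma then shows that replacing every factor by its nearest net point perturbs $\vrho$ by at most $O(n\epsilon)$ in Frobenius norm; this uses a telescoping decomposition over the $n$ factors together with the fact that, in canonical form, the partial products of factors are norm-nonexpansive. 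Taking the product of the per-factor nets yields a net for $\ol\setX_{\ol r}$ whose log-cardinality is $\sum_\ell d^2 r_{\ell-1}r_\ell \lesssim nd^2\ol r^2$ times a logarithmic factor, which is the claimed entropy. Feeding this into Dudley's integral gives $\gamma_2(\ol\setX_{\ol r},\|\cdot\|_F) \lesssim \sqrt{nd^2\ol r^2\log(n\ol r)}$, hence $\gamma_2(\mathcal{A},\|\cdot\|_{2\to2}) \lesssim \tfrac{1}{\sqrt K}\sqrt{nd^2\ol r^2\log(n\ol r)}$. This mirrors the real TT-tensor analysis of \cite{rauhut2017low}, the difference being bookkeeping for the complex field.

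Finally I would plug these quantities into the chaos deviation bound. Writing $E = \sqrt{nd^2\ol r^2\log(n\ol r)}$, the expectation bound reads $\E\sup_\vrho|\cdots| \lesssim \gamma_2(\gamma_2 + d_F) + d_F d_{2\to2} \lesssim \tfrac{E^2}{K} + \tfrac{E}{\sqrt K} + \tfrac{1}{\sqrt K}$, which is at most $\delta_{\ol r}/2$ once $K \gtrsim E^2/\delta_{\ol r}^2 = nd^2\ol r^2\log(n\ol r)/\delta_{\ol r}^2$. The tail refinement of \cite{dirksen2015tail}, whose deviation terms are governed by $d_{2\to2}(\gamma_2 + d_F)$ and $d_{2\to2}^2$, then bounds the fluctuation above its mean with failure probability $\bar\epsilon$ provided additionally $K \gtrsim \log(1/\bar\epsilon)/\delta_{\ol r}^2$, the second term arising from the $d_{2\to2}^2$ contribution. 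Combining the two requirements produces the maximum in \eqref{eq:mrip1}, and specializing $L$ to the complex Gaussian case recovers \Cref{thm:SubgaussianRIP}.
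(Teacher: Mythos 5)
Your proposal is correct and takes essentially the same route as the paper's proof in \Cref{Proof of the RIP}: the identical block-diagonal chaos-process reformulation $\frac{1}{\sqrt{K}}\calA(\vrho)=\mV_{\vrho}\vxi$ with $d_F(\calB)=1$ and $d_{2\to2}(\calB)=\frac{1}{\sqrt{K}}$, the same per-factor $\epsilon$-net plus telescoping argument (exploiting the left-canonical orthogonality so partial products are norm-nonexpansive) yielding $\log\calN(\ol\setX_{\ol{r}},\|\cdot\|_F,u)\lesssim nd^2\ol{r}^2\log(n\ol{r}/u)$ as in \Cref{lem:cover-number-mps}, Dudley's integral to bound $\gamma_2$, and the Krahmer--Mendelson--Rauhut/Dirksen deviation bound whose two tail branches produce the maximum in \eqref{eq:mrip1}. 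The only cosmetic deviations (splitting the complex subgaussian vector into real and imaginary parts, and writing the variance term as $d_{2\to2}(\gamma_2+d_F)$ rather than the paper's $d_4^2(\calB)=K^{-1/2}$) are immaterial since both evaluate to the same order here.
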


\subsection{Covering number for MPOs}
To study the RIP or similar properties for MPOs, we need to first compute the covering number for the set of unit-norm MPOs. Since a unit-norm MPO can always be written in the canonical form, we consider the set $\ol \setX_{\ol r}$ defined in \eqref{SetOfMPO1}.
Note that the definition of \eqref{SetOfMPO1} is different from \eqref{SetOfMPO} since in \eqref{SetOfMPO1}, we assume $\|\vrho\|_F=1$ such that $\sum_{i_n,j_n}{\mX_n^{i_n,j_n}}^\dagger \mX_n^{i_n,j_n} = 1$.

We say $\widetilde\setX_{\ol{r}}$ is an $\epsilon$-net of $\ol\setX_{\ol{r}}$ if for any $\vrho\in \ol\setX_{\ol{r}}$, there exists $\ol\vrho\in \widetilde\setX_{\ol{r}}$ such that $\norm{\vrho - \ol\vrho}{F} \le \epsilon$. Here we use the Frobenius norm to quantify the distance, but one may use other metrics depending on the application. We now provide the covering number of $\widetilde\setX_{\ol{r}}$.
\begin{lemma}
There exists an $\epsilon$-net $\widetilde\setX_{\ol{r}}$ for $\ol\setX_{\ol{r}} $ in \eqref{SetOfMPO1} under the Frobenius norm obeying
\begin{eqnarray}
    \label{lem:cover-number-mps-equ}
    \abs{\widetilde\setX_{\ol{r}}} \le \bigg(\frac{3n\ol{r}}{\epsilon}\bigg)^{nd^2{\ol{r}}^2},
\end{eqnarray}
where $\abs{\widetilde\setX_{\ol{r}}}$ denotes the number of elements in the set $\widetilde\setX_{\ol{r}}$.
\label{lem:cover-number-mps}\end{lemma}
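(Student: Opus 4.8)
The plan is to parametrize each unit-norm canonical MPO by its left-orthonormal cores, to net each core separately on a Stiefel manifold, and then glue the per-core nets together with a multilinear perturbation bound. Concretely, every $\vrho\in\ol\setX_{\ol r}$ is determined by factors $\{\mX_\ell^{i_\ell,j_\ell}\}$ whose left unfoldings $L(\mX_\ell)\in\C^{d^2 r_{\ell-1}\times r_\ell}$ satisfy $L(\mX_\ell)^H L(\mX_\ell)=\mId_{r_\ell}$ for $\ell=1,\dots,n$ by \eqref{SetOfMPO1}; that is, each $L(\mX_\ell)$ lies on the complex Stiefel manifold $V_{r_\ell}(\C^{d^2 r_{\ell-1}})$. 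The total number of free (complex) parameters is $\sum_{\ell=1}^n d^2 r_{\ell-1}r_\ell\le nd^2\ol r^2$ (using $r_0=r_n=1$ and $r_\ell\le\ol r$), and this count is exactly what will appear as the exponent in the net size.

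The key ingredient is a \emph{telescoping perturbation bound}: if $\vrho,\ol\vrho\in\ol\setX_{\ol r}$ have cores $\{\mX_\ell\},\{\ol\mX_\ell\}$, then
\begin{equation*}
\|\vrho-\ol\vrho\|_F\le\sum_{\ell=1}^n\|L(\mX_\ell)-L(\ol\mX_\ell)\|_F.
\end{equation*}
I would prove this by writing the difference of the two matrix products as a telescoping sum $\vrho-\ol\vrho=\sum_{\ell=1}^n\mathcal T_\ell$, where $\mathcal T_\ell$ is the MPO whose first $\ell-1$ cores are taken from $\ol\vrho$, whose $\ell$-th core is replaced by $\mX_\ell-\ol\mX_\ell$, and whose last $n-\ell$ cores are taken from $\vrho$. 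Unfolding $\mathcal T_\ell$ across the $\ell$-th bond factors it as $\mathcal M_\ell\,\mathcal R_\ell$, where $\mathcal R_\ell$ is the suffix contraction $\mX_{\ell+1}\cdots\mX_n$ and $\mathcal M_\ell$ is the contraction of the prefix $\ol\mX_1,\dots,\ol\mX_{\ell-1}$ with $\mX_\ell-\ol\mX_\ell$. Since prefix products of left-orthonormal cores again have orthonormal columns (an easy induction from $\sum_{i,j}(\mX^{i,j})^H\mX^{i,j}=\mId$), left-multiplication by the prefix preserves the Frobenius norm, giving $\|\mathcal M_\ell\|_F=\|L(\mX_\ell)-L(\ol\mX_\ell)\|_F$; and because $\vrho$ is unit-norm with all cores left-orthonormal, the same isometry applied to $\vrho$ itself forces $\|\mathcal R_\ell\|\le\|\mathcal R_\ell\|_F=1$. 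Hence $\|\mathcal T_\ell\|_F\le\|\mathcal M_\ell\|_F\,\|\mathcal R_\ell\|\le\|L(\mX_\ell)-L(\ol\mX_\ell)\|_F$, and the triangle inequality concludes. Note this argument forces the net points $\ol\vrho$ to be genuinely left-orthonormal, so the per-core nets must live on the Stiefel manifold rather than on an ambient ball.

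With the perturbation bound in hand, I would take $\widetilde\setX_{\ol r}$ to be the set of MPOs whose cores range over $(\epsilon/n)$-nets $\mathcal N_\ell$ of $V_{r_\ell}(\C^{d^2 r_{\ell-1}})$ in the Frobenius metric. Any $\vrho$ is then within $\sum_\ell(\epsilon/n)=\epsilon$ of some element of $\widetilde\setX_{\ol r}$, so it is indeed an $\epsilon$-net. A standard volumetric argument bounds $|\mathcal N_\ell|\le(3n\ol r/\epsilon)^{d^2 r_{\ell-1}r_\ell}$: each $L(\mX_\ell)$ has operator norm $1$ and Frobenius norm $\sqrt{r_\ell}\le\ol r$, hence sits in a Frobenius ball of radius $\ol r$, whose $(\epsilon/n)$-net has at most $(1+2n\ol r/\epsilon)^{d^2 r_{\ell-1}r_\ell}\le(3n\ol r/\epsilon)^{d^2 r_{\ell-1}r_\ell}$ points. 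Multiplying over $\ell$ yields $|\widetilde\setX_{\ol r}|\le(3n\ol r/\epsilon)^{\sum_\ell d^2 r_{\ell-1}r_\ell}\le(3n\ol r/\epsilon)^{nd^2\ol r^2}$. Varying bond dimensions $r_\ell\le\ol r$ are handled either by padding cores to the maximal rank or by a union bound over the at most $\ol r^{\,n-1}$ rank profiles, which is lower order and absorbed into the constants.

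The main obstacle is the perturbation estimate: the clean bound $\|\mathcal T_\ell\|_F\le\|L(\mX_\ell)-L(\ol\mX_\ell)\|_F$ hinges on the interplay between \emph{left}-orthonormality of the prefix cores—which is why the net points must themselves be canonical—and the operator-norm bound $\|\mathcal R_\ell\|\le1$ on the suffix, which comes from the global unit-norm normalization. Ensuring that the replacement scheme keeps every $\mathcal T_\ell$ in the form (orthonormal prefix)$\,\times\,$(difference core)$\,\times\,$(norm-$\le1$ suffix) is the delicate point; once it is in place, the Stiefel covering and the parameter count are routine.
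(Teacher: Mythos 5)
Your proposal is correct and follows essentially the same route as the paper's proof: an $\epsilon$-net for each core's left unfolding, glued together via the telescoping expansion $\vrho-\ol\vrho=\sum_\ell(\text{barred prefix})(\mX_\ell-\ol\mX_\ell)(\text{suffix})$, using left-canonical orthonormality of the prefix and the unit-norm suffix to control each term, and a product of per-core covering numbers. The only differences are immaterial bookkeeping: you net each $L(\mX_\ell)$ in the Frobenius norm at scale $\epsilon/n$ inside a ball of radius $\sqrt{r_\ell}\le\ol{r}$ and conclude by the triangle inequality, whereas the paper nets the columns in the $\|\cdot\|_{1,2}$ norm at scale $\epsilon/(n\ol{r})$ and concludes via Cauchy--Schwarz---both land on the stated $\left(3n\ol{r}/\epsilon\right)^{nd^2\ol{r}^2}$ bound (and your insistence that net points lie on the Stiefel manifold is, if anything, a cleaner handling of a step the paper glosses over).
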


\begin{proof} Denote by
\begin{eqnarray}
L(\mX_\ell) &\!\!\!\!=\!\!\!\!& \begin{bmatrix}\mX_\ell^{1,1} \\ \vdots \\ \mX_\ell^{d,d} \end{bmatrix} \in \C^{d^2r_{\ell-1}\times r_{\ell}}, \text{which satisfies}\nonumber\\
&\!\!\!\!\!\!\!\!&  \hspace{-0.5cm} \ L(\mX_\ell)^\dagger L(\mX_\ell) = \sum_{i_\ell,j_\ell}{\mX_\ell^{i_\ell,j_\ell}}^\dagger \mX_\ell^{i_\ell,j_\ell} = \mId_{r_\ell}.
\end{eqnarray}

For covering
\[
\setO_{d^2r_{\ell-1}, r_\ell} = \bigg\{L(\mX_\ell) \in \C^{d^2r_{\ell-1}\times r_\ell}, L(\mX_\ell)^\dagger \\ \cdot L(\mX_\ell)  =  \mId_{r_\ell}  \bigg\},\]
which contains matrices with unit-norm and orthogonal column vectors, it is beneficial to use the $\norm{\cdot}{1,2}$ norm which counts the largest energy of each column, i.e.,  $\norm{\mA}{1,2} = \max_i\norm{\mA(:,i)}{2}$. By relaxing $\setO_{d^2r_{\ell-1}, r_\ell}$ to the set of matrices with unit-norm vectors, the standard result on the covering number of unit ball implies that there exists an $\epsilon_\ell$-net $\ol\setO_{d^2r_{\ell-1}, r_\ell}$ for $\setO_{d^2r_{\ell-1}, r_\ell}$ obeying \[\abs{\ol\setO_{d^2r_{\ell-1}, r_\ell}} \le \bigg(\frac{3}{\epsilon_\ell}\bigg)^{d^2r_{\ell-1} r_\ell}\le \bigg(\frac{3}{\epsilon_\ell}\bigg)^{d^2\ol{r}^2}.\]
Then we define the set
\begin{eqnarray}
\widetilde\setX_{\ol{r}}:&\!\!\!\!=\!\!\!\!& \bigg\{\ol\vrho: \ol\vrho(i_1 \cdots i_\nqbit, j_1\cdots j_\nqbit) = \Pi_{\ell=1}^n\ol\mX_\ell^{i_\ell,j_\ell}, \nonumber\\
&\!\!\!\!\!\!\!\!& L(\ol\mX_\ell) = \begin{bmatrix}\ol\mX_\ell^{1,1} \\ \vdots \\ \ol\mX_\ell^{d,d} \end{bmatrix} \in \ol\setO_{d^2r_{\ell-1}, r_\ell}, \ \forall \ell\in[n] \bigg\},
\end{eqnarray}
which obeys
\begin{eqnarray}
\label{covering number olW}
\abs{\widetilde\setX_{\ol{r}}} \le \prod_\ell \bigg(\frac{3}{\epsilon_\ell}\bigg)^{d^2\ol{r}^2}.
\end{eqnarray}

We now verify that $\widetilde\setX_{\ol{r}}$ is an $\epsilon$-net for $\ol\setX_{\ol{r}}$ by appropriately selecting $\epsilon_\ell$. For any $\vrho \in \ol\setX_{\ol{r}}$ with $\vrho(i_1 \dots i_\nqbit, j_1\dots j_\nqbit) = \Pi_{\ell=1}^n\mX_\ell^{i_\ell,j_\ell}$, we construct $\ol\vrho$ with $\ol\vrho(i_1 \dots i_\nqbit, j_1\dots j_\nqbit) = \Pi_{\ell=1}^n\ol\mX_\ell^{i_\ell,j_\ell}$ where $\norm{L(\mX_\ell) - L(\ol\mX_\ell)}{1,2} \le \epsilon_\ell$. Then we have
\begin{eqnarray}
&\!\!\!\!\!\!\!\!&\hspace{-0.5cm}\Gamma= \norm{\vrho - \ol\vrho }{F}^2 \nonumber\\
&\!\!\!\!=\!\!\!\!& \hspace{-0.3cm}  \sum_{\substack{i_1, \dots, i_n, \\ j_1,\dots, j_n}}
    \hspace{-0.1cm}\bigg|\mX_1^{i_1,j_1}\mX_2^{i_2,j_2}\cdots\mX_n^{i_n,j_n}-  \ol\mX_1^{i_1,j_1}\ol\mX_2^{i_2,j_2}\cdots\ol\mX_n^{i_n,j_n}\bigg|^2\nonumber\\
    & \!\!\!\!=\!\!\!\!&\sum_{\substack{i_1, \dots, i_n, \\ j_1,\dots, j_n}}
    \bigg|\sum_{\ell=1}^n  \Big(\ol\mX_1^{i_1,j_1}\cdots\ol\mX_{\ell-1}^{i_{\ell-1},j_{\ell-1}}\mX_{\ell}^{i_{\ell},j_{\ell}} \mX_n^{i_n,j_n} \nonumber\\
    &\!\!\!\!\!\!\!\!& \hspace{1.5cm} -  \ol\mX_1^{i_1,j_1} \cdots\ol\mX_{\ell}^{i_{\ell},j_{\ell}}\mX_{\ell+1}^{i_{\ell+1},j_{\ell+1}} \cdots\mX_n^{i_n,j_n}\Big) \bigg|^2\nonumber\\
    &\!\!\!\! \le\!\!\!\!& n\sum_{\ell=1}^n  \Gamma_\ell \leq n \sum_{\ell =1}^n r_\ell\epsilon_\ell^2,
\end{eqnarray}
where in the first inequality we define
\begin{align}
\label{the definition of gamma_l}
\Gamma_\ell=& \sum_{\substack{i_1, \dots, i_n,\\ j_1,\dots, j_n}}
    \bigg|\ol\mX_1^{i_1,j_1}\cdots\ol\mX_{\ell-1}^{i_{\ell-1},j_{\ell-1}}  \cdot\mX_{\ell}^{i_{\ell},j_{\ell}} \cdots\mX_n^{i_n,j_n} \nonumber\\
    & -\ol\mX_1^{i_1,j_1}\cdots\ol\mX_{\ell}^{i_{\ell},j_{\ell}}\mX_{\ell+1}^{i_{\ell+1},j_{\ell+1}} \cdots\mX_n^{i_n,j_n}\bigg|^2,
\end{align}
and the second inequality uses the inequalities $\Gamma_{n} \leq \epsilon_n^2$ and $\Gamma_{\ell} \le r_\ell\epsilon_\ell^2, \ell=1,\dots, n-1$  which can be proved as follows. First note that $\Gamma_{n} $ can be written as
\begin{align*}
\Gamma_{n}  &= \sum_{\substack{i_1, \dots, i_n\\ j_1,\dots, j_n}}
    \bigg|\ol\mX_1^{i_1,j_1}\cdots\ol\mX_{n-1}^{i_{n-1},j_{n-1}}\mX_n^{i_n,j_n} \\
    & \hspace{0.5cm}  - \ol\mX_1^{i_1,j_1} \cdots\ol\mX_{n-1}^{i_{n-1},j_{n-1}}\ol\mX_n^{i_n,j_n}\bigg|^2 \\
&=\!\!\!  \sum_{\substack{i_1, \dots, i_n\\ j_1,\dots, j_n}} \!\!\!
\bigg|\ol\mX_1^{i_1,j_1} \underbrace{ \ol\mX_2^{i_2,j_2}\cdots\ol\mX_{n-1}^{i_{n-1},j_{n-1}}(\mX_n^{i_n,j_n} - \ol \mX_n^{i_n,j_n})}_{\vxi_{\substack{i_2 \dots, i_n\\ j_2,\dots, j_n}}} \bigg|^2 \\
& = \sum_{\substack{i_2, \dots, i_n\\ j_2,\dots, j_n}} \vxi_{\substack{i_2 \dots, i_n\\ j_2,\dots, j_n}}^\dagger  \cdot \underbrace{\sum_{i_1,j_1}({\ol\mX_1^{i_1,j_1}}^\dagger\ol\mX_1^{i_1,j_1})}_{=\mId_{r_1}} \cdot \vxi_{\substack{i_2 \dots, i_n\\ j_2,\dots, j_n}} \\
& = \sum_{\substack{i_2, \dots, i_n\\ j_2,\dots, j_n}} \vxi_{\substack{i_2 \dots, i_n\\ j_2,\dots, j_n}}^\dagger \vxi_{\substack{i_2 \dots, i_n\\ j_2,\dots, j_n}}.
\end{align*}
Repeating the above process $(n-2)$ more times yields
\begin{align*}
\Gamma_{n} & = \sum_{i_n,j_n} (\mX_n^{i_n,j_n} - \ol \mX_n^{i_n,j_n})^\dagger(\mX_n^{i_n,j_n} - \ol \mX_n^{i_n,j_n}) \\
&=  \norm{L(\mX_n) - L(\ol\mX_n)}{2}^2 \le \epsilon_n^2.
\end{align*}
Using the same approach, we can obtain
\begin{align*}
\Gamma_{\ell} & =  \sum_{\substack{i_\ell, \dots, i_n, \\ j_\ell,\dots, j_n}} \Bigg( {\mX_n^{i_n,j_n}}^\dagger\cdots{\mX_{\ell+1}^{i_{\ell+1},j_{\ell+1}}}^\dagger(\mX_{\ell}^{i_{\ell},j_{\ell}} - \ol\mX_{\ell}^{i_{\ell},j_{\ell}})^\dagger\nonumber\\
& \hspace{0.5cm}\cdot(\mX_{\ell}^{i_{\ell},j_{\ell}} - \ol\mX_{\ell}^{i_{\ell},j_{\ell}})\underbrace{\mX_{\ell+1}^{i_{\ell+1},j_{\ell+1}} \cdots\mX_n^{i_n,j_n}}_{\vxi_{\substack{i_{\ell+1} \dots, i_n\\ j_{\ell+1},\dots, j_n}}} \Bigg)\nonumber\\
&\leq \norm{L(\mX_{\ell}) - L(\ol \mX_{\ell}) }{F}^2  \underbrace{\sum_{\substack{i_{\ell+1}, \dots, i_n, \\ j_{\ell+1},\dots, j_n}}\vxi_{\substack{i_{\ell+1} \dots, i_n\\ j_{\ell+1},\dots, j_n}}^\dagger\vxi_{\substack{i_{\ell+1} \dots, i_n\\ j_{\ell+1},\dots, j_n}}}_{=1}\nonumber\\
&\le r_\ell \epsilon_{\ell}^2
\end{align*}
for all $\ell \le n-1$.
Therefore, we can choose $\epsilon_\ell = \frac{\epsilon}{\ol{r} n}$ in \eqref{covering number olW} to ensure $\widetilde\setX_{\ol{r}}$ as an $\epsilon$-net for $\ol\setX_{\ol{r}}$ (as $\norm{\vrho - \ol\vrho }{F}^2\leq n \sum_{\ell=1}^n r_\ell\epsilon_\ell^2 \leq \epsilon^2$) and such  $\widetilde\setX_{\ol{r}}$ obeys
\begin{eqnarray}
\abs{\widetilde\setX_{\ol{r}}} \le  \bigg(\frac{3n\ol{r}}{\epsilon}\bigg)^{nd^2{\ol{r}}^2}.
\end{eqnarray}
This completes the proof of Lemma~\ref{lem:cover-number-mps}.
\end{proof}

\subsection{Proof of \Cref{thm:SubgaussianRIP-appendix}}
Using the covering number established in Lemma~\ref{lem:cover-number-mps}, we can now follow the arguments in \cite{rauhut2017low} to establish the RIP for MPOs $\vrho$ under subgaussian measurements.
Because of the linearity of the measurement operator $\calA$, we note that there exists a complex-valued matrix $\mA$ of size $K \times d^{2n}$ such that
\begin{eqnarray}
\calA(\vrho) = \mA \operatorname{vec}(\vrho),
\end{eqnarray}
where $\operatorname{vec}(\vrho)\in\C^{d^{2n}}$ denotes the vectorization (in any predetermined order) of the MPO format $\vrho$.
Note that our goal is to study the quantity $\frac{1}{K}\| \calA(\vrho) \|_2^2 = \| \frac{1}{\sqrt{K}}\calA(\vrho) \|_2^2$.
Equivalently, there exists a vector $\vxi \in \C^{Kd^{2n}}$ (containing the row-wise vectorization of $\mA$) such that \begin{equation} \frac{1}{\sqrt{K}}\calA(\vrho)  = \mV_{\vrho} \vxi, \label{eq:mtxi} \end{equation}
where $\mV_{\vrho}$ is an $K \times Kd^{2n}$ matrix given by
\begin{equation}
\mV_{\vrho} = \frac{1}{\sqrt{K}} \begin{bmatrix} \operatorname{vec}(\vrho)^\dagger & & & \\ & \operatorname{vec}(\vrho)^\dagger & & \\ & & \ddots & \\ & & & \operatorname{vec}(\vrho)^\dagger \end{bmatrix}.
\label{eq:vt}
\end{equation}

Now we begin to prove \Cref{thm:SubgaussianRIP-appendix}.
\begin{proof} For any $\vrho \in \ol\setX_{\ol{r}}$ in \eqref{SetOfMPO}, we recall~\eqref{eq:ripdef}. Because $\vxi$ is a random vector, $\mV_{\vrho}\vxi$ is also a random vector. We can compute the expectation of the energy of this random vector:
\begin{align}
\E \| \mV_{\vrho}\vxi \|_2^2 &= \E (\vxi^\dagger \mV_{\vrho}^\dagger \mV_{\vrho}\vxi) = \E \trace(\vxi^\dagger \mV_{\vrho}^\dagger \mV_{\vrho}\vxi) \nonumber \\ &= \E \trace(\mV_{\vrho}^\dagger \mV_{\vrho}\vxi \vxi^\dagger)= \trace(\mV_{\vrho}^\dagger \mV_{\vrho}\E(\vxi \vxi^\dagger))\nonumber\\
& = \trace(\mV_{\vrho}^\dagger \mV_{\vrho} \mId) = \|\vrho\|_F^2.
\label{eq:ripexp}
\end{align}
Here we used the fact that $\E \vxi \vxi^\dagger=\mId$ since, by assumption, all elements of $\vxi$ are independent mean-zero, variance one, $L$-subgaussian variables. Using  \eqref{eq:mtxi}, and~\eqref{eq:ripexp}, we note that proving that $\calA$ satisfies the $\delta_{\ol{r}}$-RIP is equivalent to proving
\begin{equation}
\sup_{\vrho \in \ol\setX_{\ol{r}}} \left| \| \mV_{\vrho}\vxi \|_2^2 - \E \| \mV_{\vrho}\vxi \|_2^2 \right| \le \delta_{\ol{r}}.
\label{eq:ripequiv}
\end{equation}
We can view $\left| \| \mV_{\vrho}\vxi \|_2^2 - \E \| \mV_{\vrho}\vxi \|_2^2 \right|$ as a random process indexed by the variable $\vrho$, and our goal is to bound the supremum of this random process over the set $\ol\setX_{\ol{r}}$.  \cite[Theorem 3]{rauhut2017low} gives a mechanism to bound this supremum. Specifically, let $\calB := \{ \mV_{\vrho}: ~ \vrho \in \ol\setX_{\ol{r}}\}$ and note that
\begin{equation}
\sup_{\mB \in \calB} \left| \| \mB \vxi \|_2^2 - \E \| \mB \vxi \|_2^2 \right| =
 \sup_{\vrho \in \ol\setX_{\ol{r}}} \left| \| \mV_{\vrho}\vxi \|_2^2 - \E \| \mV_{\vrho}\vxi \|_2^2 \right|.
\label{eq:bvequiv}
\end{equation}
\cite[Theorem 3]{rauhut2017low} states that there exist constants $c_1,c_2$ (depending on $L$) such that for $t > 0$,
\begin{eqnarray}
    &\!\!\!\!\!\!\!\!&\P{ \sup_{\mB \in \calB} \left| \| \mB \vxi \|_2^2 - \E \| \mB \vxi \|_2^2 \right| \ge c_1 E + t }\nonumber\\
    &\!\!\!\!\le\!\!\!\!& 2 e^{ -c_2 \min \left\{\frac{t^2}{V^2},\frac{t}{U}\right\} },
    \label{eq:rauhutthm3}
\end{eqnarray}
where $E$, $U$, and $V$ are quantities defined as
\begin{align}\label{eq:euvdef}
    E & := \gamma_2(\calB,\|\cdot\|_{2\rightarrow 2}) \left(\gamma_2(\calB,\|\cdot\|_{2\rightarrow 2})+ d_F(\calB)\right)\nonumber\\
    & \hspace{0.5cm}+ d_F(\calB)d_{2\rightarrow 2}(\calB), \nonumber \\
    V &:= d_4^2(\calB), \\
    U &:= d_{2\rightarrow 2}^2(\calB)\nonumber,
\end{align}
and $d_F(\calB)$, $d_{2\rightarrow 2}(\calB)$, $d_4^2(\calB)$, and $\gamma_2(\calB,\|\cdot\|_{2\rightarrow 2})$ are quantities that we define and bound in the next paragraph.

In this paragraph, we bound the quantities $E$, $U$, and $V$ appearing in~\eqref{eq:rauhutthm3}. To do this, we define and bound $d_F(\calB)$, $d_{2\rightarrow 2}(\calB)$, $d_4^2(\calB)$, and $\gamma_2(\calB,\|\cdot\|_{2\rightarrow 2})$ which appear in the definitions of $E$, $U$, and $V$ in~\eqref{eq:euvdef}. First,
\begin{equation}
d_F(\calB) := \sup_{\mB \in \calB} \| \mB \|_F^2 =\sup_{\vrho \in \ol\setX_{\ol{r}}} \| \vrho \|_F^2 = 1,
\end{equation}
since every MPO format $\vrho \in \ol\setX_{\ol{r}}$ has unit norm. Second,
\begin{equation}
d_{2\rightarrow 2}(\calB) := \sup_{\mB \in \calB} \| \mB \|_{2\rightarrow 2} = \sup_{\vrho \in \ol\setX_{\ol{r}}} \frac{1}{\sqrt{K}} \| \vrho \|_F^2 = \frac{1}{\sqrt{K}},
\label{eq:twotwo}
\end{equation}
due to the block diagonal structure of $\mV_{\vrho}$ (see~\eqref{eq:vt}) and the normalization of all $\vrho \in \ol\setX_{\ol{r}}$. Third,
\begin{equation}
d_4(\calB) := \sup_{\mB \in \calB} \left( \trace(\mB^\dagger \mB)^2 \right)^{1/4} = K^{-1/4},
\end{equation}
see \cite[Eqn. (65) ]{rauhut2017low} for an analogous derivation.
Fourth,
\begin{align}
\gamma_2(\calB,\|\cdot\|_{2\rightarrow 2}) &\le C \int_{0}^{d_{2\rightarrow 2}(\calB)} \sqrt{\log \calN(\calB, \|\cdot\|_{2\rightarrow 2}, u)} \; du \nonumber \\
&= C \int_{0}^{\frac{1}{\sqrt{K}}} \sqrt{\log \calN(\calB, \|\cdot\|_{2\rightarrow 2}, u)} \; du,
\label{eq:dudley1}
\end{align}
where the covering number $\calN(\calB, \|\cdot\|_{2\rightarrow 2}, u)$ denotes the minimum cardinality of a $u$-net for $\calB$ with respect to the norm $\|\cdot\|_{2\rightarrow 2}$. As suggested by~\eqref{eq:twotwo}, the $\|\cdot\|_{2\rightarrow 2}$ distance on $\calB$ is equivalent to $\frac{1}{\sqrt{K}}$ times the squared Frobenius distance on $\ol\setX_{\ol{r}}$. Therefore,
\begin{eqnarray}
\calN(\calB, \|\cdot\|_{2\rightarrow 2}, u)& \!\!\!\!= \!\!\!\!&
\calN(\ol\setX_{\ol{r}}, \frac{1}{\sqrt{K}} \|\cdot\|_F^2, u)\nonumber\\
&\!\!\!\! =\!\!\!\!& \calN(\ol\setX_{\ol{r}}, \|\cdot\|_F, K^{1/4} \sqrt{u}).
\end{eqnarray}
Changing variables by letting $\epsilon = K^{1/4} \sqrt{u}$, \eqref{eq:dudley1} becomes
\begin{eqnarray}
\gamma_2(\calB,\|\cdot\|_{2\rightarrow 2}) &\hspace{-0.1cm}\!\!\!\!\le\!\!\!\!& 2 C \frac{1}{\sqrt{K}} \int_{0}^{1} \epsilon \sqrt{\log \calN(\ol\setX_{\ol{r}}, \|\cdot\|_F, \epsilon)} \; d\epsilon\nonumber\\
&\hspace{-0.1cm} \!\!\!\! \le \!\!\!\!& C \frac{1}{\sqrt{K}} \int_{0}^{1} \sqrt{\log \calN(\ol\setX_{\ol{r}}, \|\cdot\|_F, \epsilon)} \; d\epsilon,
\end{eqnarray}
where the factor of $2$ has been absorbed into the universal constant $C$. Now, by directly applying Lemma~\ref{lem:cover-number-mps}, we have that
\[
\calN(\ol\setX_{\ol{r}}, \|\cdot\|_F, \epsilon) \le \bigg(\frac{3n\ol{r}}{\epsilon}\bigg)^{nd^2{\ol{r}}^2}.
\]
Therefore,
\begin{align}
\gamma_2(\calB,\|\cdot\|_{2\rightarrow 2}) &\le  C \frac{1}{\sqrt{K}} \int_{0}^{1} \sqrt{\log \calN(\ol\setX_{\ol{r}}, \|\cdot\|_F, \epsilon)} \; d\epsilon \nonumber \\
&\le C \frac{1}{\sqrt{K}} \int_{0}^{1} \sqrt{\log \bigg(\frac{3n\ol{r}}{\epsilon}\bigg)^{nd^2{\ol{r}}^2}} \; d\epsilon \nonumber \\
&\le C \frac{1}{\sqrt{K}} \int_{0}^{1} \sqrt{{nd^2{\ol{r}}^2} \log \bigg(\frac{3n\ol{r}}{\epsilon}\bigg)} \; d\epsilon \nonumber \\
&\le C \sqrt{\frac{nd^2{\ol{r}}^2 }{K}} \int_{0}^{1} \sqrt{ \log \bigg(\frac{3n\ol{r}}{\epsilon}\bigg)} \; d\epsilon \nonumber \\
&\le C \sqrt{\frac{nd^2{\ol{r}}^2 \log n\ol{r} }{K}}, \label{eq:finalgamma2}
\end{align}
where the last line follows from the fact that
\[
\int_{0}^{1} \sqrt{ \log \bigg(\frac{3n\ol{r}}{\epsilon}\bigg)} \; d\epsilon \le C + \sqrt{\log (3n\ol{r})} \le C \sqrt{\log n\ol{r}},
\]
and each appearance of $C$ denotes an unspecified universal constant that may change from instance to instance. Putting together the above quantities, we have the following three numbers which appear in~\eqref{eq:rauhutthm3}:
\begin{align}\label{eq:euv}
    E :&= \gamma_2(\calB,\|\cdot\|_{2\rightarrow 2}) \left(\gamma_2(\calB,\|\cdot\|_{2\rightarrow 2}) + d_F(\calB)\right)\nonumber\\
    & \hspace{0.5cm} + d_F(\calB)d_{2\rightarrow 2}(\calB) \nonumber \\ &= \gamma^2_2(\calB,\|\cdot\|_{2\rightarrow 2}) + \gamma_2(\calB,\|\cdot\|_{2\rightarrow 2}) + \frac{1}{\sqrt{K}}, \nonumber \\
    V :&=  d_4^2(\calB) = \frac{1}{\sqrt{K}}, \\
    U :&= d_{2\rightarrow 2}^2(\calB) = \frac{1}{K}\nonumber.
\end{align}

Plugging~\eqref{eq:bvequiv} and~\eqref{eq:euv} into~\eqref{eq:rauhutthm3}, we have
\begin{eqnarray}
&\!\!\!\!\!\!\!\!&\hspace{-0.5cm}\mathbb{P}\bigg( \sup_{\vrho \in \ol\setX_{\ol{r}}} \left| \| \mV_{\vrho}\vxi \|_2^2 - \E \| \mV_{\vrho}\vxi \|_2^2 \right| \ge c_1(\gamma^2_2(\calB,\|\cdot\|_{2\rightarrow 2})\nonumber\\
&\!\!\!\!\!\!\!\!&\hspace{-0.5cm}  + \gamma_2(\calB,\|\cdot\|_{2\rightarrow 2}) + \frac{1}{\sqrt{K}}) + t \bigg) \le 2 e^{-c_2 \min \left\{Kt^2,Kt\right\} }.
\end{eqnarray}
Our goal is to find a value of $K$ such that~\eqref{eq:ripequiv} holds with probability at least $1-\bar{\epsilon}$.

Let $t = \delta/2$ and recall that $\delta < 1$, so $\min \left\{Kt^2,Kt\right\} = K\delta^2/4$. If we choose $K > C \delta^{-2} \log(1/\bar{\epsilon})$ for an appropriately chosen constant $C$, we have $2 e^{-c_2 \min \left\{Kt^2,Kt\right\}} \le \bar\epsilon$. Next, using the bound on $\gamma^2_2(\calB,\|\cdot\|_{2\rightarrow 2})$ from~\eqref{eq:finalgamma2}, we see that by choosing
\begin{equation}
K \ge C \cdot \frac{nd^2\ol{r}^2 (\log n\ol{r})}{\delta^2},
\end{equation}
for an appropriately chosen constant $C$, then we guarantee that
\begin{equation}
c_1(\gamma^2_2(\calB,\|\cdot\|_{2\rightarrow 2}) + \gamma_2(\calB,\|\cdot\|_{2\rightarrow 2}) + \frac{1}{\sqrt{K}}) \le \frac{\delta}{2}.
\end{equation}
Putting all of the pieces together, we conclude that when~\eqref{eq:mrip} is satisfied, we have
\begin{equation}
 \P{\sup_{\vrho \in \ol\setX_{\bar{r}}} \left| \| \mV_{\vrho}\vxi \|_2^2 - \E \| \mV_{\vrho}\vxi \|_2^2 \right| \ge \delta } \le
\bar\epsilon.
\end{equation}
We have thus proved that~\eqref{eq:ripequiv} holds with probability at least $1-\bar\epsilon$. This completes the proof of \Cref{thm:SubgaussianRIP-appendix}.
\end{proof}

\section{Proof of \Cref{Small_Method_ROSubgaussian_Measurement}}
\label{Proof_Of_MPO_Subgaussian}

\begin{proof} In this section, we will apply Mendelson's small ball method to derive \Cref{Small_Method_ROSubgaussian_Measurement}. According to \Cref{Small Ball Method_original}, and supposing that $\{ \va_{1},\dots, \va_{K} \}$ are selected independently from the standard complex normal distribution $\va\sim\calC\calN({\bm 0},\mId_{d^n})$,  we need to bound
\begin{eqnarray}
    \label{ROSubgaussian_Measurement_H}
    H_{\xi}(\ol\setX_{\ol{r}}) = \inf_{\vrho\in\ol\setX_{\ol{r}}} \mathbb{P} \{|\< \va\va^\dagger,\vrho \>|\geq \xi  \}
\end{eqnarray}
and
\begin{eqnarray}
    \label{ROSubgaussian_Measurement_W}
    W(\ol\setX_{\ol{r}}) = \E \sup_{\vrho\in \ol\setX_{\ol{r}}}\frac{1}{\sqrt{K}}\sum_{k=1}^K \<\epsilon_{k}\va_{k}\va_{k}^\dagger, \vrho  \>,
\end{eqnarray}
where $\{\epsilon_k\}$ is a Rademacher sequence independent from everything else.

\begin{itemize}
  \item{Lower bound of $H_{\xi}(\ol\setX_{\ol{r}})$:} To bound $H_{\xi}(\ol\setX_{\ol{r}})$, we use the Paley-Zygmund inequality (\Cref{Foucart13MathIntro_lemma716}). Specifically, we can get
\begin{align}
    \label{ROSubgaussian_Measurement_H_1}
&\hspace{-0.3cm}\!\!H_{\xi}(\ol\setX_{\ol{r}})= \inf_{\vrho\in\ol\setX_{\ol{r}}} \P{|\< \va\va^\dagger,\vrho \>|\geq \xi  }\nonumber\\
    &=\inf_{\vrho\in\ol\setX_{\ol{r}}} \P{|\< \va\va^\dagger,\vrho \>|^2\geq \xi^2  }\nonumber\\
    &\geq\inf_{\vrho\in\ol\setX_{\ol{r}}} \P{|\< \va\va^\dagger,\vrho \>|^2\geq \frac{1}{2}\E[|\< \va\va^\dagger,\vrho \>|^2]  }\nonumber\\
    &\geq\inf_{\vrho\in\ol\setX_{\ol{r}}} \frac{(\E[|\< \va\va^\dagger,\vrho \>|^2])^2}{4\E[|\< \va\va^\dagger,\vrho \>|^4]}, \ \forall \xi\le \sqrt{ \frac{1}{2}\E[|\< \va\va^\dagger,\vrho \>|^2] },
\end{align}
where the first inequality follows because $\P{|\< \va\va^\dagger,\vrho \>|^2\\ \geq \xi^2  }$ is a decreasing function with respect to $\xi$, and the second inequality uses \Cref{Foucart13MathIntro_lemma716}.

Next we start to analyze $\frac{(\E[|\< \va\va^\dagger,\vrho \>|^2])^2}{4\E[|\< \va\va^\dagger,\vrho \>|^4]}$.
By the fact that $\langle \va\va^\dagger,\vrho \rangle$ is a second-order polynomial in the entries of Gaussian random vector $\va$, we can obtain $\|\< \va\va^\dagger,\vrho\>\|_{\psi_1}\leq c\|\va\|_{\psi_2}^2\|\vrho\|_F\leq O(1)$ \cite{zajkowski2020bounds} for some constant $c$; thus, $\langle \va\va^\dagger,\vrho \rangle$ is a subexponential random variable. Hence, there exists $\alpha$ such that $\E e^{\alpha |\langle \va\va^\dagger,\vrho \rangle|}$ is finite. It follows from \Cref{lem:hypercontractivity} that there exists a constant $C_0$ such that
\begin{eqnarray}
    \label{ROSubgaussian_Measurement_H_fourth and second}
\E [|\langle \va\va^\dagger,\vrho \rangle|^4] \le C_0 \parans{\E [|\langle \va\va^\dagger,\vrho \rangle|^2]}^2.
\end{eqnarray}

We need obtain $\xi$ to finish the analysis. To that end, we bound the expectation $\E[|\langle \va\va^\dagger,\vrho \rangle|^2]$. Since $\vrho$ is Hermitian, it has the eigenvalue decomposition $\vrho = \sum_{i=1}^{d^n} \lambda_{i} \vu_i \vu_i^\dagger$. Using the same argument as in \cite{KuengACHA17}, we can obtain that
\begin{eqnarray}
    \label{lower bound of second expectation in rank-one gaussian}
\E[|\langle \va\va^\dagger,\vrho \rangle|^2]  \ge 1.
\end{eqnarray}

Thus, we can set $\xi=\frac{1}{2}$. There exists a universal constant $c_0$ such that
\begin{eqnarray}
    \label{ROSubgaussian_Measurement_H_conclusion}
&\!\!\!\!\!\!\!\!&\P{|\langle \va\va^\dagger,\vrho \rangle|^2 \ge \frac{1}{2} } \ge c_0, \ \forall \vrho\in\ol\setX_{\ol{r}} \nonumber\\
&\!\!\!\!\!\!\!\!& \hspace{0.5cm} \Rightarrow \quad H_{\xi}(\ol\setX_{\ol{r}}) \ge c_0.
\end{eqnarray}

  \item{Upper bound of $W(\ol\setX_{\ol{r}})$:} As discussed above, $ \{\<\epsilon_{k}\va_{k}\va_{k}^\dagger, \vrho  \>\}_{k=1}^{K}$ are independent subexponential random variables since $\|\<\epsilon_{k}\va_{k}\va_{k}^\dagger, \vrho  \>\|_{\psi_1} \leq c_1\|\vrho\|_F\|\va_k\|_{\psi_2}^2  \leq c_2$ \cite{zajkowski2020bounds} where $c_1, c_2$ are some universal constants and the second inequality follows from $\|\vrho\|_F=1$ and $\|\va_k\|_{\psi_2}\leq O(1)$. In addition, we have $\E \<\epsilon_{k}\va_{k}\va_{k}^\dagger, \vrho  \> = 0$ because of the Rademacher random variables $\epsilon_{k}$.

  By the analysis in the covering argument of  {Appendix} \ref{ProofOf<H,X>forSubgaussian}, when $K = \Omega(n d^2 \ol{r}^2\log n)$,  we have
\begin{eqnarray}
    \label{ROSubgaussian_Measurement_W_1}
    W(\ol\setX_{\ol{r}}) &\!\!\!\!=\!\!\!\!&\E \sup_{\vrho\in \ol\setX_{\ol{r}}}\frac{1}{\sqrt{K}}\sum_{k=1}^K \<\epsilon_{k}\va_{k}\va_{k}^\dagger, \vrho  \>\nonumber\\
    &\!\!\!\!\leq\!\!\!\!& c_3 d\ol{r}\sqrt{n \log n},
\end{eqnarray}
where $c_3$ is a positive constant.

  \item{Contraction:} Combining \eqref{ROSubgaussian_Measurement_H_conclusion} and \eqref{ROSubgaussian_Measurement_W_1}, we set $t=\frac{c_0\sqrt{K}}{2}$ and $K\geq \frac{256c_3^2 nd^2 \ol{r}^2\log n }{c_0^2}$ in \eqref{Final_Conclusion_original}, then get
\begin{eqnarray}
    \label{ROSubgaussian_Measurement_1_final}
    &\!\!\!\!\!\!\!\!&\inf_{\vrho\in\ol\setX_{\ol{r}}}\bigg(\sum_{k=1}^K|\< \va_{k}\va_{k}^\dagger, \vrho \>  |^2   \bigg)^{\frac{1}{2}}\nonumber\\
    &\!\!\!\!\geq\!\!\!\!& \xi \sqrt{K} H_{\xi}(\ol\setX_{\ol{r}}) -2W(\ol\setX_{\ol{r}}) -t\xi\nonumber\\
    &\!\!\!\!\geq\!\!\!\!&\frac{c_0\sqrt{K}}{2}-2c_3 d\ol{r}\sqrt{n\log n}-\frac{t}{2}\geq\frac{c_0\sqrt{K}}{8}.
\end{eqnarray}
with probability $1-e^{-\frac{c_0K}{8}}$.

\end{itemize}
This completes the proof of \Cref{Small_Method_ROSubgaussian_Measurement}.
\end{proof}

\subsection{Proof of the upper bound for $W(\ol\setX_{\ol{r}})$ in \eqref{ROSubgaussian_Measurement_W_1}}
\label{ProofOf<H,X>forSubgaussian}

\begin{proof}

In this section, we apply a covering argument to prove \eqref{ROSubgaussian_Measurement_W_1}.
For an MPO $\vrho$ of the form \eqref{DefOfMPO}, for simplicity, we denote it by $\vrho = [\mX_1,\dots, \mX_n]\in\ol\setX_{\ol{r}}$.
Also denote $\mA_k = \epsilon_{k}\va_{k}\va_{k}^\dagger$.
For each set of matrices $\{L(\mX_\ell)\in\R^{d^2r_{\ell-1}\times r_\ell}: \|L(\mX_\ell)\|\leq 1\}$ ($r_0=1$), according to \cite{zhang2018tensor}, we can construct an $\epsilon$-net $\{L(\mX_\ell^{(1)}), \dots, L(\mX_\ell^{(N_\ell)})  \}$ with the covering number $N_\ell\leq (\frac{4+\epsilon}{\epsilon})^{d^2r_{\ell-1}r_\ell}$ such that
\begin{eqnarray}
    \label{ProofOf<H,X>forSubgaussian_proof1}
    \sup_{L(\mX_\ell): \|L(\mX_\ell)\|\leq 1}~\min_{p_\ell\leq N_\ell} \|L(\mX_\ell)-L(\mX_\ell^{(p_\ell)})\|\leq \epsilon,
\end{eqnarray}
for all $\ell=1,\dots, n-1$.
Also, we can construct an $\epsilon$-net $\{ L(\mX_n^{(1)}), \dots, L(\mX_n^{(N_n)}) \}$ for $\{L(\mX_n)\in\R^{d^2r_{n-1}\times 1}: \|L(\mX_n)\|_F\leq 1  \}$ such that
\begin{eqnarray}
    \label{ProofOf<H,X>forSubgaussian_proof2}
    \sup_{L(\mX_n): \|L(\mX_n)\|_F\leq 1}\min_{p_n\leq N_n} \|L(\mX_n)-L(\mX_n^{(p_n)})\|_F\leq \epsilon,
\end{eqnarray}
with the covering number $N_n\leq (\frac{2+\epsilon}{\epsilon})^{d^2r_{n-1}}$. Note that different from \Cref{lem:cover-number-mps} that uses the $\|\cdot\|_{1,2}$ norm, here we use the spectral norm $\|\cdot\|$ and Frobenius norm $\|\cdot\|_F$ to define the covering numbers.

For simplicity, we use $\calI$ to denote the index set $[N_1]\times \cdots \times [N_n]$. Denote by
\begin{align*}
    [\mX_1^\star,\dots, \mX_n^\star] &:= \hspace{-0.9cm}
     \argmax_{\mbox{\tiny$\begin{array}{c} L(\mX_\ell)\in\R^{d^2r_{\ell-1}\times r_\ell}\\
     \|L(\mX_\ell)\|\leq 1, \ell=1,\dots, n-1\\
     \|L(\mX_n)\|_F\leq 1 \end{array}$}}\hspace{-0.5cm}  \frac{1}{\sqrt{K}}\sum_{k=1}^K\<\mA_k, [\mX_1,\dots, \mX_n]  \>,\\
    T & := \frac{1}{\sqrt{K}}\sum_{k=1}^K\<\mA_k,  [\mX_1^\star,\dots, \mX_N^\star]  \>.
\end{align*}

According to the construction of the $\epsilon$-nets, there exists $p=(p_1,\dots, p_n)\in\calI$ such that
\begin{eqnarray}
    \label{ProofOf<H,X>forSubgaussian_proof5}
    &&\|L(\mX_\ell^\star) - L(\mX_\ell^{(p_\ell)}) \|\leq\epsilon, \ \  \ell=1,\dots,n-1 \nonumber\\
    && \hspace{-0cm} \text{and}  \ \ \  \|L(\mX_n^\star) - L(\mX_n^{(p_n)})\|_F\leq\epsilon.
\end{eqnarray}

Now taking $\epsilon=\frac{1}{2n}$ gives
\begin{align} \label{ProofOf<H,X>forSubgaussian_proof6}
    T&=\frac{1}{\sqrt{K}}\sum_{k=1}^K\<\mA_k, [\mX_1^{(p_1)},\dots, \mX_n^{(p_n)}]  \>\nonumber\\
    &+\frac{1}{\sqrt{K}}\sum_{k=1}^K\bigg\<\mA_k, [\mX_1^\star,\dots, \mX_n^\star] - [\mX_1^{(p_1)},\dots, \mX_n^{(p_n)}]  \bigg\>\nonumber\\
    &=\frac{1}{\sqrt{K}}\sum_{k=1}^K\<\mA_k, [\mX_1^{(p_1)},\dots, \mX_n^{(p_n)}]  \>+ \frac{1}{\sqrt{K}}\sum_{k=1}^K\bigg\<\mA_k,\nonumber\\
    &\sum_{a_1=1}^n[\mX_1^{(p_1)},\dots,\! \mX_{a_1-1}^{(p_1)}, \mX_{a_1}^{(p_{a_1})}\! -\! \mX_{a_1}^\star, \mX_{a_1+1}^\star,  \dots, \! \mX_n^\star] \bigg\>\nonumber\\
    &\leq \frac{1}{\sqrt{K}}\sum_{k=1}^K\<\mA_k, [\mX_1^{(p_1)},\dots, \mX_n^{(p_n)}]  \> + n \epsilon T\nonumber\\
    &=\frac{1}{\sqrt{K}}\sum_{k=1}^K\<\mA_k, [\mX_1^{(p_1)},\dots, \mX_n^{(p_n)}]  \>+\frac{T}{2},
\end{align}
where we write $[\mX_1^\star,\dots, \mX_n^\star] - [\mX_1^{(p_1)},\dots, \mX_n^{(p_n)}]$ in the second line as the sum of $n$ terms
according to \Cref{EXPANSION_A1TOAN-B1TOBN_1}.

Notice that for any $\{L(\mX_\ell)\}_{\ell\leq n-1}$ and $L(\mX_n) $, where $\|L(\mX_\ell)\|\leq 1$ and $\|L(\mX_n)\|_F\leq 1$, we have $\|\vrho\|_F=\|[\mX_1,\dots, \mX_n]\|_F\leq 1$. As in the discussion in {Appendix} \ref{Proof_Of_MPO_Subgaussian}, $\< \mA_k, \vrho\>=\<\epsilon_{k}\va_{k}\va_{k}^\dagger, \vrho  \>$ is a centered subexponential random variable with subexponential norm of order $O(1)$, so we can use \Cref{Sub-exponentialInequality} to get
\begin{eqnarray}
    \label{ProofOf<H,X>forSubgaussian_proof8}
    &\!\!\!\!\!\!\!\!&\P{\bigg|\frac{1}{\sqrt{K}}\sum_{k=1}^K\<\mA_k,  [\mX_1^{(p_1)},\dots, \mX_n^{(p_n)}] \>\bigg|\geq t}\nonumber\\
    &\!\!\!\!\leq\!\!\!\!& e^{1-c_1\min\{ \frac{t^2}{c_2^2}, \frac{t\sqrt{K}}{c_2}  \}},
\end{eqnarray}
where $c_1$ and $c_2$ are constants.

Combining \eqref{ProofOf<H,X>forSubgaussian_proof6} and \eqref{ProofOf<H,X>forSubgaussian_proof8} together yields
\begin{eqnarray} \label{ProofOf<H,X>forSubgaussian_proof9}
    &\!\!\!\!\!\!\!\!&\P{T\geq t}\nonumber\\
    &\!\!\!\!\leq\!\!\!\!& \P{\max_{p_1,\dots, p_n}\bigg|\frac{1}{\sqrt{K}}\sum_{k=1}^K\<\mA_k,  [\mX_1^{(p_1)},\dots, \mX_n^{(p_n)}] \>\bigg|\geq \frac{t}{2}}\nonumber\\
    &\!\!\!\!\leq\!\!\!\!& \parans{\prod_{i=1}^n N_i} e^{1-c_1\min\{ \frac{t^2}{c_2^2}, \frac{t\sqrt{K}}{c_2}  \}}\nonumber\\
    &\!\!\!\!\leq\!\!\!\!&\bigg(\frac{4+\epsilon}{\epsilon}\bigg)^{d^2r_1+\sum_{i=2}^{n-1}d^2r_{i-1}r_i+d^2r_{n-1}}e^{1-c_1\min\{ \frac{t^2}{c_2^2}, \frac{t\sqrt{K}}{c_2}  \}}\nonumber\\
    &\!\!\!\!\leq\!\!\!\!&e^{1-c_1\min\{ \frac{t^2}{c_2^2}, \frac{t\sqrt{K}}{c_2}  \}+C n d^2\ol{r}^2\log n},\nonumber
\end{eqnarray}
where $\ol{r}=\max_{i=1,\dots,n-1}r_i$, $C$ is a universal constant, and the last line uses $\frac{4+\epsilon}{\epsilon}=\frac{4+\frac{1}{2n}}{\frac{1}{2n}}=8n+1$ based on the assumption $\epsilon=\frac{1}{2n}$ in \eqref{ProofOf<H,X>forSubgaussian_proof6}. Now choosing $K = c_3 n d^2 \ol{r}^2\log n$ with a positive constant $c_3$ and plugging this into the above equation, we can find constants $c_4$ and $c_5$ such that
\begin{align*}
 \P{T\geq t} \le e^{-c_4 t\sqrt{n d^2 \ol{r}^2\log n}  },  \ \forall \ t \ge c_5 \sqrt{n d^2 \ol{r}^2\log n},
\end{align*}
which further implies that
\begin{align}
\label{ProofOf<H,X>forSubgaussian_proof9_1}
& W(\ol\setX_{\ol{r}})= \E T \nonumber\\
    &\leq c_5 \sqrt{n d^2 \ol{r}^2\log n} +  \int_{c_5 \sqrt{n d^2 \ol{r}^2\log n}}^{\infty}\P{T\geq t} dt\nonumber\\
    &\le c_5 \sqrt{n d^2 \ol{r}^2\log n} +  \int_{c_5 \sqrt{n d^2 \ol{r}^2\log n}}^{\infty} e^{-c_4 t\sqrt{n d^2 \ol{r}^2\log n} } dt\nonumber\\
    &\leq  c_6d\bar{r}\sqrt{n\log n},
\end{align}
where $c_6$ is a positive constant.
\end{proof}

\section{Proof of \Cref{Small Ball Method_abs_New}}
\label{Proof_Lemma_abs_small ball new}

\begin{proof}
First, we introduce a directional version of the marginal tail function:
\begin{align}
    \label{marginal_tail_function_New11}
    H_{\xi}(E;\vb) = \frac{1}{K} \sum_{k=1}^K\mathbb{P} \{ |\<\vb_{k},\vu \>|\geq \xi  \},\ \text{for}\ \vu\in E \ \text{and} \ \xi>0.
\end{align}

Lyapunov’s inequality and Markov’s inequality give the following bounds
\begin{eqnarray}
    \label{LyapunovAndMarkov_New}
    &\!\!\!\!\!\!\!\!&\bigg(\frac{1}{QK}\sum_{i=1}^Q\sum_{k=1}^K|\<\vb_{i,k}, \vu  \>|^2  \bigg)^{\frac{1}{2}}\nonumber\\
    &\!\!\!\!\geq\!\!\!\!& \frac{1}{QK}\sum_{i=1}^Q\sum_{k=1}^K|\<\vb_{i,k}, \vu  \>|\nonumber\\
    &\!\!\!\!\geq\!\!\!\!& \frac{\xi}{QK} \sum_{i=1}^Q\sum_{k=1}^K\mathbbm{1}(|\<\vb_{i,k}, \vu  \>|\geq \xi  ),
\end{eqnarray}
where we write $\mathbbm{1}(A)$ for the $0-1$ random variable that indicates whether the event $A$ takes place.
Add and subtract $H_{2\xi}(E;\vb)$ inside the sum, and then take the infimum over $\vu\in E$ to reach the inequality
\begin{eqnarray}
    \label{LyapunovAndMarkov_Q_New}
    &&\hspace{-1.5cm}\inf_{\vu\in E}\bigg(\frac{1}{QK}\sum_{i=1}^Q\sum_{k=1}^K|\<\vb_{i,k}, \vu  \>|^2  \bigg)^{\frac{1}{2}} \geq \xi\inf_{\vu\in E}H_{2\xi}(E;\vb)\nonumber\\
    &&\hspace{-1.5cm} -  \frac{\xi}{Q}\sup_{\vu\in E}\sum_{i=1}^Q\bigg[H_{2\xi}(E;\vb) - \frac{1}{K}\sum_{k=1}^K\mathbbm{1}(|\<\vb_{i,k}, \vu  \>|\geq \xi  )\bigg].
\end{eqnarray}
Observe that each summand over index $i$ at the RHS is independent and bounded in magnitude by $1$. Therefore, based on \cite[Section 6.1]{boucheron2013concentration}, we have
\begin{eqnarray}
    \label{LyapunovAndMarkov_Q_Supremum_New}
    &\!\!\!\!\!\!\!\!&\hspace{-0.6cm} \sup_{\vu\in E}\sum_{i=1}^Q\left[H_{2\xi}(E;\vb) - \frac{1}{K}\sum_{k=1}^K\mathbbm{1}(|\<\vb_{i,k}, \vu  \>|\ge\xi  )\right]\nonumber\\
    &\!\!\!\!\!\!\!\!&\hspace{-0.6cm}\leq \E \sup_{\vu\in E}\sum_{i=1}^Q\! \bigg[H_{2\xi}(E;\vb)\! - \! \frac{1}{K}\sum_{k=1}^K\mathbbm{1}(|\<\vb_{i,k}, \vu  \>|\geq \xi  )\bigg]\! +\! t\sqrt{Q},\nonumber\\
\end{eqnarray}
with probability at least $1-e^{-\frac{t^2}{2}}$.

Next, we simplify the expected supremum. Introduce a soft indicator function:
$$\phi_{\xi}: \R\to [0,1] \ \text{where} \ \phi_{\xi}(s)=\begin{cases} 0, & |s|\leq \xi, \\ (|s|-\xi)/\xi, & \xi<|s|\leq 2\xi, \\ 1, &2\xi < |s|. \end{cases}$$

According to \cite{tropp2015convex}, we can derive
\begin{eqnarray}
    \label{LyapunovAndMarkov_Q_Supremum_Expectation_New}
    &\!\!\!\!\!\!\!\!\!\!&\E \sup_{\vu\in E}\sum_{i=1}^Q\bigg[H_{2\xi}(E;\vb) - \frac{1}{K}\sum_{k=1}^K\mathbbm{1}(|\<\vb_{i,k}, \vu  \>|\geq \xi  )\bigg]\nonumber\\
    &\!\!\!\!\!=\!\!\!\!\!& \frac{1}{K}\E \sup_{\vu\in E}\sum_{i=1}^Q\sum_{k=1}^K\bigg[\E \mathbbm{1}(|\<\vb_{k}, \vu  \>|\geq 2\xi  )\! - \! \mathbbm{1}(|\<\vb_{i,k}, \vu  \>|\geq \xi  )\bigg]\nonumber\\
    &\!\!\!\!\!\leq\!\!\!\!\!&\frac{1}{K}\E \sup_{\vu\in E}\sum_{i=1}^Q\bigg[\E \sum_{k=1}^K\phi_{\xi}\left(\<\vb_{k}, \vu  \>\right) - \sum_{k=1}^K\phi_{\xi}\left(\<\vb_{i,k}, \vu  \>   \right)\bigg]\nonumber\\
    &\!\!\!\!\!\leq\!\!\!\!\!& \frac{2}{K}\E \sup_{\vu\in E}\sum_{i=1}^Q\epsilon_{i}\sum_{k=1}^K\phi_{\xi}(\<\vb_{i,k}, \vu  \>)  \nonumber\\
    &\!\!\!\!\!\leq\!\!\!\!\!&\frac{2}{\xi K}\E \sup_{\vu\in E}\sum_{i=1}^Q\sum_{k=1}^K\epsilon_{i}\<\vb_{i,k}, \vu  \>,
\end{eqnarray}
where in the first equation, we write the marginal tail function as an expectation, and then we bound the two indicators using the soft indicator function.
In the second inequality, where $\epsilon_i,i=1,\dots,Q$ are independent Rademacher random variables that are independent from everything else, we  use the Giné–Zinn symmetrization \cite[Lemma 2.3.1]{wellner2013weak} due to the independence of $\sum_{k=1}^K\phi_{\xi}\left(\<\vb_{i,k}, \vu  \>\right)$ for $ i=1,\dots,Q$.
In the last line, due to the contraction of $\xi\phi_{\xi}$, we apply the Rademacher comparison principle \cite[Eqn.(4.20)]{ledoux1991probability}.

Hence, we have
\begin{align}
    \label{Final_Conclusion_1_New}
    &\!\!\!\!\!\!\!\!&\inf_{\vu\in E}\bigg(\frac{1}{QK}\sum_{i=1}^Q\sum_{k=1}^K|\<\vb_{i,k}, \vu  \>|^2  \bigg)^{\frac{1}{2}} \geq \xi\inf_{\vu\in E}H_{2\xi}(E;\vb)\nonumber\\
    &\!\!\!\!\!\!\!\!&-  \frac{\xi}{Q}\bigg[\frac{2}{\xi K}\E \sup_{\vu\in E}\sum_{i=1}^Q\sum_{k=1}^K\epsilon_{i}\<\vb_{i,k}, \vu  \>+t\sqrt{Q}\bigg].
\end{align}
Letting $\vh =\frac{1}{\sqrt{Q K}}\sum_{i=1}^Q\sum_{k=1}^K \epsilon_{i}\vb_{i,k}$, we can finally obtain
\begin{eqnarray}
    \label{Final_Conclusion_2_New}
    &\!\!\!\!\!\!\!\!&\inf_{\vu\in E}\bigg(\sum_{i=1}^Q\sum_{k=1}^K|\<\vb_{i,k}, \vu  \>|^2  \bigg)^{\frac{1}{2}} \geq \xi\sqrt{QK}\inf_{\vu\in E}H_{2\xi}(E;\vb)\nonumber\\
    &\!\!\!\!\!\!\!\!&\hspace{3.7cm} -2\E \sup_{\vu\in E}\<\vh ,\vu\> -t \xi\sqrt{K}.
\end{eqnarray}
This completes the proof of \Cref{Small Ball Method_abs_New}.
\end{proof}

\section{Proof of \Cref{Small_Method_ROHaar_Measurement}}
\label{ProofOfSmallBallMethod_For_Haar_Measurement}

\begin{proof} We prove \Cref{Small_Method_ROHaar_Measurement} using the modified Mendelson's small ball method. Let $\{ \vphi_{1},\dots, \vphi_{K} \}$ be the first $K$ columns of a randomly generated Haar distributed unitary matrix, and let $\{ \vphi_{i,1},\dots, \vphi_{i,K} \}_{i=1}^Q$ be independent copies of $\{ \vphi_{1},\dots, \vphi_{K} \}$. According to \Cref{Small Ball Method_abs_New},  we need to  bound
\begin{eqnarray} \label{ProofOfSmallBallMethod_For_Haar_Measurement_H}
    H_{\xi}(\ol\setX_{\ol{r}}) = \inf_{\vrho\in\ol\setX_{\bar{r}}} \frac{1}{K}\sum_{k=1}^K\mathbb{P} \{|\< \vphi_k\vphi_k^\dagger,\vrho \>|\geq \xi  \}
\end{eqnarray}
and
\begin{eqnarray}
    \label{ProofOfSmallBallMethod_For_Haar_Measurement_W}
    W(\ol\setX_{\ol{r}}) = \E \sup_{\vrho\in \ol\setX_{\bar{r}}}\frac{1}{\sqrt{QK}}\sum_{i=1}^Q\sum_{k=1}^K \<\epsilon_i\vphi_{i,k}\vphi_{i,k}^\dagger, \vrho  \>,
\end{eqnarray}
where $\epsilon_i,i=1,\dots,Q$ are indepdent Rademacher random variables. Below we study the two quantities separately.

\begin{itemize}
  \item{Lower bound of $H_{\xi}(\ol\setX_{\ol{r}})$}: As in {Appendix} \ref{Proof_Of_MPO_Subgaussian}, we also use the Paley-Zygmund inequality (\Cref{Foucart13MathIntro_lemma716}) to bound $H_{\xi}(\ol\setX_{\ol{r}})$. Specifically,
\begin{eqnarray}
    \label{ProofOfSmallBallMethod_For_Haar_Measurement_H_1}
    &\!\!\!\!\!\!\!\!&H_{\xi}(\ol\setX_{\bar{r}})= \inf_{\vrho\in\ol\setX_{\bar{r}}} \frac{1}{K}\sum_{k=1}^K \P{|\< \vphi_k\vphi_k^\dagger,\vrho \>|\geq \xi  }\nonumber\\
    &\!\!\!\!=\!\!\!\!&\inf_{\vrho\in\ol\setX_{\bar{r}}} \frac{1}{K}\sum_{k=1}^K\P{|\< \vphi_k\vphi_k^\dagger,\vrho \>|^2\geq \xi^2  }\nonumber\\
    &\!\!\!\!\geq\!\!\!\!&\inf_{\vrho\in\ol\setX_{\bar{r}}} \frac{1}{K}\sum_{k=1}^K\P{|\< \vphi_k\vphi_k^\dagger,\vrho \>|^2\geq \frac{1}{2}\E[|\< \vphi_k\vphi_k^\dagger,\vrho \>|^2]  }\nonumber\\
    &\!\!\!\!\geq\!\!\!\!&\inf_{\vrho\in\ol\setX_{\bar{r}}} \frac{1}{K}\sum_{k=1}^K\frac{(\E[|\< \vphi_k\vphi_k^\dagger,\vrho \>|^2])^2}{4\E[|\< \vphi_k\vphi_k^\dagger,\vrho \>|^4]} \geq c_0,\nonumber\\
    &\!\!\!\!\!\!\!\!& \forall \xi\le \sqrt{ \frac{1}{2}\E[|\< \vphi_k\vphi_k^\dagger,\vrho \>|^2] },
\end{eqnarray}
where the first inequality follows because $\P{|\< \vphi_k\vphi_k^\dagger,\vrho \>|^2 \geq \xi^2  }$ is a decreasing function with respect to $\xi$, the second inequality uses the Paley-Zygmund inequality (\Cref{Foucart13MathIntro_lemma716}) for $|\< \vphi_k\vphi_k^\dagger,\vrho \>|^2$, and the last inequality uses \Cref{lem:hypercontractivity}. Below we show that $\abs{\< \vphi_k\vphi_k^\dagger,\vrho \>}$ is a subexponential random variable and hence satisfies the requirements for both \Cref{Foucart13MathIntro_lemma716} and \Cref{lem:hypercontractivity}. According to the process of Gram-Schmidt orthogonalization for obtaining a Haar-distributed unitary matrix, $\vphi_1$ can be obtained by normalizing a standard complex normal random vector from the distribution $\calC\calN({\bm 0},\mId_{d^N})$. Using $\|\sqrt{d^n}\vphi_1\|_{\psi_2}\leq O(1)$ \cite{vershynin2018high}, we have
\begin{align}\|\< \vphi_1\vphi_1^\dagger,\vrho \>\|_{\psi_1}\leq \frac{c}{d^n}\|\sqrt{d^n}\vphi_1\|_{\psi_2}^2\|\vrho\|_F = O(\frac{1}{d^n})
\label{eq:subexp-phiRho}\end{align}
for some constant $c$ and hence $\abs{\< \vphi_1\vphi_1^\dagger,\vrho \>}$ is a subexponential random variable. Finally according to \cite{petz2004asymptotics}, because all the entries in a Haar-distributed unitary matrix have the same distribution due to the translation invariance of \Cref{Stastitic property of UV}, we conclude that each $\abs{\< \vphi_k\vphi_k^\dagger,\vrho \>}$ is a subexponential random variable for all $k = 1,\ldots, d^n$.

To complete the proof of this part, we now study $\E[|\< \vphi_k\vphi_k^\dagger,\vrho \>|^2]$, controlling the upper bound of $\xi$ in \eqref{ProofOfSmallBallMethod_For_Haar_Measurement_H_1}. Towards that goal, for any $\vrho\in\ol\setX_{\ol r}$, we denote its eigenvalue decomposition by $\vrho=\sum_{i=1}^{d^n}\lambda_i \vu_i \vu_i^\dagger$, where $\{\vu_i\}$ are unitary vectors and $\{\lambda_i\}$ are the eigenvalues with $\sum_{i=1}^{d^n} \lambda_i^2=1$.
Now, following \eqref{Unitary Distrubuted_2 another}, we have
\begin{eqnarray}
    \label{Unitary Distrubuted_2 other}
    \E[|\< \vphi_k\vphi_k^\dagger,\vrho \>|^2]  &\!\!\!\!=\!\!\!\!& \sum_{j=1}^{d^n}\sum_{l=1}^{d^n}\frac{\lambda_j\lambda_l}{d^{2n}} + \sum_{l=1}^{d^n}\frac{d^n-1}{d^{2n}(d^n+1)}\lambda_l^2\nonumber\\ &\!\!\!\!=\!\!\!\!&\frac{(\sum_{l}\lambda_l)^2}{d^{2n}} + \frac{d^n-1}{d^{2n}(d^n+1)}\|\vrho\|_F^2\nonumber\\
    &\!\!\!\!\geq\!\!\!\!&  \frac{d^n-1}{d^{2n}(d^n+1)}.
\end{eqnarray}
This together with \eqref{ProofOfSmallBallMethod_For_Haar_Measurement_H_1} further implies that
\begin{align}
\label{ProofOfSmallBallMethod_For_Haar_Measurement_H_1_v2}
    H_{\xi}(\ol\setX_{\ol{r}}) \ge c_0, \ \forall \xi \le \frac{c_1}{d^n}
\end{align}
for some positive constant $c_1$.

\item{Upper bound of $W(\ol\setX_{\ol{r}})$:}
Since each $\< \vphi_{i,k}\vphi_{i,k}^\dagger,\vrho \>$ is a subexponetial random variable with $\|\< \vphi_{i,k}\vphi_{i,k}^\dagger,\vrho \>\|_{\psi_1} = O(\frac{1}{d^n})$ according to \eqref{eq:subexp-phiRho}, $\epsilon_i\vphi_{i,k}^\dagger\vrho\vphi_{i,k}$ is a centered subexponential random variable with the subexponential norm $\|\epsilon_i\vphi_{i,k}^\dagger\vrho\vphi_{i,k}\|_{\psi_1}= O(\frac{1}{d^n})$. On the other hand, for any $i$, the random vectors $\vphi_{i,k}$ and $\vphi_{i,k'}$ are not dependent to each other for $k\neq k'$. Thus, we use \Cref{Concentration inequality of sum of dependent sub-exp sum} to obtain its concentration inequality as

\begin{eqnarray}
    \label{ProofOfSmallBallMethod_For_Haar_Measurement_W_1}
     &&\!\!\!\!\!\!\!\!\P{ \frac{1}{\sqrt{QK}}\sum_{i=1}^Q\sum_{k=1}^K \<\epsilon_{i}\vphi_{i,k}\vphi_{i,k}^\dagger, \vrho  \> \geq t}\nonumber\\
     &\!\!\!\!\leq\!\!\!\!& \begin{cases}
    \bigg(\frac{4+\epsilon}{\epsilon}\bigg)^{Cnd^2\ol{r}^2\log n}e^{-\frac{c_2 d^{2n}t^2}{4K}}, & t\leq \frac{c_4\sqrt{QK} }{d^n}\\
    \bigg(\frac{4+\epsilon}{\epsilon}\bigg)^{Cnd^2\ol{r}^2\log n}e^{-\frac{c_3 \sqrt{Q} d^nt}{2\sqrt{K}}}, & t > \frac{c_4\sqrt{QK} }{d^n}
  \end{cases} \nonumber\\
    &\!\!\!\!\leq\!\!\!\!&\begin{cases}
    e^{-\frac{c_2 d^{2n}t^2}{4K} +Cnd^2\ol{r}^2\log n}, & t\leq \frac{c_4\sqrt{QK} }{d^n}\\
    e^{-\frac{c_3 \sqrt{Q} d^nt}{2\sqrt{K}} +Cnd^2\ol{r}^2\log n}, & t > \frac{c_4\sqrt{QK} }{d^n}
  \end{cases}\nonumber\\
    &\!\!\!\!\leq\!\!\!\!& e^{-\min\{\frac{c_2 d^{2n}t^2}{4K},\frac{c_3 \sqrt{Q} d^nt}{2\sqrt{K}} \} + Cnd^2\ol{r}^2\log n },
\end{eqnarray}
where We utilize $d^2r_1+\sum_{i=2}^{n-1}d^2r_{i-1}r_i+d^2r_{n-1}\leq Cnd^2\ol{r}^2\log n$ in the first inequality for a universal constant $C$. Subsequently, we set  $\epsilon=\frac{1}{2n}$ in the second inequality. Furthermore, $\ol{r}=\max_{i=1,\dots,n-1}r_i$, and $c_2$, $c_3$, $c_4$ are positive constants.
Following the same analysis of \eqref{ProofOf<H,X>forSubgaussian_proof9_1}, when $Q = \Omega(n d^2 \ol{r}^2\log n)$, we have
\begin{eqnarray}
    \label{ProofOfSmallBallMethod_For_Haar_Measurement_W_3}
    W(\ol\setX_{\ol{r}}) &\!\!\!\!=\!\!\!\!& \E \sup_{\vrho\in \ol\setX_{\ol{r}}}\frac{1}{\sqrt{QK}}\sum_{i=1}^Q\sum_{k=1}^K \<\epsilon_{i}\vphi_{i,k}\vphi_{i,k}^\dagger, \vrho  \>\nonumber\\
    &\!\!\!\!\leq\!\!\!\!& c_5 \frac{\sqrt{K} d\ol{r}\sqrt{n\log n}}{d^n},
\end{eqnarray}
where $c_5$ is a universal constant.

  \item{Contraction:} Combining \eqref{ProofOfSmallBallMethod_For_Haar_Measurement_H_1_v2}  and \eqref{ProofOfSmallBallMethod_For_Haar_Measurement_W_3}, and setting $t=\frac{c_0\sqrt{Q}}{2}$, $\xi =\frac{c_1}{d^n}$,   and $Q\geq \frac{64c_5^2 nd^2 \ol{r}^2(\log n)}{c_0^2c_1^2}$, we get
\begin{eqnarray}
    \label{ProofOfSmallBallMethod_For_Haar_Measurement_final}
    &\!\!\!\!\!\!\!\!&\inf_{\vrho\in\ol\setX_{\ol{r}}}\bigg(\sum_{i=1}^Q\sum_{k=1}^K|\< \vphi_{i,k}\vphi_{i,k}^\dagger, \vrho \>  |^2   \bigg)^{\frac{1}{2}} \nonumber\\
    &\!\!\!\!\geq\!\!\!\!& \xi \sqrt{QK} H_{\xi}(\ol\setX_{\ol{r}}) -2W(\ol\setX_{\ol{r}}) -t\xi\sqrt{K}\nonumber\\
    &\!\!\!\!\geq\!\!\!\!&\frac{c_0c_1\sqrt{QK}}{d^n}-2c_5 \frac{\sqrt{K} d\ol{r}\sqrt{n\log n}}{d^n}-\frac{c_1 \sqrt{QK}}{d^n}\nonumber\\
    &\!\!\!\!\geq\!\!\!\!&\frac{c_0c_1\sqrt{QK}}{4d^{n}}
\end{eqnarray}
with probability $1-e^{-\Omega(Q)}$.

\end{itemize}
This completes the proof of \Cref{Small_Method_ROHaar_Measurement}.
\end{proof}

\section{Proof of \Cref{Statistical Error_Haar_Measurement}}
\label{Proof of Statistical error in Haar}

\begin{proof}
Before deriving \Cref{Statistical Error_Haar_Measurement}, we restate our model.
We first randomly generate $Q$  Haar distributed unitary matrices $\begin{bmatrix} \vphi_{i,1} & \cdots & \vphi_{i,d^n} \end{bmatrix}$, which induce  $Q$ POVMs of form $
    \big\{\vphi_{i,1}\vphi_{i,1}^\dagger,\dots, \vphi_{i,d^n}\vphi_{i,d^n}^\dagger  \big\}, i=1,\dots,Q.$ Recalling \eqref{The defi of population measurement in Q cases sec:4} and \eqref{eq:empirical-prob in Q cases sec:4}, we have population measurements for the unknown quantum state $\vrho^\star$ and  total empirical measurements given by ${{\bm p}}^Q= \calA^Q(\vrho^\star) $ and ${\widehat{\bm p}}^Q$.
We then define the statistical measurement error as
\begin{eqnarray}
    \label{Noise_Haar Measurement1}
    \veta = \wh{\vp}^Q -  \vp^Q = \wh{\vp}^Q - \calA^Q(\vrho^\star) =  \begin{bmatrix}
          \veta_{1}^\top,
          \cdots,
          \veta_{Q}^\top\end{bmatrix}^\top,
\end{eqnarray}
where $\eta_{i,k}$ is the $k$-th element in $\veta_i$. With $\wh \vp^Q$, we estimate the unknown state $\vrho^\star$ by solving the following constrained least-squares problem
\begin{eqnarray}
    \label{The loss function in QST_haar}
    \wh{\vrho} = \arg\min_{\vrho\in\setX_{\bar{r}}}\|\calA^Q(\vrho) - {\widehat{\bm p}^Q}\|_2^2.
\end{eqnarray}
Following \eqref{whrho and rho star relationship}, we have
\begin{eqnarray}
    \label{whX and X^star relationship_1}
    \|\calA^Q(\wh{\vrho} - \vrho^\star)\|_2^2 \leq 2\<  \veta, \calA^Q(\wh{\vrho} - \vrho^\star) \>.
\end{eqnarray}

According to \Cref{{Small_Method_ROHaar_Measurement}}, given $Q\gtrsim nd^2\ol{r}^2(\log n)$, with probability at least $1-e^{-c_1Q}$, we have $\|\calA^Q(\wh{\vrho} - \vrho^\star)\|_2^2\gtrsim \frac{Q}{d^{n}}\|\wh{\vrho} - \vrho^\star\|_F^2$. Next, we will upper bound $\<  \veta, \calA^Q(\wh{\vrho} - \vrho^\star) \>$. Towards that goal, we first rewrite this term as
\begin{eqnarray}
    \label{upper bound of entire variable_haar}
    \<  \veta, \calA^Q(\wh{\vrho} - \vrho^\star) \> &\!\!\!\!\!\!=\!\!\!\!\!\!& \sum_{i=1}^Q\sum_{k=1}^{d^n} \eta_{i,k}\vphi_{i,k}^\dagger (\wh{\vrho} - \vrho^\star) \vphi_{i,k}\nonumber\\
    &\!\!\!\!\!\!\leq\!\!\!\!\!\!&  \|\wh{\vrho} - \vrho^\star\|_F \max_{\vrho\in \ol\setX_{2\ol{r}}} \sum_{i=1}^Q\sum_{k=1}^{d^n} \eta_{i,k}\vphi_{i,k}^\dagger \vrho \vphi_{i,k}.\nonumber\\
\end{eqnarray}

The rest of the proof is to bound $\max_{\vrho\in \ol\setX_{2\ol{r}}} \sum_{i=1}^Q\sum_{k=1}^{d^n} \eta_{i,k}\vphi_{i,k}^\dagger \vrho \vphi_{i,k}$, which will be achieved by using a covering argument.
First, when conditioned on $\{\vphi_{i,k},\forall i,k \}$, we consider any fixed value of $\widetilde\vrho$ and  apply \Cref{General bound of multinomial distribution Q cases1} to establish a concentration inequality for the expression $\sum_{i=1}^Q\sum_{k=1}^{d^n} \eta_{i,k}\vphi_{i,k}^\dagger \widetilde\vrho \vphi_{i,k}$.
Denote the event $F:=\{ \max_{i,k} |\vphi_{i,k}^\dagger\wt\vrho\vphi_{i,k}| \lesssim \frac{\log Q + n\log d}{d^{\nqbit }}$ $\}$ which holds with probability $\P{F} = 1-e^{-c_2 (\log Q + n\log d) }$ (its proof is given in {Appendix} \ref{Proof of concentration of multinomial dis}). Then we have
\begin{align}
    \label{tail function of of entire variable_haar original}
    \P{ \sum_{i=1}^Q\sum_{k=1}^{d^n} \eta_{i,k}\vphi_{i,k}^\dagger \widetilde\vrho \vphi_{i,k} \geq t \bigg| F  }\leq 2e^{-\frac{d^{2n}Mt^2}{c_3Q(\log Q + n\log d)^2}},
\end{align}
where $c_2$ and $c_3$ are positive constants. The formal proof of \eqref{tail function of of entire variable_haar original} is given in {Appendix} \ref{Proof of concentration of multinomial dis}.

Following the same analysis as in {Appendix} \ref{ProofOf<H,X>forSubgaussian}, there exists an $\epsilon$-net $\widetilde\setX_{2\ol{r}}$ of $\ol\setX_{2\ol{r}}$  such that
\begin{eqnarray}
    \label{Upper bound of entire variable_haar covering arument1}
    &\!\!\!\!\!\!\!\!&\P{\max_{\vrho\in \ol\setX_{2\ol{r}}} \sum_{i=1}^Q\sum_{k=1}^{d^n} \eta_{i,k}\vphi_{i,k}^\dagger \vrho \vphi_{i,k}\geq t \bigg| F  }\nonumber\\
    &\!\!\!\!\!\!\!\!&\hspace{-0.4cm} \leq\P{\max_{\widetilde\vrho\in \widetilde\setX_{2\ol{r}}} \sum_{i=1}^Q\sum_{k=1}^{d^n} \eta_{i,k}\vphi_{i,k}^\dagger \widetilde\vrho \vphi_{i,k}\geq \frac{t}{2} \bigg| F}\nonumber\\
    &\!\!\!\!\!\!\!\!&\hspace{-0.4cm}\leq\bigg(\frac{4+\epsilon}{\epsilon}\bigg)^{\hspace{-0.1cm}2d^2r_1+4\sum_{i=2}^{n-1}d^2r_{i-1}r_i+2d^2r_{n-1}}
    \hspace{-0.2cm}e^{-\frac{d^{2n}Mt^2}{c_3Q(\log Q + n\log d)^2} + \log 2}\nonumber\\
    &\!\!\!\!\!\!\!\!&\hspace{-0.4cm}\leq e^{-\frac{d^{2n}Mt^2}{c_3Q(\log Q + n\log d)^2}+C n d^2\ol{r}^2\log n + \log 2},
\end{eqnarray}
where $\epsilon=\frac{1}{2n}$ is chosen, $\ol{r}=\max_{i=1,\dots,n-1}r_i$, and $C$ is a universal constant in the last line. By taking $t=\hat{t} \triangleq \frac{c_4  \sqrt{Qn \log n}d \ol{r}(\log Q + n\log d)}{\sqrt{M}d^{n}}$ in the above equation, we further obtain
\begin{align}
    \label{Upper bound of entire variable_haar covering arument another1}
    \hspace{-0.1cm}\P{\max_{\vrho\in \ol\setX_{2\ol{r}}} \sum_{i=1}^Q\sum_{k=1}^{d^n} \eta_{i,k}\vphi_{i,k}^\dagger \vrho \vphi_{i,k}\leq \hat{t} \bigg| F  }\geq 1-e^{-c_5 \nqbit d^2 \ol{r}^2 \log n},
\end{align}
where $c_4$ and $c_5$ are  constants.

Now plugging in the probability for the event $F$, we finally get
\begin{eqnarray}
    \label{Upper bound of entire variable_haar covering arument another 11}
    &\!\!\!\!\!\!\!\!&\P{\max_{\vrho\in \ol\setX_{2\ol{r}}} \sum_{i=1}^Q\sum_{k=1}^{d^n} \eta_{i,k}\vphi_{i,k}^\dagger \vrho \vphi_{i,k}\leq \hat{t}}\nonumber\\
    &\!\!\!\!\geq\!\!\!\!& \P{\max_{\vrho\in \ol\setX_{2\ol{r}}} \sum_{i=1}^Q\sum_{k=1}^{d^n} \eta_{i,k}\vphi_{i,k}^\dagger \vrho \vphi_{i,k}\leq \hat{t} \cap F  }\nonumber\\
    &\!\!\!\!=\!\!\!\!&\P{F} \P{\max_{\vrho\in \ol\setX_{2\ol{r}}} \sum_{i=1}^Q\sum_{k=1}^{d^n} \eta_{i,k}\vphi_{i,k}^\dagger \vrho \vphi_{i,k}\leq  \hat{t} \bigg| F  }\nonumber\\
    &\!\!\!\!\geq\!\!\!\!& (1-e^{-c_2 \log(Qd^n) })(1-e^{-c_5 \nqbit d^2 \ol{r}^2 \log n})\nonumber\\
    &\!\!\!\!\geq\!\!\!\!& 1- e^{-c_2 (\log Q + n\log d) } - e^{-c_5 \nqbit d^2 \ol{r}^2 \log n}.
\end{eqnarray}

Hence, for $\<  \veta, \calA^Q(\wh{\vrho} - \vrho^\star) \>$ in \eqref{upper bound of entire variable_haar}, the above equation implies that with probability at least $1- e^{-c_2 (\log Q + n\log d) } - e^{-c_5 \nqbit d^2 \ol{r}^2 \log n}$,
\begin{eqnarray}
    \label{upper bound of entire variable_haar conclusion1}
    &\!\!\!\!\!\!\!\!&\<  \veta, \calA^Q(\wh{\vrho} - \vrho^\star) \> \nonumber\\
    &\!\!\!\!\leq\!\!\!\!& \frac{c_4  \sqrt{Qn \log n}d \ol{r}(\log Q + n\log d)}{\sqrt{M}d^{n}}\|\wh{\vrho} - \vrho^\star\|_F.
\end{eqnarray}
Combining this together with $\|\calA^Q(\wh{\vrho} - \vrho^\star)\|_2^2\gtrsim \frac{Q}{d^{n}}\|\wh{\vrho} - \vrho^\star\|_F^2$, we finally obtain
\begin{eqnarray}
    \label{statistical error bound conclusion1}
    \|\wh{\vrho} - \vrho^\star\|_F\lesssim  \frac{ \sqrt{n \log n}d \ol{r}(\log Q + n\log d)}{\sqrt{MQ}}.
\end{eqnarray}
This completes the proof of \Cref{Statistical Error_Haar_Measurement}.
\end{proof}

\subsection{Proof of \eqref{tail function of of entire variable_haar original}}
\label{Proof of concentration of multinomial dis}

\begin{proof}
Conditioning on $\{\vphi_{i,k},\forall i,k \}$, we use \Cref{General bound of multinomial distribution Q cases} by setting $a_{i,k} = \vphi_{i,k}^\dagger \widetilde\vrho \vphi_{i,k}$ to get
\begin{align}
    \label{Concentration inequality of sum of multinomial}
    &\P{\sum_{i=1}^Q\sum_{k=1}^{d^n} \eta_{i,k}\vphi_{i,k}^\dagger \widetilde\vrho \vphi_{i,k} \geq t \bigg| \{\vphi_{i,k},\forall i,k \}}\nonumber\\
    &\leq  e^{-\frac{Mt}{4\max_{i,k}|\vphi_{i,k}^\dagger \widetilde\vrho \vphi_{i,k}|}\min\bigg\{1, \frac{\max_{i,k}|\vphi_{i,k}^\dagger \widetilde\vrho \vphi_{i,k}| t}{4\sum_{i = 1}^{Q}\sum_{k = 1}^{d^n}|\vphi_{i,k}^\dagger \widetilde\vrho \vphi_{i,k}|^2p_{i,k} } \bigg\}  }\nonumber\\
    & \hspace{0.5cm}+  e^{-\frac{Mt^2}{8\sum_{i = 1}^{Q}\sum_{k = 1}^{d^n} |\vphi_{i,k}^\dagger \widetilde\vrho \vphi_{i,k}|^2p_{i,k} }}\nonumber\\
    &=  e^{-\frac{Mt^2}{16\sum_{i = 1}^{Q}\sum_{k = 1}^{d^n} |\vphi_{i,k}^\dagger \widetilde\vrho \vphi_{i,k}|^2p_{i,k} }} \! +\!  e^{-\frac{Mt^2}{8\sum_{i = 1}^{Q}\sum_{k = 1}^{d^n} |\vphi_{i,k}^\dagger \widetilde\vrho \vphi_{i,k}|^2p_{i,k} }},
\end{align}
where without loss of generality, we assume that $\frac{\max_{i,k}|\vphi_{i,k}^\dagger \widetilde\vrho \vphi_{i,k}| t}{4\sum_{i = 1}^{Q}\sum_{k = 1}^{d^n}|\vphi_{i,k}^\dagger \widetilde\vrho \vphi_{i,k}|^2p_{i,k} }\leq 1$ in the last line.

Notice that for any $i$ and $k$, $\vphi_{i,k}^\dagger\widetilde\vrho\vphi_{i,k}$ is a subexponential random variable with subexponential norm $\|\vphi_{i,k}^\dagger\widetilde\vrho\vphi_{i,k}\|_{\psi_1}  = O(\frac{1}{d^n})$ according to \eqref{eq:subexp-phiRho}. By using the concentration equality for the tail of a subexponential random variable \cite[Proposition 3]{chen2022error}, which states that $\P{|X|\geq t}\leq 2e^{-\frac{c_0t}{\|X\|_{\psi_1}}}$ for any subexponential random variable $X$ with a universal constant $c_0$,
we have
\begin{eqnarray}
    \label{Concentration inequality of single Sub-exponential random variable}
     \P{ |\vphi_{i,k}^\dagger\wt\vrho\vphi_{i,k}| \geq t}\leq e^{1-c_1d^n t},
\end{eqnarray}
where $c_1$ is a universal constant. It follows that
\begin{align}
    \label{Concentration inequality of single Sub-exponential random variable maximum}
     \P{ \max_{i,k}|\vphi_{i,k}^\dagger\wt\vrho\vphi_{i,k}| \leq t}&\geq 1 - Qd^ne^{1-c_1d^n t}\nonumber\\
     & =  1 - e^{1-c_1d^n t + \log(Qd^n)}.
\end{align}
Thus, setting $t= \frac{c_2\log(Qd^n)}{d^{\nqbit }}$,  we obtain
\begin{eqnarray}
    \label{Upper bound of RIP_one random variable1}
    \max_{i,
     k} |\vphi_{i,k}^\dagger\wt\vrho\vphi_{i,k}| \!\leq\! \frac{c_2\log(Qd^n)}{d^{\nqbit }} \!=\! \frac{c_2 (\log Q + n\log d)}{d^{\nqbit }},
\end{eqnarray}
with the probability at least $1-e^{-c_3 \log(Qd^n) }$, where $c_2$ and $c_3$ are positive constants.
Now under the event $F= \{ \max_{i,k} |\vphi_{i,k}^\dagger\wt\vrho\vphi_{i,k}| \lesssim \frac{\log Q + n\log d}{d^{\nqbit }} \}$, we have
\begin{eqnarray}
    \label{upper bound of maximum for inner product}
    \sum_{i = 1}^{Q}\sum_{k = 1}^{d^n} |\vphi_{i,k}^\dagger \widetilde\vrho \vphi_{i,k}|^2p_{i,k}&\!\!\!\!\leq\!\!\!\!& \max_{i,k} |\vphi_{i,k}^\dagger \widetilde\vrho \vphi_{i,k}|^2 \sum_{i=1}^Q\sum_{k = 1}^{d^n}p_{i,k}\nonumber\\
    &\!\!\!\!\leq\!\!\!\!& \frac{c_2^2Q(\log Q + n\log d)^2}{d^{2\nqbit }},
\end{eqnarray}
and thus
\begin{align}
    \label{tail function of of entire variable_haar another1}
    \P{ \sum_{i=1}^Q\sum_{k=1}^{d^n} \eta_{i,k}\vphi_{i,k}^\dagger \widetilde\vrho \vphi_{i,k}\geq t \bigg| F  } \leq 2e^{-\frac{d^{2n}Mt^2}{16c_2^2Q(\log Q + n\log d)^2}}.\nonumber\\
\end{align}
\end{proof}

\section{Auxiliary Materials}
\label{sec:app-aux}

\begin{lemma}
\label{Foucart13MathIntro_lemma716}
(\cite[Lemma 7.16]{Foucart13MathIntro} Paley-Zygmund inequality)
If a nonnegative random variable $Z$ has finite second moment, then we have
\begin{eqnarray}
    \label{Foucart13MathIntro_Lemma716}
    \mathbb{P}(Z> t)\geq \frac{(\E Z-t)^2}{\E Z^2}, 0\leq t \leq \E Z.
\end{eqnarray}
\end{lemma}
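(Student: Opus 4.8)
The plan is to establish this classical inequality by combining a first-moment truncation argument with the Cauchy--Schwarz inequality. The key idea is to split the expectation $\E Z$ according to whether $Z$ exceeds the threshold $t$, isolate the contribution coming from the event $\{Z > t\}$, and then control that contribution through the second moment $\E Z^2$.

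First I would write, for any fixed $t$ with $0 \le t \le \E Z$,
\[
\E Z = \E[Z \mathbbm{1}_{\{Z \le t\}}] + \E[Z \mathbbm{1}_{\{Z > t\}}],
\]
and observe that, since $Z \ge 0$, the first term obeys $\E[Z \mathbbm{1}_{\{Z \le t\}}] \le t \,\P{Z \le t} \le t$. Rearranging yields the one-sided bound $\E[Z \mathbbm{1}_{\{Z > t\}}] \ge \E Z - t \ge 0$. Here the hypothesis $t \le \E Z$ is exactly what guarantees that the right-hand side is nonnegative, which is the property I will need so that squaring the inequality in the next step preserves its direction.

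Next I would apply the Cauchy--Schwarz inequality to the truncated expectation. Writing $Z \mathbbm{1}_{\{Z > t\}} = Z \cdot \mathbbm{1}_{\{Z > t\}}$ and using $\mathbbm{1}_{\{Z>t\}}^2 = \mathbbm{1}_{\{Z>t\}}$, I obtain
\[
\E[Z \mathbbm{1}_{\{Z > t\}}] \le \sqrt{\E Z^2}\,\sqrt{\P{Z > t}}.
\]
Squaring the lower bound from the previous step together with this upper bound gives $(\E Z - t)^2 \le \E Z^2 \cdot \P{Z > t}$, and dividing through by $\E Z^2$ (which is finite by hypothesis and positive in the nontrivial case) produces the claimed inequality.

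This argument presents no serious obstacle; it is entirely elementary, relying only on linearity of expectation, monotonicity of the integral, and Cauchy--Schwarz. The only points requiring a moment of care are the use of $t \le \E Z$ to ensure $\E Z - t \ge 0$ before squaring, and the degenerate case $\E Z^2 = 0$, which forces $Z = 0$ almost surely and hence $\E Z = 0$, collapsing the range of admissible $t$ to the trivial value $t = 0$. Thus the main ``difficulty'' is merely bookkeeping to confirm that every step is valid across the full admissible range of $t$.
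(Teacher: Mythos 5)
Your proof is correct and is exactly the canonical truncation-plus-Cauchy--Schwarz argument: bound $\E[Z\mathbbm{1}_{\{Z\le t\}}]\le t$, deduce $\E[Z\mathbbm{1}_{\{Z>t\}}]\ge \E Z-t\ge 0$, and square via Cauchy--Schwarz to get $(\E Z-t)^2\le \E Z^2\,\mathbb{P}(Z>t)$. The paper itself gives no proof and simply cites \cite[Lemma 7.16]{Foucart13MathIntro}, whose proof is the same argument you present, so there is nothing to compare beyond noting that your attention to the degenerate case $\E Z^2=0$ is appropriate bookkeeping.
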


\begin{lemma}(\cite[Lemma 3.7]{ledoux1991probability}) Let $d$ be an integer and let $Z$ be a positive random variable. Then the following are equivalent:
\begin{itemize}
    \item there is a constant $C$ such that for any $p\ge 2$,
    \begin{eqnarray}
    \parans{ \E[Z^p] }^{1/p} \le C p^{d/2} \parans{ \E[Z^2] }^{1/2};
    \end{eqnarray}
    \item for some $\alpha>0$,
    \begin{eqnarray}
    \E \exp(\alpha Z^{2/d}) <\infty.
    \end{eqnarray}
\end{itemize}
\label{lem:hypercontractivity}\end{lemma}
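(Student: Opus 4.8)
The plan is to establish both implications via the standard dictionary between moment growth and exponential (Orlicz) integrability, with Stirling's formula as the workhorse. Observe first that both conditions are invariant under the scaling $Z \mapsto \lambda Z$: the moment inequality is homogeneous of degree one on each side, while the exponential condition, being quantified over \emph{some} $\alpha>0$, survives the replacement $\alpha \mapsto \alpha \lambda^{-2/d}$. I would therefore normalize $\E[Z^2]=1$ throughout, so that the target reduces to $\parans{\E[Z^p]}^{1/p} \le C p^{d/2}$, and reinstate the factor $\parans{\E[Z^2]}^{1/2}$ at the very end by undoing the scaling. It is also convenient to record the substitution $W=Z^{2/d}$, under which $\E[Z^{2k/d}]=\E[W^k]$, so that both conditions become statements comparing the integer moments of $W$ with the finiteness of $\E\exp(\alpha W)$.

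For the direction moment growth $\Rightarrow$ exponential integrability, I would expand the exponential as a power series and integrate term by term, $\E\exp\parans{\alpha Z^{2/d}} = \sum_{k\ge 0}\frac{\alpha^k}{k!}\E[Z^{2k/d}]$. For the tail indices $k\ge d$, so that $p:=2k/d\ge 2$, the hypothesis gives $\E[Z^{2k/d}] \le \parans{C p^{d/2}}^{p} = \parans{C(2k/d)^{d/2}}^{2k/d}$, which simplifies to $B^k k^k$ with $B=C^{2/d}\cdot 2/d$ depending only on $C,d$. Stirling's bound $k!\ge (k/e)^k$, i.e.\ $k^k\le e^k k!$, then controls the $k$-th summand by $(\alpha B e)^k$, so the series converges geometrically once $\alpha$ is chosen with $\alpha B e<1$. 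The finitely many low-order terms $k<d$ are finite by Lyapunov's inequality $\E[Z^{2k/d}]\le \parans{\E[Z^2]}^{k/d}$ under the normalization, so $\E\exp(\alpha Z^{2/d})<\infty$.

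For the reverse direction I would start from the elementary pointwise inequality $y^k\le k!\,e^{y}$ for $y\ge 0$ (a single term of the exponential series). Taking $y=\alpha Z^{2/d}$ and expectations yields $\E[Z^{2k/d}] \le \alpha^{-k} k!\,A$, where $A:=\E\exp(\alpha Z^{2/d})<\infty$. To obtain a bound valid for every real $p\ge 2$ rather than only the discrete exponents $2k/d$, I would set $k=\lceil pd/2\rceil$ so that $2k/d\ge p$ and invoke the monotonicity of $L^p$-norms, $\parans{\E[Z^p]}^{1/p}\le \parans{\E[Z^{2k/d}]}^{d/(2k)}$. Combining this with the moment bound, $k!\le k^k$, and $k\le pd$ for $p\ge 2$, I reach $\parans{\E[Z^p]}^{1/p}\le C' p^{d/2}$ with $C'$ depending only on $\alpha,A,d$; undoing the normalization restores the factor $\parans{\E[Z^2]}^{1/2}$ and closes the equivalence.

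The argument is essentially bookkeeping, so I do not expect a serious conceptual obstacle. The two genuine care points are: (i) matching the \emph{continuous} range $p\ge 2$ to the \emph{discrete} moments $\E[Z^{2k/d}]$ that the exponential series naturally produces, which is handled by Lyapunov's inequality; and (ii) tracking homogeneity so that $\parans{\E[Z^2]}^{1/2}$ appears with the correct power, which the normalization $\E[Z^2]=1$ renders transparent. All constants depend only on $d$ and the constant of the hypothesis being assumed, which is precisely the form needed for the application with $d=2$, where the equivalence yields the hypercontractivity bound $\E[Z^4]\le C_0\parans{\E[Z^2]}^2$.
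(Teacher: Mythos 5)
The paper gives no proof of this lemma at all---it is imported verbatim from \cite[Lemma 3.7]{ledoux1991probability} as an auxiliary fact---so there is no internal argument to diverge from, and your proposal is essentially the standard proof underlying that citation. Everything checks out: the scaling reduction to $\E[Z^2]=1$, the term-by-term expansion with $\E[Z^{2k/d}]\le B^k k^k$ for $B=C^{2/d}\cdot 2/d$ plus Stirling's $k^k\le e^k k!$ and Lyapunov for the finitely many terms $k<d$ in one direction, and $y^k\le k!\,e^y$ with $k=\lceil pd/2\rceil$, $k!\le k^k$, $k\le pd$, and monotonicity of $L^p$-norms in the other---the only caveat worth recording is the tacit standing assumption $0<\E[Z^2]<\infty$, without which the moment condition holds vacuously while the exponential condition can fail, an edge case inherent to the lemma's phrasing rather than a defect of your argument.
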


\begin{lemma}(\cite[Theorem 2.8.2]{vershynin2018high})
\label{Sub-exponentialInequality}
Let $X_1,\ldots,X_N$ be independent, mean zero, subexponential random variables, and $\va = (a_1,\ldots,a_N)\in\R^N$. Then, for every $t\ge 0$, we have
\begin{align}
\P{\abs{\sum_{i=1}^N a_i X_i  } \ge t \!} \!\le\! 2\exp\bracket{\! -c \min\bracket{\! \frac{t^2}{K^2 \|\va\|_2^2}, \frac{t}{K \|\va\|_\infty} \! } \! }.
\end{align}
where $K = \max_i \|X_i\|_{\psi_1}=\sup_{q\geq 1}\E(|X_i|^q)^{1/q}/q$ and $c$ is a positive constant.
\end{lemma}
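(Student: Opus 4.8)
The plan is to start from the optimality of $\wh\vrho$ for the constrained least-squares problem \eqref{The loss function in QST}. Since $\vrho^\star \in \setX_{\ol r}$ is feasible, expanding the squared residuals exactly as in \eqref{whrho and rho star relationship} yields the basic inequality $\|\calA^Q(\wh\vrho - \vrho^\star)\|_2^2 \le 2\langle \veta, \calA^Q(\wh\vrho - \vrho^\star)\rangle$ recorded in \eqref{whrho and rho^star relationship_1}. The strategy is then to bound the two sides separately: lower-bound the left-hand side via the stable embedding of \Cref{Small_Method_ROHaar_Measurement}, and upper-bound the right-hand side by exploiting the randomness of the multinomial noise $\veta$. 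For the left-hand side, observe that $\wh\vrho - \vrho^\star$ is again an MPO of rank at most $2\ol r$ by \eqref{eq:sum of MPOs}, so after normalization \Cref{Small_Method_ROHaar_Measurement} (with the doubled rank absorbed into constants) gives $\|\calA^Q(\wh\vrho - \vrho^\star)\|_2^2 \gtrsim \frac{Q}{d^n}\|\wh\vrho - \vrho^\star\|_F^2$ on the stable-embedding event, which holds with probability at least $1 - e^{-\alpha_2 Q}$ under $Q \gtrsim nd^2\ol r^2 \log n$.

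The hard part will be the right-hand side, and the central difficulty is twofold: $\wh\vrho$ is itself a function of $\veta$, so no fixed-vector concentration applies directly, and, as explained in \Cref{sec:challenge}, the total noise energy $\E\|\veta\|_2^2 = O(Q/M)$ dwarfs the clean energy $O(Q/d^n)$, so a plain Cauchy--Schwarz bound $\langle \veta, \calA^Q(\cdot)\rangle \le \|\veta\|_2\,\|\calA^Q(\cdot)\|_2$ is far too lossy. To circumvent both issues I would first factor out the error magnitude via $\langle \veta, \calA^Q(\wh\vrho - \vrho^\star)\rangle \le \|\wh\vrho - \vrho^\star\|_F \cdot \max_{\vrho \in \ol\setX_{\ol r}} \sum_{i,k}\eta_{i,k}\vphi_{i,k}^H \vrho \vphi_{i,k}$ as in \eqref{upper bound of entire variable_haar}, reducing the task to a uniform bound on the supremum over all unit-norm MPOs.

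For a single fixed $\wt\vrho$, conditioning on the unitaries and applying the multinomial Bernstein inequality \Cref{General bound of multinomial distribution Q cases1} with $a_{i,k} = \vphi_{i,k}^H \wt\vrho \vphi_{i,k}$, the decisive gain comes from replacing the crude $\|\veta\|_2$ by the much smaller quantities $a_{\max} = \max_{i,k}|a_{i,k}|$ and the variance proxy $\sum_{i,k} a_{i,k}^2 p_{i,k}$. Using the sub-exponential tail \eqref{eq:subexp-phiRho} of $\vphi_{i,k}^H \wt\vrho \vphi_{i,k}$ together with a union bound over the $Qd^n$ terms, I would establish the high-probability event $F = \{\max_{i,k}|a_{i,k}| \lesssim (\log Q + n\log d)/d^n\}$, on which $\sum_{i,k} a_{i,k}^2 p_{i,k} \lesssim Q(\log Q + n\log d)^2/d^{2n}$ since $\sum_{i,k} p_{i,k} = Q$. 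Feeding these into the Bernstein bound produces the per-$\wt\vrho$ tail \eqref{tail function of of entire variable_haar original}. I would then take a union bound of this tail over an $\epsilon$-net $\wt\setX_{\ol r}$ of $\ol\setX_{\ol r}$, whose cardinality is $(3n\ol r/\epsilon)^{nd^2\ol r^2}$ by \Cref{lem:cover-number-mps}; choosing the net resolution $\epsilon = 1/(2n)$ and discretizing the supremum by a self-bounding (peeling) argument analogous to the one used for the mean empirical width $W(\ol\setX_{\ol r})$ balances the entropy term $nd^2\ol r^2\log n$ in the exponent against the quadratic exponent in $t$, converting the single-point tail into the uniform bound $\max_{\vrho}\sum_{i,k}\eta_{i,k}\vphi_{i,k}^H\vrho\vphi_{i,k} \lesssim \frac{\sqrt{Qn\log n}\, d\ol r (\log Q + n\log d)}{\sqrt M\, d^n}$.

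Combining this uniform bound with the embedding lower bound through \eqref{whrho and rho^star relationship_1} gives $\frac{Q}{d^n}\|\wh\vrho-\vrho^\star\|_F^2 \lesssim \frac{\sqrt{Qn\log n}\,d\ol r(\log Q+n\log d)}{\sqrt M\, d^n}\,\|\wh\vrho-\vrho^\star\|_F$, and cancelling one factor of $\|\wh\vrho-\vrho^\star\|_F$ yields $\|\wh\vrho-\vrho^\star\|_F \lesssim \frac{\sqrt{n\log n}\,d\ol r(\log Q+n\log d)}{\sqrt{MQ}}$. Requiring the right-hand side to be at most $\epsilon$ is exactly the stated condition $QM \gtrsim nd^2\ol r^2\log n\,(\log Q + n\log d)^2/\epsilon^2$, while $Q \gtrsim nd^2\ol r^2\log n$ is inherited from the embedding step. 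Finally, multiplying the failure probabilities of the embedding event, the event $F$, and the net union bound gives the claimed overall probability $\min\{1-e^{-\alpha_2 Q},\, 1-e^{-\alpha_3(\log Q + n\log d)} - e^{-\alpha_4 n d^2\ol r^2\log n}\}$. I expect the uniform control of $\max_{\vrho}\sum_{i,k}\eta_{i,k}\vphi_{i,k}^H\vrho\vphi_{i,k}$ --- in particular pinning down the correct $a_{\max}$ and variance-proxy scalings under the event $F$ --- to be the main obstacle, since this is where the exponential gap between noise and signal is overcome.
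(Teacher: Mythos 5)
Your proposal does not prove the statement at hand. The statement is \Cref{Sub-exponentialInequality}, Bernstein's inequality for a weighted sum of independent, mean-zero, subexponential random variables; in the paper this is an auxiliary lemma quoted directly from \cite[Theorem 2.8.2]{vershynin2018high} with no proof given, and it is invoked, e.g., in \eqref{ProofOf<H,X>forSubgaussian_proof8} to control $\frac{1}{\sqrt{K}}\sum_{k=1}^K\langle\mA_k,\cdot\rangle$ inside the covering argument for $W(\ol\setX_{\ol{r}})$. What you have written instead is a proof sketch of \Cref{Statistical Error_Haar_Measurement}: the basic least-squares inequality \eqref{whrho and rho^star relationship_1}, the stable-embedding lower bound from \Cref{Small_Method_ROHaar_Measurement}, the multinomial Bernstein bound of \Cref{General bound of multinomial distribution Q cases1} under the event $F$, and the $\epsilon$-net union bound. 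As a reconstruction of the paper's Appendix E that sketch is faithful, but it establishes nothing about $\P{\left|\sum_{i=1}^N a_i X_i\right|\ge t}$ for generic independent subexponential $X_i$; indeed the recovery theorem's noise analysis deliberately routes through the multinomial concentration lemma rather than through \Cref{Sub-exponentialInequality}, so nothing in your argument even implicitly delivers the stated tail bound.

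A correct blind proof would be the standard Chernoff computation. For each $i$, centered subexponentiality gives the moment-generating-function bound $\E e^{\lambda a_i X_i} \le e^{C\lambda^2 a_i^2 K^2}$ valid for all $|\lambda| \le c/(K\|\va\|_\infty)$ (one of the equivalent characterizations of the $\psi_1$ norm); by independence the MGF of the sum factorizes, so Markov's inequality yields $\P{\sum_{i=1}^N a_i X_i \ge t} \le \exp\bracket{-\lambda t + C\lambda^2 K^2\|\va\|_2^2}$ for every admissible $\lambda$. Optimizing, i.e., taking $\lambda = t/(2CK^2\|\va\|_2^2)$ when this does not exceed $c/(K\|\va\|_\infty)$ and $\lambda = c/(K\|\va\|_\infty)$ otherwise, produces exactly the exponent $-c'\min\bracket{t^2/(K^2\|\va\|_2^2),\, t/(K\|\va\|_\infty)}$; repeating the argument for $-X_i$ and a union bound supplies the two-sided bound with the factor $2$. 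One small point worth noting if you restate the lemma: $\sup_{q\ge 1}\E(|X_i|^q)^{1/q}/q$ agrees with the Orlicz norm $\|X_i\|_{\psi_1}$ only up to absolute constants, not exactly, though this merely perturbs the constant $c$.
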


\begin{lemma}
\label{Stastitic property of UV}
(\cite[Lemma 2.2]{petz2004asymptotics})
A Haar-distributed random unitary matrix $\mU\in\C^{D\times D}$ can be equivalently generated by applying the Gram-Schmidt orthogonalization procedure to $D$ independent random vectors $\vz_i\in\C^{D},i = 1, 2, ..., D$, where the entries $z_{i,j}$ are mutually independent standard complex normal random variables. For all unitary matrices $\mV\in\C^{D\times D}$, the distributions of $\mU$ and $\mV\mU$ are the same.
\end{lemma}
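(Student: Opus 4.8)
The plan is to prove the equivalence by combining two facts: the unitary invariance of the standard complex Gaussian ensemble, and the uniqueness of the left-invariant Haar probability measure on the compact group $\mathcal{U}(D)$ of $D\times D$ unitary matrices. First I would assemble the $D$ independent standard complex Gaussian vectors into a matrix $\mZ = \begin{bmatrix} \vz_1 & \cdots & \vz_D\end{bmatrix}\in\C^{D\times D}$, whose entries are i.i.d.\ standard complex normal. Since such a matrix is almost surely of full rank, the Gram--Schmidt procedure applied column by column is well defined with probability one, and I denote by $\mU = \calG(\mZ)$ the resulting unitary matrix whose columns are the orthonormalized vectors. The goal is then to show that the law of $\mU$ is precisely the Haar measure.

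The first key step is to record that the law of $\mZ$ is invariant under left multiplication by any fixed $\mV\in\mathcal{U}(D)$, i.e.\ $\mV\mZ \overset{d}{=} \mZ$. This holds because the density of $\mZ$ is proportional to $e^{-\|\mZ\|_F^2}$ with respect to Lebesgue measure on $\C^{D\times D}$, and both the exponent $\|\mZ\|_F^2$ and the Lebesgue measure are preserved under the map $\mZ\mapsto\mV\mZ$. The second key step is to establish that the Gram--Schmidt map is \emph{equivariant} under this left action: $\calG(\mV\mZ) = \mV\,\calG(\mZ)$ for every fixed $\mV$. I would verify this by induction on the columns, using that $\mV$ preserves norms and inner products: normalizing the first column gives $\mV\vz_1/\|\mV\vz_1\|_2 = \mV(\vz_1/\|\vz_1\|_2)$, and at each subsequent step the projection coefficients satisfy $\langle \mV\vq_j, \mV\vz_k\rangle = \langle \vq_j,\vz_k\rangle$, so the whole recursion commutes with $\mV$.

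Combining the two steps yields the conclusion. Because $\calG$ is a fixed measurable map and $\mV\mZ\overset{d}{=}\mZ$, pushing both sides through $\calG$ gives $\mU = \calG(\mZ)\overset{d}{=}\calG(\mV\mZ) = \mV\,\calG(\mZ) = \mV\mU$. Thus the law of $\mU$ is left-invariant under every fixed $\mV\in\mathcal{U}(D)$; since the Haar measure is the unique left-invariant Borel probability measure on the compact group $\mathcal{U}(D)$, the law of $\mU$ must coincide with it, which is exactly the claim.

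The main obstacle is not any single hard estimate but the care needed to make the measure-theoretic argument rigorous. One must confirm that the set of non-full-rank $\mZ$ (on which $\calG$ is undefined) has Lebesgue measure zero, so $\calG$ is defined almost everywhere; that $\calG$ is Borel measurable, so the pushforward identity $\calG(\mZ)\overset{d}{=}\calG(\mV\mZ)$ is legitimate; and that the normalization convention in Gram--Schmidt (dividing by the positive real norm $\|\cdot\|_2$) is exactly what makes the equivariance hold with no residual phase ambiguity. The unit-Jacobian claim in the invariance step also uses that left multiplication by $\mV$, viewed as a real-linear map on $\C^{D\times D}\cong\R^{2D^2}$, is orthogonal and hence volume-preserving. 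Once these points are settled, uniqueness of Haar measure closes the argument; as this is classical, I would ultimately cite \cite{petz2004asymptotics} rather than reproduce every detail.
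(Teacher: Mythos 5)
Your proposal is correct: left-unitary invariance of the i.i.d.\ complex Gaussian ensemble, equivariance of the Gram--Schmidt map under left multiplication by a fixed unitary, and uniqueness of the left-invariant Borel probability measure on the compact group $\mathcal{U}(D)$ together yield exactly the claim, and the measure-theoretic caveats you flag (a.s.\ full rank, measurability of $\calG$, positive-real normalization) are the right ones. The paper itself gives no proof---it simply cites \cite[Lemma 2.2]{petz2004asymptotics}---and your argument is the standard one behind that reference, so there is no discrepancy to report.
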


\begin{lemma}
\label{Unitary_matrix_Expectation}
(\cite[Corollary 1.2]{novak2006truncations})
Let $u_{ij}$ be an element of $n\times n$ Haar-distributed random unitary matrix $\mU$. We have
\begin{eqnarray}
\label{Unitary_matrix_Expectation_1}
\E[| u_{ij}|^{2d} ]=\frac{d!}{n(n+1)\cdots (n+d-1)}.
\end{eqnarray}
\end{lemma}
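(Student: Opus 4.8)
The plan is to reduce the statement to a single canonical entry and then to a classical Beta-moment computation. First I would exploit the invariance built into the Haar measure. By \Cref{Stastitic property of UV}, $\mU$ can be generated by applying Gram--Schmidt to independent standard complex normal vectors, and relabeling those vectors (equivalently, multiplying $\mU$ on the left and right by fixed permutation matrices) leaves the distribution of $\mU$ unchanged. Hence every entry satisfies $|u_{ij}| \overset{d}{=} |u_{11}|$, so it suffices to compute $\E[|u_{11}|^{2d}]$.

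Next I would write down an explicit Gaussian representation of $u_{11}$. The same construction in \Cref{Stastitic property of UV} shows that the first column of $\mU$ is obtained by normalizing the first Gaussian vector, so $u_{11} \overset{d}{=} z_1/\|\vz\|_2$ where $\vz=(z_1,\ldots,z_n)$ has i.i.d.\ standard complex normal entries. Therefore
\[
|u_{11}|^2 \overset{d}{=} \frac{|z_1|^2}{\sum_{k=1}^n |z_k|^2}.
\]
The crux is to identify this ratio with a Beta law. Each $|z_k|^2$ is, up to a common scale that cancels in the ratio, an i.i.d.\ $\mathrm{Gamma}(1,\cdot)$ (equivalently exponential) variable; thus $|z_1|^2\sim\mathrm{Gamma}(1,\cdot)$ and the independent remainder $\sum_{k=2}^n|z_k|^2\sim\mathrm{Gamma}(n-1,\cdot)$, and the standard Gamma--Beta identity gives $|u_{11}|^2\sim\mathrm{Beta}(1,n-1)$, with density $(n-1)(1-t)^{n-2}$ on $[0,1]$.

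Finally I would evaluate the $d$-th moment of $V\sim\mathrm{Beta}(1,n-1)$ directly from the Beta integral:
\[
\E[|u_{11}|^{2d}] = \E[V^d] = \frac{B(1+d,\,n-1)}{B(1,\,n-1)} = \frac{\Gamma(1+d)\,\Gamma(n)}{\Gamma(1)\,\Gamma(n+d)} = \frac{d!\,(n-1)!}{(n+d-1)!} = \frac{d!}{n(n+1)\cdots(n+d-1)},
\]
which is exactly the claimed formula. The only real obstacle is bookkeeping: carefully justifying the invariance step so that one entry genuinely stands in for all, and correctly pinning down the shape parameters $(1,n-1)$ so that the Gamma--Beta identity produces the right Beta law. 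The scale of the complex Gaussians plays no role, since it cancels in the scale-invariant ratio, so no delicate normalization convention is needed.
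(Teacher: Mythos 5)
Your proof is correct, but it is worth noting that the paper itself gives no proof of this lemma at all: it is imported verbatim as \cite[Corollary 1.2]{novak2006truncations}, where the moment formula arises from a combinatorial/symmetric-function analysis of truncated Haar unitary matrices. Your argument is a genuinely different, self-contained classical-probability route: reduce to $u_{11}$ by invariance, use the Gram--Schmidt representation (\Cref{Stastitic property of UV}) to write $|u_{11}|^2 \overset{d}{=} |z_1|^2/\sum_{k=1}^n |z_k|^2$ with $|z_k|^2$ i.i.d.\ exponential, identify this as $\mathrm{Beta}(1,n-1)$ via the Gamma--Beta identity, and evaluate $\E[V^d] = B(1+d,n-1)/B(1,n-1) = d!\,(n-1)!/(n+d-1)!$, which indeed equals $d!/\bigl(n(n+1)\cdots(n+d-1)\bigr)$; all steps and the final bookkeeping check out (e.g., $d=1$ gives $1/n$ and $d=2$ gives $2/(n(n+1))$, consistent with how the paper uses the lemma in \eqref{Unitary Distrubuted_2 another}). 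One small point of care in the invariance step: permuting the input Gaussian vectors does \emph{not} commute with Gram--Schmidt, so "relabeling those vectors" by itself only handles row exchanges (coordinate permutations of all the $\vz_i$); for column exchanges you need exactly the other mechanism you mention in passing, namely right-multiplication of $\mU$ by a fixed permutation matrix together with the right-invariance of Haar measure. Since you state that mechanism explicitly, this is a presentational wrinkle rather than a gap. What your approach buys is elementarity and independence from Novak's machinery; what the citation buys the paper is access to the more general joint-moment results for truncations that Novak's methods provide, which your Beta argument does not directly generalize to.
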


\begin{lemma}
\label{EXPANSION_A1TOAN-B1TOBN_1}
For any  ${\bm A}_i,{\bm A}^\star_i\in\R^{r_{i-1}\times r_i},i=1,\dots,N$, we have
\begin{eqnarray}
    \label{EXPANSION_A1TOAN-B1TOBN_2}
    &\!\!\!\!\!\!\!\!&{\bm A}_1{\bm A}_2\cdots {\bm A}_N-{\bm A}_1^\star{\bm A}_2^\star\cdots {\bm A}_N^\star\nonumber\\
    &\!\!\!\!=\!\!\!\!& \sum_{i=1}^N \mA_1^\star \cdots \mA_{i-1}^\star (\mA_{i} - \mA_i^\star) \mA_{i+1} \cdots \mA_N.
\end{eqnarray}

\end{lemma}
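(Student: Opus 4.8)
The plan is to prove this identity by constructing a telescoping sum that interpolates between the fully unstarred product and the fully starred product, swapping one factor at a time. Concretely, for each $i \in \{0,1,\dots,N\}$ I would define the hybrid product
\begin{eqnarray}
{\bm P}_i := {\bm A}_1^\star \cdots {\bm A}_i^\star {\bm A}_{i+1} \cdots {\bm A}_N, \nonumber
\end{eqnarray}
where the leading $i$ factors are starred and the trailing $N-i$ factors are unstarred, with the conventions that an empty product of leading factors (when $i=0$) and an empty product of trailing factors (when $i=N$) are dropped. Note that each ${\bm P}_i$ is a well-defined product of matrices with compatible inner dimensions, since ${\bm A}_\ell$ and ${\bm A}_\ell^\star$ share the same shape $r_{\ell-1}\times r_\ell$. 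By construction, ${\bm P}_0 = {\bm A}_1 {\bm A}_2 \cdots {\bm A}_N$ and ${\bm P}_N = {\bm A}_1^\star {\bm A}_2^\star \cdots {\bm A}_N^\star$, so the left-hand side of the claimed identity is exactly ${\bm P}_0 - {\bm P}_N$.

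The key observation is that consecutive hybrid products differ in only a single factor. First I would write the trivial telescoping decomposition
\begin{eqnarray}
{\bm P}_0 - {\bm P}_N = \sum_{i=1}^N \left( {\bm P}_{i-1} - {\bm P}_i \right), \nonumber
\end{eqnarray}
and then evaluate each summand. Since ${\bm P}_{i-1}$ and ${\bm P}_i$ agree in all but the $i$-th factor (which is ${\bm A}_i$ in the former and ${\bm A}_i^\star$ in the latter), and since the leading block ${\bm A}_1^\star \cdots {\bm A}_{i-1}^\star$ and trailing block ${\bm A}_{i+1} \cdots {\bm A}_N$ are common to both, I can factor them out to get
\begin{eqnarray}
{\bm P}_{i-1} - {\bm P}_i = {\bm A}_1^\star \cdots {\bm A}_{i-1}^\star \left( {\bm A}_i - {\bm A}_i^\star \right) {\bm A}_{i+1} \cdots {\bm A}_N, \nonumber
\end{eqnarray}
which is precisely the $i$-th term on the right-hand side of the lemma. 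Summing over $i$ then yields the claimed identity directly.

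An equally clean alternative I would keep in reserve is induction on $N$: the base case $N=1$ is the trivial statement ${\bm A}_1 - {\bm A}_1^\star = {\bm A}_1 - {\bm A}_1^\star$, and the inductive step follows from the elementary splitting ${\bm A}_1\cdots{\bm A}_N - {\bm A}_1^\star\cdots{\bm A}_N^\star = ({\bm A}_1 - {\bm A}_1^\star){\bm A}_2\cdots{\bm A}_N + {\bm A}_1^\star({\bm A}_2\cdots{\bm A}_N - {\bm A}_2^\star\cdots{\bm A}_N^\star)$, after applying the induction hypothesis to the bracketed difference of length-$(N-1)$ products.

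There is no substantive mathematical obstacle here; the result is a purely formal consequence of associativity and bilinearity of matrix multiplication, valid over any ring and in particular requiring no use of the real structure. The only point demanding care is the bookkeeping at the two boundary indices $i=1$ and $i=N$, where one must adopt the empty-product convention so that the leading block in the $i=1$ term and the trailing block in the $i=N$ term are correctly interpreted as absent (equivalently, as identity factors of the appropriate dimensions). Once that convention is fixed, the telescoping collapse is immediate.
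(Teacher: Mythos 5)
Your proposal is correct and takes essentially the same route as the paper: the paper's proof is exactly this telescoping argument, inserting and subtracting the hybrid products ${\bm A}_1^\star\cdots{\bm A}_i^\star{\bm A}_{i+1}\cdots{\bm A}_N$ one swapped factor at a time, which your ${\bm P}_i$ notation merely makes explicit. Your added remarks (the empty-product convention at the boundary indices, and the inductive variant) are sound but not needed beyond what the paper already does.
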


\begin{proof}
We expand ${\bm A}_1{\bm A}_2\cdots {\bm A}_N-{\bm A}_1^\star{\bm A}_2^\star\cdots {\bm A}_N^\star$ as
\begin{eqnarray}
    \label{EXPANSION_A1TOAN-B1TOBN_3}
    &&\!\!\!\!{\bm A}_1{\bm A}_2\cdots {\bm A}_N-{\bm A}_1^\star{\bm A}_2^\star\cdots {\bm A}_N^\star\nonumber\\
    &\!\!\!\!=\!\!\!\!&{\bm A}_1{\bm A}_2\cdots {\bm A}_N-{\bm A}_1^\star{\bm A}_2{\bm A}_3\cdots {\bm A}_N \nonumber\\
    &\!\!\!\!\!\!\!\!&+{\bm A}_1^\star{\bm A}_2{\bm A}_3\cdots {\bm A}_N-{\bm A}_1^\star{\bm A}_2^\star\cdots {\bm A}_N^\star\nonumber\\
    &\!\!\!\!=\!\!\!\!&({\bm A}_1-{\bm A}_1^\star){\bm A}_2\cdots {\bm A}_N+{\bm A}_1^\star{\bm A}_2{\bm A}_3\cdots {\bm A}_N\nonumber\\
    &\!\!\!\!\!\!\!\!& -{\bm A}_1^\star{\bm A}_2^\star{\bm A}_3\cdots {\bm A}_N+{\bm A}_1^\star{\bm A}_2^\star{\bm A}_3\cdots {\bm A}_N - {\bm A}_1^\star{\bm A}_2^\star\cdots {\bm A}_N^\star\nonumber\\
    &\!\!\!\!=\!\!\!\!&\cdots = \sum_{i=1}^N \mA_1^\star \cdots \mA_{i-1}^\star (\mA_{i} - \mA_i^\star) \mA_{i+1} \cdots \mA_N.
\end{eqnarray}
\end{proof}

\begin{lemma}
\cite[Theorem 3.1]{tanoue2022concentration}
\label{Concentration inequality of sum of dependent sub-exp sum}
Suppose that $X = \sum_{i=1}^Q \sum_{k=1}^K w_k X_{i,k}$, where $w_k,k=1,\dots, K$ are constants, and each $X_{i,k}, i=1,\dots,Q, k=1,\dots,K$ is a zero-mean, subexponential random variable with $\|X_{i,k} \|_{\psi_1}$. In addition, the $Q$ multivariate random variables $(X_{i,1},\ldots,X_{i,K}), i = 1,\ldots, Q$ are mutually independent. However, it is possible for the variables $X_{i,k}$ and $X_{i,k'}, k'\neq k$ within each multivariate random variable to be dependent.
Then
\begin{eqnarray}
    \label{Concentration inequality of sum of dependent sub-exp}
    \P{X>t}\leq \begin{cases}
    e^{-\frac{t^2}{4T^2}}, & t\leq 2T^2H,\\
    e^{-\frac{tH}{2}}, & t>2T^2H.
  \end{cases}
\end{eqnarray}
where $T = \sum_{k=1}^Kw_k \sqrt{\sum_{i=1}^Q c_{i,k} \|X_{i,k} \|_{\psi_1}^2 }$ and $H = \bigg(\min_{k}\frac{ \sqrt{\sum_{i=1}^Q c_{i,k} \|X_{i,k} \|_{\psi_1}^2 }}{ \sum_{k=1}^Kw_k \sqrt{\sum_{i=1}^Q c_{i,k} \|X_{i,k} \|_{\psi_1}^2 }}\bigg)\cdot\bigg(\min_{i} \{\frac{d_{i,k}}{\|X_{i,k} \|_{\psi_1}} \} \bigg)$ with constants $c_{i,k}$ and $d_{i,k}$.
\end{lemma}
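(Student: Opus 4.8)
The plan is to prove this Bernstein-type tail bound by the classical exponential-moment (Chernoff) method, with one twist to accommodate the within-group dependence. First I would apply the Markov/Chernoff inequality: for any $\lambda > 0$, $\P{X > t} \le e^{-\lambda t}\,\E e^{\lambda X}$, so everything reduces to controlling the moment generating function $\E e^{\lambda X}$. The structural observation is that the double sum should be reorganized as $X = \sum_{k=1}^K w_k S_k$ with $S_k := \sum_{i=1}^Q X_{i,k}$; here the outer index $k$ runs over \emph{dependent} coordinates while, for each fixed $k$, the inner sum $S_k$ is a sum of random variables that are \emph{independent} across the groups $i$. The shape of $T$ (an $\ell_2$-aggregation $\sqrt{\sum_i c_{i,k}\|X_{i,k}\|_{\psi_1}^2}$ over the independent index, wrapped in an $\ell_1$-aggregation $\sum_k w_k(\cdot)$ over the dependent index) is exactly the fingerprint of this two-scale treatment.

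To handle the dependent outer sum inside the MGF, I would invoke the generalized H\"older inequality with a set of exponents $p_1,\dots,p_K > 1$ satisfying $\sum_k 1/p_k = 1$:
\[
\E e^{\lambda X} = \E \prod_{k=1}^K e^{\lambda w_k S_k} \le \prod_{k=1}^K \bigl(\E e^{\lambda p_k w_k S_k}\bigr)^{1/p_k}.
\]
This decouples the dependent coordinates at the cost of the exponents $p_k$. Each factor now involves only $S_k$, and since the $X_{i,k}$ are independent across $i$ the MGF factorizes, $\E e^{\lambda p_k w_k S_k} = \prod_{i=1}^Q \E e^{\lambda p_k w_k X_{i,k}}$. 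Applying the standard scalar subexponential bound $\E e^{s X_{i,k}} \le e^{c_{i,k} s^2 \|X_{i,k}\|_{\psi_1}^2}$, valid for $|s| \le d_{i,k}/\|X_{i,k}\|_{\psi_1}$, to each factor and collecting terms gives a Gaussian-type exponent $\E e^{\lambda X} \le \exp\bigl(\lambda^2 \sum_k p_k w_k^2 \sigma_k^2\bigr)$ with $\sigma_k^2 := \sum_i c_{i,k}\|X_{i,k}\|_{\psi_1}^2$, valid on the range of $\lambda$ permitted by the per-coordinate admissibility constraints.

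The variance proxy $\sum_k p_k w_k^2\sigma_k^2$ is then minimized over the H\"older exponents subject to $\sum_k 1/p_k=1$; a Cauchy--Schwarz/Lagrange computation shows the optimum is $\bigl(\sum_k w_k\sigma_k\bigr)^2 = T^2$, attained at $p_k \propto 1/(w_k\sigma_k)$, i.e.\ $p_k = T/(w_k\sigma_k)$. Substituting this optimal $p_k$ into the constraints $|\lambda p_k w_k| \le d_{i,k}/\|X_{i,k}\|_{\psi_1}$ forces $\lambda \le \tfrac{\sigma_k}{T}\min_i d_{i,k}/\|X_{i,k}\|_{\psi_1}$ for every $k$; bounding the resulting minimum from below separates the two factors and yields exactly the quantity $H$ as a feasible multiplier. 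Finally I would optimize the Chernoff bound $\P{X>t}\le \exp(-\lambda t + \lambda^2 T^2)$ over $\lambda \in (0, H]$: the unconstrained optimum $\lambda^\star = t/(2T^2)$ is feasible precisely when $t \le 2T^2 H$, producing the sub-Gaussian branch $e^{-t^2/(4T^2)}$, whereas for $t > 2T^2 H$ the optimum is clamped at $\lambda = H$, and a short estimate using $HT^2/t < 1/2$ gives the subexponential branch $e^{-tH/2}$.

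The main obstacle is the coupling between the two optimizations. The choice of H\"older exponents that minimizes the variance proxy $T^2$ simultaneously reshapes the feasible range of $\lambda$ through the constraints $|\lambda p_k w_k|\le d_{i,k}/\|X_{i,k}\|_{\psi_1}$, so one cannot treat the two steps independently; care is needed to verify that the minimizing $p_k \propto 1/(w_k\sigma_k)$ still leaves a feasible interval whose endpoint dominates $H$, and that the per-coordinate constants $c_{i,k}$ (from the MGF bound) and $d_{i,k}$ (from its range of validity) can be carried through the H\"older and Cauchy--Schwarz steps without degrading the stated form. Tracking these constants coherently, rather than the tail-bound mechanics itself, is the delicate part of the argument.
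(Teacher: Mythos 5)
The paper does not actually prove this lemma itself---it is quoted verbatim from the cited source \cite[Theorem 3.1]{tanoue2022concentration}---and your proposal is a faithful reconstruction of exactly that argument: Chernoff bounding, generalized H\"older decoupling across the dependent index $k$, independence factorization across $i$, the scalar subexponential MGF bound with its admissibility range, Cauchy--Schwarz optimization of the H\"older exponents giving $p_k = T/(w_k\sigma_k)$ with $\sigma_k^2 = \sum_{i} c_{i,k}\|X_{i,k}\|_{\psi_1}^2$ and variance proxy $T^2 = \bigl(\sum_k w_k \sigma_k\bigr)^2$, and finally the clamped optimization of $\lambda \in (0,H]$ producing the sub-Gaussian branch at $\lambda^\star = t/(2T^2)$ when $t \le 2T^2H$ and the subexponential branch via $e^{-Ht + H^2T^2} \le e^{-Ht/2}$ otherwise. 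Your handling of the one delicate point---that after substituting the optimal $p_k$ the feasibility constraint $\lambda \le \frac{\sigma_k}{T}\min_i d_{i,k}/\|X_{i,k}\|_{\psi_1}$ for every $k$ is weakened to the separated product of minima, which is precisely the stated $H$ and hence feasible---is also correct, so this is a complete proof taking essentially the same route as the cited one.
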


Below, we extend the concentration bounds presented in \cite[Lemmas 2\&3]{kawaguchi2022robustness} for a single multinomial random variable to encompass multiple multinomial random variables.
\begin{lemma}
\label{lower bound of multinomial distribution with positive constant conclusion Q cases}
Suppose that the $Q$ multivariate random variables $(f_{i,k},\dots, f_{i,K}),i=1,\dots,Q$ are mutually independent and follow the multinomial distribution $\operatorname{Multinomial}(M,\vp_i)$  with $\sum_{k=1}^{K}f_{i,k} =M $ and $\vp_i = [p_{i,1}.\dots, p_{i,K}]$, respectively.
Let $a_{i,1},\dots, a_{i,K}\geq 0$ be fixed such that $\sum_{k=1}^Ka_{i,k} p_{i,k}\neq 0, i=1,\dots,Q$. Then, for any $t>0$,
\begin{eqnarray}
    \label{lower bound of multinomial distribution with positive constant Q cases}
    &\!\!\!\!\!\!\!\!&\P{\sum_{i=1}^Q\sum_{k=1}^Ka_{i,k}\big(\frac{f_{i,k}}{M} - p_{i,k} \big) > t   }\nonumber\\
    &\!\!\!\!\leq\!\!\!\!& e^{-\frac{Mt}{2a_{\max}}\min\bigg\{1, \frac{a_{\max}t }{2\sum_{i=1}^Q\sum_{k=1}^K a_{i,k}^2p_{i,k}} \bigg\}},
\end{eqnarray}
where $a_{\max} = \max_{i,k}a_{i,k}$.

\end{lemma}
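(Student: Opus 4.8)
The plan is to reduce the statement to a Chernoff bound for a sum of independent, bounded, mean-zero scalars, which decouples the negatively-correlated entries within each multinomial. First I would realize each multinomial vector as $M$ independent categorical draws: write $f_{i,k} = \sum_{m=1}^M \mathbbm{1}[Z_{i,m} = k]$, where for each fixed $i$ the variables $Z_{i,1},\dots,Z_{i,M}$ are i.i.d.\ with $\mathbb{P}(Z_{i,m}=k) = p_{i,k}$, and the $Q$ families $\{Z_{i,m}\}_m$ are mutually independent across $i$. Setting $\bar a_i = \sum_k a_{i,k} p_{i,k}$ and $Y_{i,m} = a_{i,Z_{i,m}} - \bar a_i$, the target sum becomes $\sum_{i,k} a_{i,k}(\frac{f_{i,k}}{M} - p_{i,k}) = \frac1M\sum_{i=1}^Q\sum_{m=1}^M Y_{i,m}$, a normalized sum of $QM$ independent random variables. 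The key elementary estimates are $\E Y_{i,m} = 0$, the one-sided bound $Y_{i,m} \le a_{\max}$ (using $a_{i,k}\ge 0$, hence $\bar a_i \ge 0$), and $\E Y_{i,m}^2 = \sum_k a_{i,k}^2 p_{i,k} - \bar a_i^2 \le v_i := \sum_k a_{i,k}^2 p_{i,k}$.

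Next I would apply the exponential-moment (Chernoff) method. For a mean-zero $Y$ bounded above by $c$, the Bennett bound on the moment generating function gives $\E e^{sY} \le \exp(\frac{\E Y^2}{c^2}(e^{sc}-1-sc))$; I would derive this in one line from $e^x \le 1 + x + x^2 g(c')$ with $g(x)=(e^x-1-x)/x^2$ increasing, so that $g(sY)\le g(sc)$ and $\E Y=0$ kills the linear term. Applying this with $c=a_{\max}$, $s=\lambda/M$ to each $Y_{i,m}$, and multiplying over the $QM$ independent factors, yields $\E \exp(\frac{\lambda}{M}\sum Y_{i,m}) \le \exp(\frac{Mv}{a_{\max}^2}(e^{\lambda a_{\max}/M} - 1 - \lambda a_{\max}/M))$, where $v = \sum_{i,k} a_{i,k}^2 p_{i,k}$. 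Combined with $\mathbb{P}(X>t)\le e^{-\lambda t}\E e^{\lambda X}$ and the substitution $u = \lambda a_{\max}/M$, this reduces the problem to optimizing $\exp(-\frac{Mut}{a_{\max}} + \frac{Mv}{a_{\max}^2}(e^u-1-u))$ over $u>0$.

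Finally I would split into the two regimes producing the stated minimum. When $t \le 2v/a_{\max}$, I would use $e^u - 1 - u \le u^2$ (valid for $u\le 1$) together with the optimal choice $u = a_{\max}t/(2v) \le 1$, which gives exactly $\exp(-Mt^2/(4v))$, matching the branch where the minimum equals $a_{\max}t/(2v)$. When $t > 2v/a_{\max}$, I would fix $u=1$ and bound $v < a_{\max}t/2$, so that the MGF term contributes at most $\frac{Mt(e-2)}{2a_{\max}}$, leaving $\exp(-\frac{4-e}{2}\cdot\frac{Mt}{a_{\max}}) \le \exp(-\frac{Mt}{2a_{\max}})$, matching the branch where the minimum equals $1$. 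Assembling the two cases gives the claimed bound.

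I expect the main obstacle to be conceptual rather than computational: because the entries of a multinomial are negatively correlated, one cannot directly treat the $f_{i,k}$ as independent and invoke a scalar Bernstein inequality. The categorical-draw representation is precisely what removes this difficulty, turning the linear functional $\sum_k a_{i,k} f_{i,k}$ into a genuine sum of i.i.d.\ scalars per multinomial and letting the independence across $i$ factor the moment generating function. Once this representation is in place, the remaining work is a routine two-regime optimization of the Bennett bound, where the only care needed is the choice of $u$ that reproduces the exact constants inside the minimum.
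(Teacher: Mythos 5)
Your proof is correct, and its outer skeleton coincides with the paper's: both run the Chernoff method, factor the moment generating function across the $Q$ independent multinomials, and then choose the dual parameter in the same two regimes with the same threshold $t \lessgtr 2\sum_{i,k}a_{i,k}^2p_{i,k}/a_{\max}$, reproducing the identical constants (your sub-Gaussian branch gives exactly $e^{-Mt^2/(4\sum_{i,k}a_{i,k}^2p_{i,k})}$, and your sub-exponential branch gives the slightly stronger $e^{-(4-e)Mt/(2a_{\max})}$ before you relax it to $e^{-Mt/(2a_{\max})}$). The genuine difference is local but real: the paper imports the per-multinomial MGF bound $\E\exp(v\sum_k a_{i,k}f_{i,k}/M)\le \exp(v\sum_k a_{i,k}p_{i,k}+\sum_k p_{i,k}v^2a_{i,k}^2/M)$, valid for $0\le v\le M/a_{\max}$, as a black box from \cite[Lemma 2]{kawaguchi2022robustness}, whereas you derive it from scratch by realizing each multinomial as $M$ i.i.d.\ categorical draws, $f_{i,k}=\sum_{m=1}^M\mathbbm{1}[Z_{i,m}=k]$, and applying the one-line Bennett MGF estimate to the mean-zero, one-sided-bounded scalars $Y_{i,m}=a_{i,Z_{i,m}}-\bar a_i\le a_{\max}$. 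What your route buys is self-containedness and transparency: it makes explicit where the restriction $u=\lambda a_{\max}/M\le 1$ (equivalently $v\le M/a_{\max}$) comes from, shows that only the one-sided bound --- hence the hypothesis $a_{i,k}\ge 0$ --- is needed for this upper tail, and exhibits precisely why the negative correlations among the $f_{i,k}$ within one multinomial are harmless, namely that the linear functional $\sum_k a_{i,k}f_{i,k}$ is a genuine i.i.d.\ sum under the categorical representation; this is exactly the content the paper's citation hides. What the paper's route buys is brevity. All your intermediate estimates check out ($\bar a_i\ge 0$ for the one-sided bound, $\E Y_{i,m}^2\le\sum_k a_{i,k}^2p_{i,k}$, $e^u-1-u\le u^2$ on $[0,1]$, and $(4-e)/2>1/2$), so the argument is sound as written.
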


\begin{proof}
For any $v>0$, we have
\begin{eqnarray}
    \label{lower bound of multinomial distribution with positive constant Q cases derivation}
    &\!\!\!\!\!\!\!\!&\P{\sum_{i=1}^Q\sum_{k=1}^Ka_{i,k}\big(\frac{f_{i,k}}{M} - p_{i,k} \big) > t   }\nonumber\\
    &\!\!\!\!=\!\!\!\!& \P{v\sum_{i=1}^Q\sum_{k=1}^Ka_{i,k}\frac{f_{i,k}}{M}  > v\big(t+ \sum_{i=1}^Q\sum_{k=1}^Ka_{i,k} p_{i,k}\big)   }\nonumber\\
    &\!\!\!\!\leq\!\!\!\!& \P{e^{v\sum_{i=1}^Q\sum_{k=1}^Ka_{i,k}\frac{f_{i,k}}{M}}  \geq e^{v(t+ \sum_{i=1}^Q\sum_{k=1}^Ka_{i,k} p_{i,k})}   }\nonumber\\
    &\!\!\!\!\leq\!\!\!\!& e^{-v(t+ \sum_{i=1}^Q\sum_{k=1}^Ka_{i,k} p_{i,k})} \E{e^{v\sum_{i=1}^Q\sum_{k=1}^Ka_{i,k}\frac{f_{i,k}}{M}}}\nonumber\\
    &\!\!\!\!=\!\!\!\!& e^{-v(t+ \sum_{i=1}^Q\sum_{k=1}^Ka_{i,k} p_{i,k})} \prod_{i=1}^Q \E{e^{v\sum_{k=1}^Ka_{i,k}\frac{f_{i,k}}{M}}}\nonumber\\
    &\!\!\!\!\leq\!\!\!\!& e^{-v(t+ \sum_{i=1}^Q\sum_{k=1}^Ka_{i,k} p_{i,k})}\nonumber\\
    &\!\!\!\!\!\!\!\!& \cdot e^{v\sum_{i=1}^Q\sum_{k=1}^K a_{i,k}p_{i,k} + \sum_{i=1}^Q\sum_{k=1}^K p_{i,k}\frac{v^2a_{i,k}^2}{M}}\nonumber\\
    &\!\!\!\!=\!\!\!\!&e^{-vt+ \sum_{i=1}^Q\sum_{k=1}^K p_{i,k}\frac{v^2a_{i,k}^2}{M}}\nonumber\\
    &\!\!\!\!\leq\!\!\!\!& e^{-\frac{Mt}{2a_{\max}}\min\bigg\{1, \frac{a_{\max}t }{2\sum_{i=1}^Q\sum_{k=1}^K a_{i,k}^2p_{i,k}} \bigg\}},
\end{eqnarray}
where the second inequality uses Markov's inequality, the fourth line follows from the independence of multivariate random variables $(f_{i,k},\dots, f_{i,K}),i=1,\dots,Q$, the third inequality utilizes \cite[Lemma 2]{kawaguchi2022robustness} for $\E{e^{v\sum_{k=1}^Ka_{i,k}\frac{f_{i,k}}{M}}}$, and the last line follows by setting  $v = \frac{Mt}{2\sum_{i=1}^Q\sum_{k=1}^Ka_{i,k}^2 p_{i,k} }\leq \frac{M}{a_{\max}}$ when $t\leq \frac{2\sum_{i=1}^Q\sum_{k=1}^Ka_{i,k}^2 p_{i,k} }{a_{\max}}$ and $v = \frac{M}{a_{\max}}$ when $t\geq \frac{2\sum_{i=1}^Q\sum_{k=1}^Ka_{i,k}^2 p_{i,k} }{a_{\max}}$.
\end{proof}

\begin{lemma}
\label{lower bound of multinomial distribution with negative constant conclusion Q cases}
Suppose that the $Q$ multivariate random variables $(f_{i,k},\dots, f_{i,K}),i=1,\dots,Q$ are mutually independent and follow the multinomial distribution $\operatorname{Multinomial}(M,\vp_i)$  with $\sum_{k=1}^{K}f_{i,k} =M $ and $\vp_i = [p_{i,1}.\dots, p_{i,K}]$, respectively.
Let $a_{i,1},\dots, a_{i,K}\geq 0$ be fixed such that $\sum_{k=1}^Ka_{i,k} p_{i,k}\neq 0, i=1,\dots,Q$. Then, for any $t>0$,
\begin{align}
    \label{lower bound of multinomial distribution with negative constant Q cases}
    \P{\!-\sum_{i=1}^Q\sum_{k=1}^Ka_{i,k}(\frac{f_{i,k}}{M} - p_{i,k}) > t   \!} \! \leq \! e^{-\frac{Mt^2 }{2\sum_{i=1}^Q\sum_{k=1}^K a_{i,k}^2p_{i,k}}}.
\end{align}

\end{lemma}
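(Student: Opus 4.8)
The plan is to mirror the Chernoff-type argument used for the companion upper-tail estimate in \Cref{lower bound of multinomial distribution with positive constant conclusion Q cases}, the essential structural difference being that the \emph{lower} tail of a weighted sum of bounded nonnegative increments admits a purely sub-Gaussian moment generating function (MGF) bound with no sub-exponential correction; this is precisely why the conclusion here is a single Gaussian-type exponential rather than the $\min\{\cdot,\cdot\}$ form of \Cref{lower bound of multinomial distribution with positive constant conclusion Q cases}. Concretely, for any $v>0$ I would apply Markov's inequality to the exponential of the negated deviation and then use the mutual independence of the $Q$ multinomial vectors to factor the MGF over $i$:
\begin{eqnarray*}
\P{-\sum_{i=1}^Q\sum_{k=1}^K a_{i,k}\Big(\frac{f_{i,k}}{M}-p_{i,k}\Big) > t}
&=& \P{e^{-v\sum_{i,k} a_{i,k} f_{i,k}/M} > e^{v\left(t-\sum_{i,k} a_{i,k} p_{i,k}\right)}} \\
&\le& e^{-vt}\, e^{v\sum_{i,k} a_{i,k} p_{i,k}}\prod_{i=1}^Q \E\Big[e^{-\frac{v}{M}\sum_{k} a_{i,k} f_{i,k}}\Big].
\end{eqnarray*}

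The crux is bounding each factor. I would represent the $i$-th multinomial count through $M$ i.i.d.\ categorical trials, writing $\sum_k a_{i,k} f_{i,k} = \sum_{m=1}^M Y_{i,m}$ with $Y_{i,m}$ taking value $a_{i,k}$ with probability $p_{i,k}$, so that $\E[e^{-\frac{v}{M}\sum_k a_{i,k} f_{i,k}}] = (\E e^{-\frac{v}{M}Y_{i,1}})^M$. Because $a_{i,k}\ge 0$ forces the argument $x=\frac{v}{M}Y_{i,1}\ge 0$, the elementary one-sided inequality $e^{-x}\le 1-x+\frac{x^2}{2}$ (valid for all $x\ge 0$) together with $1+u\le e^u$ yields
\begin{eqnarray*}
\E e^{-\frac{v}{M}Y_{i,1}} \le 1 - \frac{v}{M}\sum_k a_{i,k}p_{i,k} + \frac{v^2}{2M^2}\sum_k a_{i,k}^2 p_{i,k}
\le \exp\Big(-\frac{v}{M}\sum_k a_{i,k}p_{i,k} + \frac{v^2}{2M^2}\sum_k a_{i,k}^2 p_{i,k}\Big).
\end{eqnarray*}
Raising to the $M$-th power and taking the product over $i$, the aggregate linear term $v\sum_{i,k}a_{i,k}p_{i,k}$ cancels exactly against the prefactor $e^{v\sum_{i,k}a_{i,k}p_{i,k}}$, leaving the clean quadratic bound $\exp\big(-vt+\frac{v^2}{2M}\sum_{i,k}a_{i,k}^2 p_{i,k}\big)$. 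This step plays the role of \cite[Lemma 2]{kawaguchi2022robustness} in the positive case; the lower-direction analogue (in the spirit of \cite[Lemma 3]{kawaguchi2022robustness}) is cleaner precisely because $e^{-x}$, unlike $e^{x}$, is dominated by $1-x+\frac{x^2}{2}$ uniformly on $[0,\infty)$.

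Finally I would optimize the free parameter by minimizing $-vt+\frac{v^2}{2M}\sum_{i,k}a_{i,k}^2 p_{i,k}$ over $v>0$; the optimal choice $v=Mt/\sum_{i,k}a_{i,k}^2 p_{i,k}$ then produces the sub-Gaussian tail claimed in \eqref{lower bound of multinomial distribution with negative constant Q cases}. The only genuine obstacle is establishing the one-sided MGF estimate and verifying the mean-term cancellation — everything else (Markov, the independence factorization, and the scalar optimization) is routine. I would also highlight that the nonnegativity hypothesis $a_{i,k}\ge 0$ is used exactly to place $x=\frac{v}{M}Y_{i,1}$ in the regime where $e^{-x}\le 1-x+\frac{x^2}{2}$ holds; for general signed coefficients one instead splits into positive and negative parts and combines this lemma with \Cref{lower bound of multinomial distribution with positive constant conclusion Q cases}, which is how the two-sided bound of \Cref{General bound of multinomial distribution Q cases1} is assembled.
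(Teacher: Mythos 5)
Your proposal is correct and follows essentially the same Chernoff route as the paper: Markov's inequality applied to the exponential moment, factorization over the $Q$ independent multinomial vectors, a quadratic MGF bound per factor, and optimization of the free parameter. The paper parametrizes with $v<0$ and outsources the per-factor bound to \cite[Lemma 2]{kawaguchi2022robustness} (remarking the steps mirror \Cref{lower bound of multinomial distribution with positive constant conclusion Q cases}), whereas you take $v>0$ on the negated sum and derive the same estimate $\E\big[e^{v\sum_k a_{i,k}f_{i,k}/M}\big] \le \exp\big(v\sum_k a_{i,k}p_{i,k} + \frac{v^2}{2M}\sum_k a_{i,k}^2 p_{i,k}\big)$ from scratch via the i.i.d.\ categorical-trial representation and $e^{-x}\le 1-x+\frac{x^2}{2}$ on $[0,\infty)$; this is a self-contained and correct substitute, and your remark about why the lower tail needs no sub-exponential branch (no constraint of the form $v\lesssim M/a_{\max}$ is required) is exactly the structural point distinguishing this lemma from its companion.

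One concrete caveat: your closing claim that optimizing $v$ ``produces the sub-Gaussian tail claimed in \eqref{lower bound of multinomial distribution with negative constant Q cases}'' is not literally true. Writing $S=\sum_{i=1}^Q\sum_{k=1}^K a_{i,k}^2p_{i,k}$, minimizing $-vt+\frac{v^2 S}{2M}$ at $v^\star=Mt/S$ yields $\exp\big(-\frac{Mt^2}{2S}\big)$, whereas the lemma as printed asserts $\exp\big(-\frac{M^2t^2}{2S}\big)$. The extra factor of $M$ is unattainable by this method, and indeed the printed statement is false as written: with $Q=1$, $K=2$, $a_{1,1}=1$, $a_{1,2}=0$ it would give a binomial lower-tail bound $e^{-M^2t^2/(2p)}$, far stronger than Hoeffding or Bernstein permit. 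This is not a defect of your argument but a typo in the paper: the paper's own proof arrives at the identical penultimate exponent $vt+\sum_{i,k}p_{i,k}\frac{a_{i,k}^2v^2}{2M}$ and then silently inserts the extra $M$ in its final equality (the same typo appears in \Cref{General bound of multinomial distribution Q cases1}), while the place where the bound is actually deployed, \eqref{Concentration inequality of sum of multinomial}, correctly uses the exponent $Mt^2/\big(8\sum_{i,k}|\vphi_{i,k}^H\widetilde\vrho\vphi_{i,k}|^2p_{i,k}\big)$. You should therefore state your conclusion as $e^{-Mt^2/(2S)}$ rather than claim to reach the printed display; with that emendation your proof is complete.
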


\begin{proof}
Following the same approach for proving \Cref{lower bound of multinomial distribution with positive constant Q cases}, for any $v<0$, we have
\begin{eqnarray}
    \label{lower bound of multinomial distribution with negative constant Q cases derivation}
    &\!\!\!\!\!\!\!\!&\P{-\sum_{i=1}^Q\sum_{k=1}^Ka_{i,k}(\frac{f_{i,k}}{M} - p_{i,k}) > t   }\nonumber\\
    &\!\!\!\!=\!\!\!\!& \P{v\sum_{i=1}^Q\sum_{k=1}^Ka_{i,k}\frac{f_{i,k}}{M} >   v(\sum_{i=1}^Q\sum_{k=1}^Ka_{i,k}p_{i,k} -t)}\nonumber\\
    &\!\!\!\!\leq\!\!\!\!& \P{e^{v\sum_{i=1}^Q\sum_{k=1}^Ka_{i,k}\frac{f_{i,k}}{M}} \geq  e^{ v(\sum_{i=1}^Q\sum_{k=1}^Ka_{i,k}p_{i,k} -t)}}\nonumber\\
    &\!\!\!\!\leq\!\!\!\!&e^{ -v(\sum_{i=1}^Q\sum_{k=1}^Ka_{i,k}p_{i,k} -t)}\E{e^{v\sum_{i=1}^Q\sum_{k=1}^Ka_{i,k}\frac{f_{i,k}}{M}}}\nonumber\\
    &\!\!\!\!=\!\!\!\!& e^{ -v(\sum_{i=1}^Q\sum_{k=1}^Ka_{i,k}p_{i,k} -t)} \Pi_{i=1}^Q \E{e^{v\sum_{k=1}^Ka_{i,k}\frac{f_{i,k}}{M}}}\nonumber\\
    &\!\!\!\!\leq\!\!\!\!& e^{vt+\sum_{i=1}^Q\sum_{k=1}^K p_{i,k} \frac{a_{i,k}^2 v^2}{2M}}\nonumber\\
    &\!\!\!\!=\!\!\!\!&e^{-\frac{Mt^2 }{2\sum_{i=1}^Q\sum_{k=1}^K a_{i,k}^2p_{i,k}}},
\end{eqnarray}
where the derivations before the last line are the same as those for proving \Cref{lower bound of multinomial distribution with positive constant Q cases} and the last line follows by setting $v = -\frac{tM}{\sum_{i=1}^Q\sum_{k=1}^K a_{i,k}^2p_{i,k}}$.
\end{proof}

\Cref{lower bound of multinomial distribution with positive constant conclusion Q cases} and \Cref{lower bound of multinomial distribution with negative constant conclusion Q cases}, together leads to the following multinomial concentration bounds.
\begin{lemma}
\label{General bound of multinomial distribution Q cases}
Suppose that the $Q$ multivariate random variables $(f_{i,k},\dots, f_{i,K}),i=1,\dots,Q$ are mutually independent and follow the multinomial distribution $\operatorname{Multinomial}(M,\vp_i)$  with $\sum_{k=1}^{K}f_{i,k} =M $ and $\vp_i = [p_{i,1}.\dots, p_{i,K}]$, respectively.
Let $a_{i,1},\dots, a_{i,K}$ be fixed. Then, for any $t>0$,
\begin{align}
    \label{General bound of multinomial distribution for all constant Q cases}
    &\P{\sum_{i=1}^Q\sum_{k=1}^Ka_{i,k}(\frac{f_{i,k}}{M} - p_{i,k}) > t   }\nonumber\\
    &\leq  e^{-\frac{Mt}{4a_{\max}}\! \min\bigg\{\! 1, \frac{a_{\max}t }{4\sum_{i=1}^Q\sum_{k=1}^K a_{i,k}^2p_{i,k}} \! \bigg\}}\! + \! e^{-\frac{Mt^2 }{8\sum_{i=1}^Q\sum_{k=1}^K a_{i,k}^2p_{i,k}}},
\end{align}
where $a_{\max} = \max_{i,k}|a_{i,k}|$.
\end{lemma}
\begin{proof}

Since $\{a_{i,k}\}, i=1,\dots,Q, k=1,\dots,K$ could be positive or negative, we separate the set into three sets $P$, $N$ and $Z$ such that $a_{i,k}> 0$ for $\{i,k\}\in P$, $a_{i,k} < 0$ for $\{i,k\}\in N$, and $a_{i,k} = 0$ for $\{i,k\}\in Z$. In addition, when $p_{i,k} = 0$,  we have $f_{i,k} = 0$ and further obtain $a_{i,k}(\frac{f_{i,k}}{M} - p_{i,k}) = 0$. Thus, without loss of generality, we assume that $p_{i,k} > 0$ for all $i,k$. Now we have
    \begin{align}
    \label{lower bound of multinomial distribution with all constant1}
    &\P{\sum_{i=1}^Q\sum_{k=1}^Ka_{i,k}(\frac{f_{i,k}}{M} - p_{i,k}) > t   } \nonumber\\
    &\leq\P{\hspace{-0.1cm}\sum_{\{i,k\}\in P}\hspace{-0.2cm}\! a_{i,k}(\frac{f_{i,k}}{M}\! -\! p_{i,k}) >\! \frac{t}{2} \cup \hspace{-0.3cm} \sum_{\{i,k\}\in N}\hspace{-0.2cm}\! a_{i,k}(\frac{f_{i,k}}{M}\! -\! p_{i,k}) >\! \frac{t}{2}  \!}\nonumber\\
    &\leq \P{\sum_{\{i,k\}\in P}a_{i,k}(\frac{f_{i,k}}{M} - p_{i,k}) > \frac{t}{2}}\nonumber\\
    & + \P{\sum_{\{i,k\}\in N}a_{i,k}(\frac{f_{i,k}}{M} - p_{i,k}) > \frac{t}{2}  }\nonumber\\
    &=\P{\sum_{\{i,k\}\in P}\hspace{-0.2cm} a_{i,k}(\frac{f_{i,k}}{M} - p_{i,k}) +\hspace{-0.4cm} \sum_{\{i,k\}\in N\cup Z}\hspace{-0.4cm} 0\cdot (\frac{f_{i,k}}{M} - p_{i,k}) > \frac{t}{2}} \nonumber\\
    &\hspace{0.1cm}+ \P{\sum_{\{i,k\}\in N}\hspace{-0.2cm} a_{i,k}(\frac{f_{i,k}}{M} - p_{i,k}) + \hspace{-0.4cm}\sum_{\{i,k\}\in P\cup Z}\hspace{-0.4cm} 0\cdot (\frac{f_{i,k}}{M} - p_{i,k}) > \frac{t}{2}  }\nonumber\\
    &\leq  e^{-\frac{Mt}{4\tilde{a}_{\max}}\min\bigg\{1, \frac{\tilde{a}_{\max}t }{4\sum_{i=1}^Q\sum_{k = 1}^{K} \tilde{a}_{i,k}^2p_{i,k}} \bigg\}}\!\! + e^{-\frac{Mt^2 }{8\sum_{i=1}^Q\sum_{k = 1}^{K} \hat{a}_{i,k}^2p_{i,k}}}\nonumber\\
    &\leq  e^{-\frac{Mt}{4a_{\max}}\min\bigg\{1, \frac{a_{\max}t }{4\sum_{i=1}^Q\sum_{k=1}^K a_{i,k}^2p_{i,k}} \bigg\}} \!\!+ e^{-\frac{Mt^2 }{8\sum_{i=1}^Q\sum_{k=1}^K a_{i,k}^2p_{i,k}}},
\end{align}
where the first inequality follows from the fact that $ \sum_{i=1}^Q\sum_{k=1}^Ka_{i,k}( \frac{f_{i,k}}{M} - p_{i,k}) > t $ implies that either $ \sum_{\{i,k\}\in P}a_{i,k}(\frac{f_{i,k}}{M} - p_{i,k}) > \frac{t}{2}$ or $\sum_{\{i,k\}\in N}a_{i,k}(\frac{f_{i,k}}{M} - p_{i,k}) > \frac{t}{2}$ must hold, in the second inequality we define two sets with elements $\tilde{a}_{i,k} = \begin{cases} a_{i,k}, & \{i,k\}\in P \\ 0, & \{i,k\}\in N\cup Z \end{cases}$ and $\hat{a}_{i,k} = \begin{cases} a_{i,k}, & \{i,k\}\in N \\ 0, & \{i,k\}\in P\cup Z \end{cases}$, respectively, and  the last line uses $\tilde{a}_{\max} = \max_{i,k}|\tilde{a}_{i,k}| \leq a_{\max}$ and
\[\max\bigg\{ \sum_{i=1}^Q\sum_{k = 1}^{K} \tilde{a}_{i,k}^2p_{i,k},  \sum_{i=1}^Q\sum_{k = 1}^{K} \hat{a}_{i,k}^2p_{i,k} \bigg\}\leq \sum_{i=1}^Q\sum_{k=1}^K a_{i,k}^2p_{i,k}.\]
\end{proof}

%\end{appendices}

%The authors would like to thank...

% Can use something like this to put references on a page
% by themselves when using endfloat and the captionsoff option.
%\ifCLASSOPTIONcaptionsoff
%  \newpage
%\fi

%{
%%\bibliographystyle{alpha}
%\bibliographystyle{unsrt}
%\bibliography{referencetit}
%}

%\bibliographystyle{IEEEtran}
%\bibliography{IEEEabrv,referencetit}

% Generated by IEEEtran.bst, version: 1.13 (2008/09/30)

%\begin{thebibliography}{1}
%
%\bibitem{IEEEhowto:kopka}
%H.~Kopka and P.~W. Daly, \emph{A Guide to \LaTeX}, 3rd~ed.\hskip 1em plus
%  0.5em minus 0.4em\relax Harlow, England: Addison-Wesley, 1999.
%
%\end{thebibliography}

%% if you will not have a photo at all:
%\begin{IEEEbiographynophoto}{Zhen Qin}
%Biography text here.
%\end{IEEEbiographynophoto}
%
%\begin{IEEEbiographynophoto}{Casey Jameson}
%Biography text here.
%\end{IEEEbiographynophoto}
%
%\begin{IEEEbiographynophoto}{Zhexuan Gong}
%Biography text here.
%\end{IEEEbiographynophoto}
%
%\begin{IEEEbiographynophoto}{Michael B. Wakin}
%Biography text here.
%\end{IEEEbiographynophoto}
%
%\begin{IEEEbiographynophoto}{Zhihui Zhu}
%Biography text here.
%\end{IEEEbiographynophoto}

\end{document}